\def\BibTeX{{\rm B\kern-.05em{\sc i\kern-.025em b}\kern-.08em
    T\kern-.1667em\lower.7ex\hbox{E}\kern-.125emX}}
\newtheorem{thm}{Theorem}
\newtheorem{lem}[thm]{Lemma}
\newtheorem{defn}[thm]{Definition}
\newtheorem{cor}[thm]{Corollary}
\newtheorem{exmp}[thm]{Example}
\newenvironment{subtheorem}[1]{%
  \def\subtheoremcounter{#1}%
  \refstepcounter{#1}%
  \protected@edef\theparentnumber{\csname the#1\endcsname}%
  \setcounter{parentnumber}{\value{#1}}%
  \setcounter{#1}{0}%
  \expandafter\def\csname the#1\endcsname{\theparentnumber\alph{#1}}%
  \ignorespaces
}{%
  \setcounter{\subtheoremcounter}{\value{parentnumber}}%
  \ignorespacesafterend
}
\newcounter{parentnumber}
\newtheorem{assumption}{Assumption}
\DeclarePairedDelimiter{\norm}{\lVert}{\rVert}
\DeclareMathOperator{\Null}{\text{Null}}
\DeclareMathOperator{\rank}{\text{rank}}
\DeclareMathOperator{\Dim}{\text{dim}}
\DeclarePairedDelimiter{\Span}{\text{span}\{}{\}}
\DeclarePairedDelimiter{\diag}{\text{diag}\{}{\}}
\DeclareMathOperator{\allzero}{\mathbf{0}}
\DeclareMathOperator{\allone}{\mathbf{1}}
\newcommand{\ECF}[3]{\ensuremath{f_{#1}(s_{#1}(#2),#3)}}
\newcommand{\CF}[2]{\ensuremath{F_c(S(#1),#2)}}
\newcommand{\EF}[2]{\ensuremath{F_1(S'(#1),#2)}}
\begin{document}
\title{Clock Rigidity and Joint Position-Clock Estimation in Ultra-Wideband Sensor Networks}
\author{Ruixin Wen, Eric Schoof, and Airlie Chapman
\thanks{This work is supported by the University of Melbourne through Melbourne Research Scholarship.}
\thanks{R. Wen is with the Department of Mechanical Engineering, The University of Melbourne, Victoria 3010, Australia (e-mail: ruixinw1@student.unimelb.edu.au). }
\thanks{A. Chapman is with the Department of Mechanical Engineering, The University of Melbourne, Victoria 3010, Australia (e-mail: airlie.chapman@unimelb.edu.au).}
\thanks{E. Schoof is with the Department of Electrical and Electronic Engineering, The University of Melbourne, Victoria 3010, Australia (e-mail: eschoof@unimelb.edu.au).}}

\maketitle

\begin{abstract}
Joint position and clock estimation is crucial in many wireless sensor network applications, especially in distance-based estimation with time-of-arrival (TOA) measurement. In this work, we consider a TOA-based ultra-wideband (UWB) sensor network, propose a novel clock rigidity theory and investigate the relation between the network graph properties and the feasibility of clock estimation with TOA timestamp measurements. It is shown that a clock framework can be uniquely determined up to a translation of clock offset and a scaling of all clock parameters if and only if it is infinitesimally clock rigid. We further prove that a clock framework is infinitesimally clock rigid if its underlying graph is generically bearing rigid in $\mathbb{R}^2$ with at least one redundant edge. Combined with distance rigidity, clock rigidity provides a graphical approach for analyzing the joint position and clock problem. It is shown that a position-clock framework can be uniquely determined up to some trivial variations corresponding to both position and clock if and only if it is infinitesimally joint rigid. Simulation results are presented to demonstrate the clock estimation and joint position-clock estimation.
\end{abstract}

\begin{IEEEkeywords}
Graph rigidity, joint position and clock estimation, network localization, wireless sensor network.
\end{IEEEkeywords}

\section{Introduction}
\label{sec:introduction}

\IEEEPARstart{P}{osition} estimation of multi-agent systems, also referred to as network localization, is a critical aspect for a variety of robotic applications such as vehicle tracking and industrial process monitoring \cite{akyildiz2002survey}. The Global Positioning System (GPS) is widely used for robot position estimation, but it lacks precision and can fail entirely in a GNSS-constrained environment, such as inside buildings and underground locations. With the development of low-cost, low-power and multi-functional sensors, many research works focus on position estimation in wireless sensor networks, in which specialized sensors are mounted on robots and positions are estimated by using knowledge of the absolute positions of a limited number of sensors and inter-sensor measurements such as distance and bearing \cite{mao2007wireless}. 

A related problem to multi-agent position estimation is whether a given static sensor network's position information can be uniquely determined up to some trivial motions, e.g., translation, rotation and scaling. Graph theory, and in particular graph rigidity, is a useful tool for analyzing and solving this problem. The application of graph rigidity to position estimation is investigated and demonstrated in recent literature \cite{eren2004rigidity,aspnes2006theory,anderson2010formal,shames2009minimization,zelazo2012rigidity,schoof2014bearing,zelazo2015decentralized,krick2009stabilisation}. Infinitesimal rigidity is a crucial concept in rigidity theory. It defines the sufficient graph properties of a framework so that all the infinitesimal motions preserving distance (bearing) measurements are trivial.

Clock estimation is another critical aspect in wireless sensor networks, especially for the commonly-used time-of-arrival (TOA) ranging techniques in distance-based position estimation. Various algorithms for estimating network clock parameters such as clock offset and clock skew are studied in \cite{rhee2009clock,ganeriwal2003timing,noh2008new,schenato2007distributed}. Most works of network clock estimation rely on spanning trees \cite{ganeriwal2003timing} or a cluster-based structure \cite{noh2008new} of the network, but little research explores the problem from a more general graph topology perspective. 

The close relation between position and clock estimation problem necessitates a joint estimation approach. Several research works \cite{chepuri2012joint,rajan2011joint,yu2009toa,ahmad2013joint,alanwar2017d} have studied the simultaneous estimation of network position and clock information, which are mostly tied from a statistical signal processing perspective and use redundant communication and known position or clock information of some sensors (called anchors) to acquire a unique estimation result. The relationship between the graph properties of an anchor-free network and the network's capability to estimate its position and clock by one round timestamp measurements is still an open problem. 

Ultra-wideband (UWB) is a short range, high bandwidth radio technology. It uses a broad range of frequencies to generate energy pulses with sharp rising edges, which allow for highly precise signal sending and receiving timestamp measurements \cite{oppermann2005uwb}. Ultra-wideband sensors are widely used for distributed sensing in position estimation due to their high accuracy, low price and low computation complexity. 

In this paper, we consider TOA-based UWB sensor networks in which both sending and receiving timestamps are accurately measured. Analogous to the distance (bearing) measurements being preserved in distance (bearing) rigidity theory \cite{anderson2008rigid,zhao2015bearing}, we show that the timestamp measurement can be studied in a similar way for the clock estimation problem.

We propose a novel TOA-based clock rigidity theory under a bidirectional communication assumption, showing that a clock framework can be uniquely determined up to some trivial variations (a shift on clock offset and a skew on all clock parameters). We also explore the connection between clock rigidity and bearing rigidity theory, showing that a clock framework is infinitesimally clock rigid if and only if its underlying graph is generically bearing rigid in $\mathbb{R}^2$ with at least one redundant edge.

Based on the proposed clock rigidity theory, a graph-theoretic approach for studying joint position and clock estimation problem is investigated. It is proved that a position-clock framework with certain graph properties can be uniquely determined up to trivial variations corresponding to both position and clock, i.e., up to a translation and a rotation of position, a shift of clock offset and a scaling of both position and clock. 

The structure of the work is as follows. Section \ref{sec_clockrigidity} presents the TOA-based clock rigidity theory under the bidirectional communication assumption. Section \ref{sec_connection} establishes the connection between clock rigidity and bearing rigidity theory. Section \ref{sec_jointmatrix} analyzes the joint position and clock problem based on the combination of clock rigidity theory and distance rigidity theory. Section \ref{sec_estimation} applies this new theory to the clock estimation and the position-clock estimation problems using a gradient-descent method.

\emph{Notation}: Matrices are denoted by capital letters (e.g., $A$). The rank and null space of a matrix $A$ are denoted by rank$(A)$ and Null$(A)$, respectively. A diagonal matrix with diagonal entries $d_1,...,d_n$ is denoted as $\diag{d_i}$. A matrix or a vector that consists of all zero entries is denoted by $\allzero$. The vector $\allone_n$ denotes the $n\times1$ vector of all ones. The identity matrix in $\mathbb{R}^{n\times n}$ is denoted by $I_n$. The Kronecker product of two matrices (vectors) $A$ and $B$ is written as $A\otimes B$, and $\norm{\cdot}$ denotes the Euclidean norm of a vector.  An elemental rotation in $d$-dimensional space is a rotation about the $(d-2)$-dimensional subspace containing a set of $(d-2)$ vectors in the standard basis. Matrix $J_d^i$ denotes the infinitesimal generator of the $i$th elemental rotation in $d$-dimensional space, where $i\in\{1,2,...,d(d-1)/2\}$. For example, for $d=2$ and $d=3$,
\begin{align*}
    &J^1_2=\begin{bmatrix}
    0 & 1\\
    -1 & 0
    \end{bmatrix}, \\
    &J^1_3=\begin{bmatrix}
    0 & 0 & 0\\
    0 & 0 & 1\\
    0 & -1 & 0
    \end{bmatrix},
    J^2_3=\begin{bmatrix}
    0 & 0 & 1\\
    0 & 0 & 0\\
    -1 & 0 & 0
    \end{bmatrix},
    J^3_3=\begin{bmatrix}
    0 & -1 & 0\\
    1 & 0 & 0\\
    0 & 0 & 0
    \end{bmatrix}.
\end{align*}

An undirected graph,  denoted as $\mathcal{G}=(\mathcal{V},\mathcal{E})$, consists of a vertex set $\mathcal{V}=\{1,...,n\}$ and an edge set $\mathcal{E}\subseteq\mathcal{V}\times\mathcal{V}$ with cardinality $m$. Two vertices $v_i$ and $v_j$ are called neighbors when $\{v_i,v_j\}\in\mathcal{E}$. A directed graph is denoted as $\mathcal{D}=(\mathcal{V},\mathcal{E}_\mathcal{D})$, where $\mathcal{E}_\mathcal{D}$ is an directed edge set.

\section{Clock rigidity }\label{sec_clockrigidity}
In this section, we propose a clock rigidity theory under the assumption of bidirectional communication. The basic problem that this clock rigidity theory studies is whether a clock framework can be uniquely determined up to some trivial variations given the TOA timestamp measurements between each pair of neighbors in the framework. This problem can be equivalently stated as whether it can be determined that two clock frameworks with the same inter-neighbor timestamp measurements will have the same parameters for the assumed clock model.

We begin by defining the first-order affine clock model. Consider a network with $n$ nodes, in which every node has its independent clock and exhibits a constant clock offset and clock skew. Let $t_i$ be the local time measured at the $i$th node and $t$ be the global reference time. We assume that the relation between the local time and the global reference time can be given by a first-order clock model \cite{rajan2011joint},
\begin{equation}
    t_i=w_it+\phi_i \quad\Leftrightarrow \quad t=\alpha_it_i+\beta_i,
\end{equation}
where the global clock skew $w_i\in \mathbb{R}^+$ and the global clock offset $\phi_i\in\mathbb{R}$ characterize the mapping from global reference time to the local time of node $i$. The local clock skew $\alpha_i=w_i^{-1}\in\mathbb{R}^+$ and the local clock offset $\beta_i = -w_i^{-1}\phi_i\in\mathbb{R}$ characterize the mapping from local time of node $i$ to the global reference time. In practice, global clock skew is stochastic due to the clock drift, but considering most clocks drift slowly, we can assume that the local clock parameters $\alpha_i$ and $\beta_i$ are constant over short periods of time. For brevity, we refer to $\alpha_i$ and $\beta_i$ as simply the clock skew and clock offset, respectively.

Ultra-wideband technology uses a wide bandwidth to generate signals with sharp edges,  which provide highly accurate sending and receiving timestamp measurements. Consider a TOA-based UWB sensor network, a ranging signal is transmitted from one node to another and the transmission and reception timestamps are recorded independently in local time coordinates, as shown in Fig. \ref{fig_TOA}. With the assumed clock model, the inter-agent distance can be expressed as
\begin{align}\label{OWRdist}
    d_{ij}=c(\alpha_jT^j_{(i,j)}+\beta_j-\alpha_iT^i_{(i,j)}-\beta_i),
\end{align}
where $d_{ij}$ is the distance between the $i$th and $j$th node, $c$ is the speed of light, for a ranging signal sending from node $i$ to node $j$, the sending timestamp at the local time coordinate of node $i$ is denoted as $T^i_{(i,j)}$, and the receiving timestamp at the local time coordinate of node $j$ is denoted as $T^j_{(i,j)}$. 
\begin{figure}[h]
    \centering
    \includegraphics[scale=0.25]{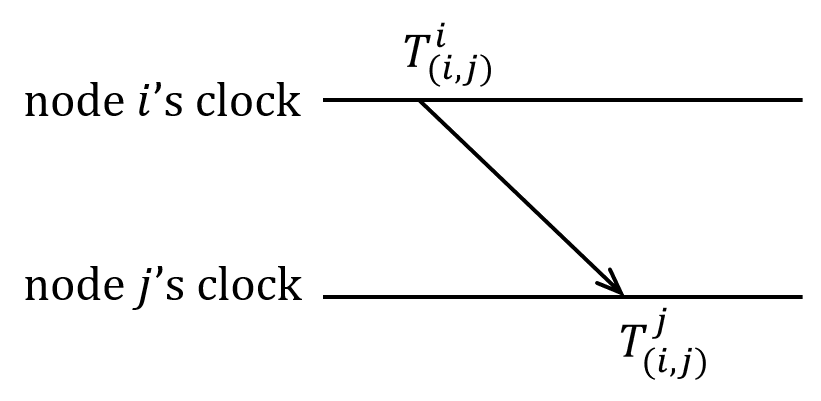}
    \caption{TOA measurement and local timestamp notation.}
    \label{fig_TOA}
\end{figure}

The communication behind (\ref{OWRdist}) is directed. In order to consider the network problem in an undirected way, we have the following assumption. 
\begin{assumption}[Bidirectional communication]\label{assump1}
  The inter-node communication is bidirectional, i.e., the $i$th node can receive a ranging signal from the $j$th node if and only if the $j$th node can receive a ranging signal from the $i$th node.
\end{assumption}

Assumption \ref{assump1} assumes the symmetric visibility between neighbors, which is trivially satisfied when all nodes share a common communication range. It simplifies the network problem from a directed graph to an undirected graph, and also provides a useful equivalent relation between the neighbor nodes, which will be shown later.

Now we define some necessary notation. Given a UWB sensor network of $n$ nodes, under Assumption \ref{assump1}, it can be represented by an undirected graph $\mathcal{G}=(\mathcal{V}, \mathcal{E})$ where each vertex $v_i$ in the vertex set $\mathcal{V}$ is associated with $i$th sensor node and each edge $\{v_i,v_j\}$ in the edge set $\mathcal{E}$ corresponds to a sensor node pair which has bidirectional communication. A clock configuration is denoted as $\varphi=[\varphi_1^T,...,\varphi_n^T]^T\in\mathbb{R}^{2n}$, where $\varphi_i=[\alpha_i,\beta_i]^T$. We also define a clock skew configuration $\alpha=[\alpha_1^T,...,\alpha_n^T]^T$ and a clock offset configuration $\beta=[\beta_1^T,...,\beta_n^T]^T$. A clock framework, denoted as $(\mathcal{G},\varphi)$, is a combination of an undirected graph $\mathcal{G}$ and a clock configuration $\varphi$, which provides a mapping from vertex $v_i\in\mathcal{V}$ to the parameter $\varphi_i$. Note that for a static UWB sensor network, we assume that $d_{ij}$ is fixed and the sending timestamp $T_{(i,j)}^i$ and $T_{(j,i)}^j$ are known for all $\{v_i,v_j\}\in\mathcal{E}$. So the receiving timestamp measurement $T_{(i,j)}^j$ and $T_{(j,i)}^i$ can be uniquely determined by $\varphi_i$ and $\varphi_j$.

Under Assumption \ref{assump1} and the distance relation (\ref{OWRdist}), TOA ranging measurements from both ends of one edge are available and equal, i.e., $d_{ij}=d_{ji}$. Rewriting this distance equivalence with the timestamp notation defined above, we have
\begin{align}\label{ts_equal}
\begin{split}
    \alpha_jT^j_{(i,j)}+\beta_j-\alpha_iT^i_{(i,j)}-\beta_i=\alpha_iT^i_{(j,i)}+\beta_i-\alpha_jT^j_{(j,i)}-\beta_j,
    \end{split}
\end{align}
which can be written in the following form for every $\{v_i,v_j\}\in \mathcal{E}$:
\begin{align} \label{clock_constr}
    \alpha_i\overline{T}^i_{ij}+\beta_i-\alpha_j\overline{T}^j_{ij}-\beta_j=0, 
\end{align}
where  
\begin{align}
     \overline{T}^i_{ij}=\frac{T^i_{(i,j)}+T^i_{(j,i)}}{2}\quad\text{and} \quad\overline{T}^j_{ij}=\frac{T^j_{(i,j)}+T^j_{(j,i)}}{2}.
 \end{align}
Define the edge clock function $f_{ij}:\mathbb{R}^{2n}\rightarrow\mathbb{R}$ as
\begin{align} \label{edge_clock_func}
   \ECF{ij}{\varphi}{\varphi}=s_{ij}(\varphi)^T\varphi,
\end{align}
where
\begin{align}\label{sij}
    s_{ij}(\varphi)=  \Bigg[\allzero^T\quad\underbrace{ \overline{T}^i_{ij}\quad1}_{v_i}\quad\allzero^T\quad\underbrace{-\overline{T}^j_{ij}\quad-1}_{v_j}\quad \allzero^T\Bigg]^T.
\end{align}
Equation (\ref{clock_constr}) can be equivalently written as
\begin{equation}\label{fk_constr}
\ECF{ij}{\varphi}{\varphi}=0.
\end{equation}

We are now ready to define the fundamental concepts in clock rigidity. These concepts are defined analogously to those in the distance rigidity theory \cite{anderson2008rigid} and bearing rigidity theory \cite{zhao2015bearing}.

\begin{defn} \label{def_clockequivalency}
Clock frameworks $(\mathcal{G},\varphi)$ and $(\mathcal{G},\varphi')$ are \textbf{clock equivalent} if $\ECF{ij}{\varphi}{\varphi'}=0$ for all $\{v_i,v_j\}\in\mathcal{E}$.
\end{defn}

\begin{defn}\label{def_clockcongruency}
Clock frameworks $(\mathcal{G},\varphi)$ and $(\mathcal{G},\varphi')$ are \textbf{clock congruent} if $\ECF{ij}{\varphi}{\varphi'}=0$ for all $v_i,v_j\in\mathcal{V}$.
\end{defn}
\begin{defn}\label{def_clockrigidity}
A clock framework $(\mathcal{G},\varphi)$ is \textbf{clock rigid} if there exists a constant $\epsilon>0$ such that any clock framework $(\mathcal{G},\varphi')$ that is clock equivalent to $(\mathcal{G},\varphi)$ and satisfies $\norm{\varphi'-\varphi}<\epsilon$ is also clock congruent to $(\mathcal{G},\varphi)$ .
\end{defn}
\begin{defn}\label{def_globalclockrigidity}
A clock framework $(\mathcal{G},\varphi)$ is \textbf{globally clock rigid} if an arbitrary clock framework that is clock equivalent to $(\mathcal{G},\varphi)$ is also clock congruent to $(\mathcal{G},\varphi)$.
\end{defn}

We next define infinitesimal clock rigidity, which is one of the most important concepts in the clock rigidity theory. 

For convenience, we also reference the edge clock function $f_{ij}$ and $s_{ij}$ using an edge index $k$ rather than node pair $\{v_i,v_j\}\in\mathcal{E}$ for some edge ordering $k\in\{1,...,m\}$ as
\begin{align}
    f_k\triangleq f_{ij}, s_k\triangleq s_{ij},
\end{align}
and define the matrix
\begin{align}
    S(\varphi)=[s_1(\varphi),...,s_m(\varphi)]^T.
\end{align}
Now we define the clock function $F_c:\mathbb{R}^{2n}\rightarrow\mathbb{R}^m$ as
\begin{align}\label{clockfunc}
   \CF{\varphi}{\varphi}\triangleq[\ECF{1}{\varphi}{\varphi},...,\ECF{m}{\varphi}{\varphi}]^T,
\end{align}
so the constraint (\ref{clock_constr}) can be written as
\begin{align}\label{clockfunc_constr}
   \CF{\varphi}{\varphi}=S(\varphi)\varphi=\allzero.
\end{align}

The clock function describes all the clock constraints in the clock  framework. We define the clock rigidity matrix as the Jacobian of the clock function,
\begin{align}\label{clockrigiditymatrix}
   R_c(\varphi)\triangleq\frac{\partial \CF{\varphi}{\varphi}}{\partial \varphi}\in\mathbb{R}^{m\times 2n}.
\end{align}
Let $\delta \varphi$ be a variation of the configuration $\varphi$. If $R_c(\varphi)\delta \varphi = 0$, then $\delta \varphi$ is called an infinitesimal clock variation of $(\mathcal{G},\varphi)$. This is analogous to infinitesimal motions in distance rigidity \cite{anderson2008rigid} and bearing rigidity \cite{zhao2015bearing}. Distance preserving motions of a framework include translations and rotations. Bearing preserving motions of a framework include translations and scalings. For a clock framework, timestamp preserving variations include translations (a common shift) on the clock offset configuration $\beta$ and scalings (a common skew) of the entire clock framework. An infinitesimal clock variation is called trivial if it corresponds to a translation of the clock offset configuration $\beta$ and a scaling of the entire clock framework. See Fig. \ref{fig_clock_trivial}.

\begin{figure}[!t]
\begin{center}
\subfigure[Translation of $\beta$]{\includegraphics[scale=0.3]{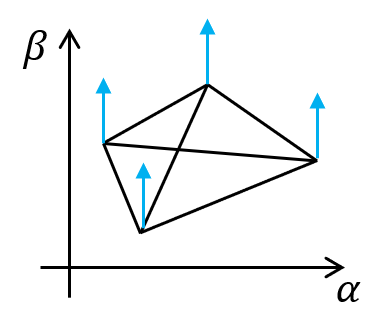}%
\label{Fig_clock_trivial_a}}
\hfil
\subfigure[Scaling of $\varphi$]{\includegraphics[scale=0.3]{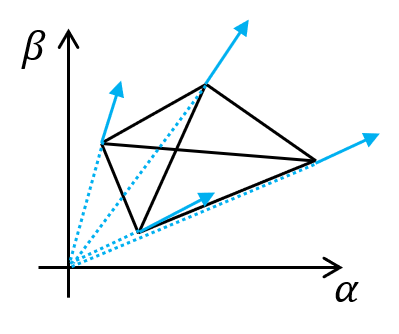}%
\label{Fig_clock_trivial_b}}
\end{center}
\caption{Basic trivial infinitesimal clock variations.}
 \label{fig_clock_trivial}
\end{figure}

\begin{defn}\label{def_infinitesimalclockrigidity}
A clock framework is \textbf{infinitesimally clock rigid} if all the infinitesimal clock variations are trivial.
\end{defn}

Up to this point, we have introduced the fundamental concepts in clock rigidity theory. We next connect these concepts using the clock rigidity matrix.

\begin{lem}\label{lem_S}
For a clock framework $(\mathcal{G},\varphi)$, the clock rigidity matrix in (\ref{clockrigiditymatrix}) can be expressed as $R_c(\varphi)=S(\varphi)$ and satisfies $R_c(\varphi)\varphi=\allzero$.
\end{lem}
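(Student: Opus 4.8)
The plan is to prove the two claims of Lemma~\ref{lem_S} in sequence, both following essentially from the linearity of the clock function $F_c(S(\varphi),\varphi)=S(\varphi)\varphi$ in the second argument. The crucial structural observation is that although $S(\varphi)$ is written as a function of $\varphi$, each row $s_{ij}(\varphi)^T$ depends only on the timestamp averages $\overline{T}^i_{ij}$ and $\overline{T}^j_{ij}$, which are determined by the (fixed) sending timestamps and the known fixed distances, not by the clock parameters $\varphi_i,\varphi_j$. Hence $S(\varphi)$ is in fact a \emph{constant} matrix with respect to $\varphi$, and I would make this explicit first: writing $S \triangleq S(\varphi)$, the clock function becomes the genuinely linear map $\varphi \mapsto S\varphi$.

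Given that, the first claim is a one-line computation: $R_c(\varphi) = \dfrac{\partial (S\varphi)}{\partial \varphi} = S = S(\varphi)$, since the Jacobian of a linear map $\varphi\mapsto S\varphi$ is the matrix $S$ itself. The second claim, $R_c(\varphi)\varphi = \allzero$, then reduces to showing $S(\varphi)\varphi = \allzero$, which is exactly the constraint~(\ref{clockfunc_constr}); equivalently, row by row, $s_{ij}(\varphi)^T\varphi = \alpha_i\overline{T}^i_{ij} + \beta_i - \alpha_j\overline{T}^j_{ij} - \beta_j = 0$ for every $\{v_i,v_j\}\in\mathcal{E}$, which is precisely~(\ref{clock_constr}), itself derived from the bidirectional-communication relation~(\ref{ts_equal}) under Assumption~\ref{assump1}. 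So I would simply invoke~(\ref{clock_constr})--(\ref{clockfunc_constr}): the configuration $\varphi$ of the framework $(\mathcal{G},\varphi)$ satisfies all edge clock constraints by construction, hence $S(\varphi)\varphi=\allzero$, i.e., $R_c(\varphi)\varphi=\allzero$.

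The only mild subtlety — and the closest thing to an obstacle — is justifying the interchange of "$S$ as a function of $\varphi$" with "$S$ as a constant," i.e.\ arguing that when we differentiate $F_c(S(\varphi),\varphi) = S(\varphi)\varphi$ we do not pick up a term from the $\varphi$-dependence of $S$. This is handled by the remark above: $\overline{T}^i_{ij}$ and $\overline{T}^j_{ij}$ are data associated with the edge (computed from the fixed sending timestamps and the fixed inter-agent distance via~(\ref{OWRdist})), so $\partial s_{ij}/\partial \varphi = 0$ identically, and the product rule contributes nothing from that factor. Once this is noted, the rest is immediate and requires no further calculation.
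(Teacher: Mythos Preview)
Your approach matches the paper's: differentiate row by row to get $R_c(\varphi)=S(\varphi)$, then invoke~(\ref{clockfunc_constr}) for $R_c(\varphi)\varphi=\allzero$. The one inaccuracy is your justification for treating $S$ as constant. You assert that $\overline{T}^i_{ij},\overline{T}^j_{ij}$ are ``determined by the (fixed) sending timestamps and the known fixed distances, not by the clock parameters $\varphi_i,\varphi_j$.'' That is not true: each average contains a \emph{receiving} timestamp, and~(\ref{OWRdist}) gives, e.g., $T^j_{(i,j)}=(d_{ij}/c+\alpha_i T^i_{(i,j)}+\beta_i-\beta_j)/\alpha_j$, which does depend on $\varphi_i,\varphi_j$; the paper itself exploits this dependence later when arguing genericity. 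The correct reading---implicit in the two-argument notation $f_{ij}(s_{ij}(\varphi),\varphi')$ and in Definitions~\ref{def_clockequivalency}--\ref{def_clockcongruency}---is that the derivative in~(\ref{clockrigiditymatrix}) is taken with respect to the \emph{second} argument only: the timestamps are measured data fixed by the actual framework $(\mathcal{G},\varphi)$, and one differentiates over candidate configurations. With that correction your product-rule remark becomes unnecessary rather than wrong, and the rest of your argument goes through and coincides with the paper's short proof.
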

\begin{proof}
It follows from the definition of the clock function and clock rigidity matrix in (\ref{clockfunc}) and (\ref{clockrigiditymatrix}). The $k$th row of clock rigidity matrix can be written as
\begin{align}
    \frac{\partial \ECF{k}{\varphi}{\varphi}}{\partial \varphi}=s_k(\varphi), 
\end{align}
for all $k\in\{1,...,m\}$. So, $ R_c(\varphi)=[s_1(\varphi),...,s_m(\varphi)]^T=S(\varphi)$. Following from the constraint in (\ref{clockfunc_constr}), $R_c(\varphi)\varphi=S(\varphi)\varphi=\allzero$ 
\end{proof}
\begin{lem}\label{lem_rankinequality}
A clock framework $(\mathcal{G},\varphi)$ satisfies $\text{span}\{\allone_n\otimes$ $[0,1]^T,\varphi\}\subseteq\Null(R_c(\varphi))$ and $\rank(R_c(\varphi))\leq2n-2$.
\end{lem}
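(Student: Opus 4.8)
The plan is to produce the result by exhibiting two explicit vectors in $\Null(R_c(\varphi))$ and then checking they are linearly independent, which forces the rank bound. By Lemma \ref{lem_S} we already know $R_c(\varphi)=S(\varphi)$ and $R_c(\varphi)\varphi=\allzero$, so $\varphi\in\Null(R_c(\varphi))$ with no further work; this handles one of the two spanning vectors immediately.

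For the second vector I would compute $R_c(\varphi)\bigl(\allone_n\otimes[0,1]^T\bigr)$ row by row. The row of $S(\varphi)$ indexed by an edge $\{v_i,v_j\}$ is $s_{ij}(\varphi)$ from (\ref{sij}), whose only nonzero entries are $\overline{T}^i_{ij}$ and $1$ in the block associated with $v_i$ and $-\overline{T}^j_{ij}$ and $-1$ in the block associated with $v_j$. Since $\allone_n\otimes[0,1]^T$ places a $0$ in every clock-skew slot and a $1$ in every clock-offset slot, the inner product of $s_{ij}(\varphi)$ with it collapses to $1+(-1)=0$. As this holds for every edge, $\allone_n\otimes[0,1]^T\in\Null(R_c(\varphi))$. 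Because the null space is a subspace, I then conclude $\text{span}\{\allone_n\otimes[0,1]^T,\varphi\}\subseteq\Null(R_c(\varphi))$, which is the first assertion.

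For the rank inequality I would observe that the two vectors above are linearly independent: every clock-skew entry $\alpha_i$ of $\varphi$ is strictly positive because $\alpha_i\in\mathbb{R}^+$ in the clock model, whereas the corresponding entries of $\allone_n\otimes[0,1]^T$ are all zero, so neither vector can be a scalar multiple of the other. Hence $\Dim\Null(R_c(\varphi))\geq2$, and the rank–nullity theorem gives $\rank(R_c(\varphi))=2n-\Dim\Null(R_c(\varphi))\leq2n-2$.

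I do not expect a genuine obstacle here; the only step needing any care is the linear-independence claim, and even that reduces to the positivity of the clock skews baked into the model, so the whole argument is essentially a direct verification together with an application of Lemma \ref{lem_S}.
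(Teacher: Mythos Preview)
Your proposal is correct and follows essentially the same approach as the paper: both invoke Lemma~\ref{lem_S} to place $\varphi$ in the null space, read off from the form of $s_{ij}$ in (\ref{sij}) that $\allone_n\otimes[0,1]^T$ is also annihilated, and then use the positivity of the clock skews $\alpha_i$ to establish linear independence and hence the rank bound. Your version simply spells out the row-by-row computation and the rank--nullity step more explicitly than the paper's terse proof.
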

\begin{proof}
By Lemma \ref{lem_S}, it is clear that $\varphi\subseteq\Null(R_c(\varphi))$. The expression of $s_k$ shown in (\ref{sij}) indicates that $\allone_n\otimes[0,1]^T\subseteq\Null(R_c(\varphi))$. Considering the assumption that $\alpha_i\in\mathbb{R}^+$, for all $i\in\{1,...,n\}$, the vector $\varphi$ and $\allone_n\otimes[0,1]^T$ are linearly independent. The inequality $\rank(R_c(\varphi))\leq2n-2$ follows immediately from $\Span{\allone_n\otimes[0,1]^T,\varphi}\subseteq\Null(R_c(\varphi))$. 
\end{proof}

For any undirected graph $\mathcal{G}=(\mathcal{V},\mathcal{E})$, denote $K_n$ as the $n$-node complete graph over the same vertex set $\mathcal{V}$, and $R_c^K(\varphi)$ as the clock rigidity matrix of the clock framework $(K_n,\varphi)$. The next result gives the necessary and sufficient conditions for clock equivalency and clock congruency.
\begin{thm}\label{thm_equivalencecondition}
Two clock frameworks $(\mathcal{G},\varphi)$ and $(\mathcal{G},\varphi')$ are clock equivalent if and only if $R_c(\varphi)\varphi'=0$, and clock congruent if and only if $R_c^K(\varphi)\varphi'=0$.
\end{thm}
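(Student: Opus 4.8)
The plan is to unwind the definitions of clock equivalency and clock congruency and show each reduces to the stated linear condition via the identity $F_c(S(\varphi),\varphi') = S(\varphi)\varphi'$ together with Lemma~\ref{lem_S}, which gives $R_c(\varphi) = S(\varphi)$.

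First I would treat clock equivalency. By Definition~\ref{def_clockequivalency}, $(\mathcal{G},\varphi)$ and $(\mathcal{G},\varphi')$ are clock equivalent if and only if $\ECF{ij}{\varphi}{\varphi'} = 0$ for all $\{v_i,v_j\}\in\mathcal{E}$. Note carefully that the timestamp vector $s_{ij}$ is built from the \emph{measured} timestamps of the framework $(\mathcal{G},\varphi)$, so $\ECF{ij}{\varphi}{\varphi'} = s_{ij}(\varphi)^T\varphi'$; stacking over the $m$ edges in the fixed edge ordering gives the vector $S(\varphi)\varphi'$. Hence clock equivalency holds iff $S(\varphi)\varphi' = \allzero$, and by Lemma~\ref{lem_S} this is exactly $R_c(\varphi)\varphi' = \allzero$. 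The one subtlety worth a sentence is that the clock constraint~(\ref{clock_constr}) derived from the bidirectional assumption is symmetric in the roles of $v_i$ and $v_j$ (i.e.\ $\overline{T}^i_{ij}$ and $\overline{T}^j_{ij}$ are symmetrized averages), so the edge function is well defined on an undirected edge and the stacking into $S(\varphi)$ is unambiguous.

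The clock congruency case is identical except that the relevant edge set is $\mathcal{V}\times\mathcal{V}$ rather than $\mathcal{E}$: by Definition~\ref{def_clockcongruency}, $(\mathcal{G},\varphi)$ and $(\mathcal{G},\varphi')$ are clock congruent iff $\ECF{ij}{\varphi}{\varphi'} = 0$ for \emph{all} pairs $v_i,v_j\in\mathcal{V}$, which is precisely the statement that the clock function of the complete graph $(K_n,\varphi)$ evaluated at $\varphi'$ vanishes, i.e.\ $S^K(\varphi)\varphi' = \allzero$ where $S^K(\varphi)$ stacks $s_{ij}(\varphi)$ over all $\binom{n}{2}$ vertex pairs. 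Applying Lemma~\ref{lem_S} to the framework $(K_n,\varphi)$ gives $R_c^K(\varphi) = S^K(\varphi)$, so clock congruency is equivalent to $R_c^K(\varphi)\varphi' = \allzero$.

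This argument is essentially bookkeeping — no real obstacle — but the step requiring the most care is making explicit that in $\ECF{ij}{\varphi}{\varphi'}$ the first argument (which fixes the timestamps, hence the coefficient vector $s_{ij}$) comes from $\varphi$ while the second argument (the vector being tested) is $\varphi'$; this asymmetry is what makes the condition linear in $\varphi'$ and is why the rigidity matrix of the \emph{reference} framework, not the perturbed one, appears. I would also remark that the congruency characterization implicitly uses that $s_{ij}(\varphi)$ is defined for every pair $\{v_i,v_j\}$, not only edges of $\mathcal{G}$, which is legitimate because the timestamps $\overline{T}^i_{ij},\overline{T}^j_{ij}$ are determined by the clock parameters $\varphi_i,\varphi_j$ regardless of whether $\{v_i,v_j\}\in\mathcal{E}$.
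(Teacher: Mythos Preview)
Your proposal is correct and follows essentially the same approach as the paper: unwind the definitions, use $f_{ij}(s_{ij}(\varphi),\varphi')=s_{ij}(\varphi)^T\varphi'$, stack into $S(\varphi)\varphi'$, and invoke Lemma~\ref{lem_S} to identify $S(\varphi)=R_c(\varphi)$ (and likewise over $K_n$ for congruency). One small imprecision in your final remark: the timestamps $\overline{T}^i_{ij},\overline{T}^j_{ij}$ are not determined by $\varphi_i,\varphi_j$ alone but also by the inter-node distance and sending timestamps, so the congruency statement implicitly assumes $(d_{in},T_s)$ is given for all vertex pairs---this is exactly what the paper assumes when it introduces $R_c^K(\varphi)$, so the argument goes through.
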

\begin{proof}
Following from Lemma \ref{lem_S}, we have $F_\varphi(\varphi)=0$ and 
\begin{align}
    R_c(\varphi)\varphi'=0 \Leftrightarrow F_\varphi(\varphi')=0,
\end{align}
so $f_k^\varphi(\varphi)=f_k^\varphi(\varphi')=0$, for all $k\in\{1,...,m\}$. By Definition \ref{def_clockequivalency}, the two clock frameworks are clock equivalent if and only if $R_c(\varphi)\varphi'=0$. By Definition \ref{def_clockcongruency}, it can be similarly shown that clock frameworks are clock congruent if and only if $R_c^K(\varphi)\varphi'=0$.
\end{proof}

Since any infinitesimal variation $\delta \varphi$ is in $\Null(R_c(\varphi))$, Theorem \ref{thm_equivalencecondition} implies that $R_c(\varphi)(\varphi+\delta\varphi)=0$ and hence $(\mathcal{G},\varphi+\delta\varphi)$ is clock equivalent to $(\mathcal{G},\varphi)$.

It is worth noting that Theorem \ref{thm_equivalencecondition} for clock rigidity has an analogous expression in bearing rigidity theory; see Theorem 1 in \cite{zhao2015bearing}. It indicates a possible relationship between clock rigidity and bearing rigidity, which will be further discussed later. Here we provide Theorems  \ref{thm_globalrigid}-\ref{thm_CRvariation} as a straightforward extension of Theorem \ref{thm_equivalencecondition} following the corresponding bearing rigidity proofs in \cite{zhao2015bearing} (Theorems 2-6). Due to space limitations, we refer the reader to this work for detailed proofs.

\begin{thm}\label{thm_globalrigid}
A clock framework $(\mathcal{G},\varphi)$ is globally clock rigid if and only if $\Null(R_c^K(\varphi))=\Null(R_c(\varphi))$ or equivalently $\rank(R_c^K(\varphi))=\rank(R_c(\varphi))$.
\end{thm}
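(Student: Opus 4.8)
The plan is to prove the biconditional by unpacking the definitions of global clock rigidity and clock congruency in terms of the rigidity matrices, and then translating the null-space condition into a rank condition. First I would observe that, by Definition \ref{def_globalclockrigidity}, $(\mathcal{G},\varphi)$ is globally clock rigid exactly when every $\varphi'$ that is clock equivalent to $(\mathcal{G},\varphi)$ is also clock congruent to it. By Theorem \ref{thm_equivalencecondition}, clock equivalence of $(\mathcal{G},\varphi')$ and $(\mathcal{G},\varphi)$ is the condition $R_c(\varphi)\varphi'=\allzero$, i.e.\ $\varphi'\in\Null(R_c(\varphi))$, while clock congruence is $R_c^K(\varphi)\varphi'=\allzero$, i.e.\ $\varphi'\in\Null(R_c^K(\varphi))$. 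Hence global clock rigidity is equivalent to the set inclusion $\Null(R_c(\varphi))\subseteq\Null(R_c^K(\varphi))$.

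The next step is to argue that the reverse inclusion $\Null(R_c^K(\varphi))\subseteq\Null(R_c(\varphi))$ always holds, so that the inclusion above is in fact an equality. This is immediate because $\mathcal{E}\subseteq\mathcal{E}(K_n)$: every edge constraint of $\mathcal{G}$ is also an edge constraint of $K_n$, so the rows of $R_c(\varphi)$ form a subset (up to the fixed edge ordering) of the rows of $R_c^K(\varphi)$; therefore $R_c^K(\varphi)\varphi'=\allzero$ forces $R_c(\varphi)\varphi'=\allzero$. Combining this with the previous paragraph yields that $(\mathcal{G},\varphi)$ is globally clock rigid if and only if $\Null(R_c(\varphi))=\Null(R_c^K(\varphi))$.

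Finally I would convert the null-space equality into the rank equality. Since the reverse inclusion $\Null(R_c^K(\varphi))\subseteq\Null(R_c(\varphi))$ holds unconditionally, the two null spaces coincide if and only if they have the same dimension; both matrices have $2n$ columns, so by the rank--nullity theorem $\Dim\Null(R_c(\varphi))=2n-\rank(R_c(\varphi))$ and likewise for $R_c^K(\varphi)$, giving the stated equivalence $\rank(R_c^K(\varphi))=\rank(R_c(\varphi))$. The argument is essentially a formal translation, so there is no real analytic obstacle; the one point that needs a little care is making explicit that one inclusion between the null spaces is automatic (from $\mathcal{E}\subseteq\mathcal{E}(K_n)$) so that only the other inclusion carries content, and that this is what makes the null-space condition equivalent to the rank condition rather than merely implied by it. This mirrors the proof of Theorem 2 in \cite{zhao2015bearing} for bearing rigidity.
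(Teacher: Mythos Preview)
Your proposal is correct and follows exactly the approach the paper intends: the paper does not give an explicit proof of this theorem but states that it is a straightforward extension of Theorem~\ref{thm_equivalencecondition} following the corresponding bearing rigidity proof (Theorem~2 in \cite{zhao2015bearing}), which is precisely what you have written out. Your care in noting that the inclusion $\Null(R_c^K(\varphi))\subseteq\Null(R_c(\varphi))$ is automatic (so that the null-space equality is genuinely equivalent to the rank equality) is the one nontrivial point, and you have handled it correctly.
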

\begin{thm}
A clock framework $(\mathcal{G},\varphi)$ is clock rigid if and only if it is globally clock rigid.
\end{thm}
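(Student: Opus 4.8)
The plan is to prove the two implications separately; the ``if'' direction is immediate, while the ``only if'' direction is where the real content sits. If $(\mathcal{G},\varphi)$ is globally clock rigid, then every clock framework clock equivalent to it is already clock congruent to it (Definition~\ref{def_globalclockrigidity}), so the local condition in Definition~\ref{def_clockrigidity} holds for \emph{any} $\epsilon>0$ and $(\mathcal{G},\varphi)$ is clock rigid. For the converse I would lean on the linear-algebraic characterization already in hand: by Theorem~\ref{thm_equivalencecondition} the configurations clock equivalent to $(\mathcal{G},\varphi)$ form exactly $\Null(R_c(\varphi))$, a linear subspace, and by Lemma~\ref{lem_S} this subspace contains $\varphi$.

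Assuming $(\mathcal{G},\varphi)$ is clock rigid with constant $\epsilon>0$, I would pick an arbitrary clock framework $(\mathcal{G},\varphi')$ clock equivalent to $(\mathcal{G},\varphi)$ (the case $\varphi'=\varphi$ being trivial) and interpolate between the two: set $\varphi(t)=(1-t)\varphi+t\varphi'$ for $t\in[0,1]$. Because $\varphi,\varphi'\in\Null(R_c(\varphi))$ and this set is a subspace, $R_c(\varphi)\varphi(t)=0$ for all $t$, so every $(\mathcal{G},\varphi(t))$ is clock equivalent to $(\mathcal{G},\varphi)$; and each $\varphi(t)$ is a legitimate clock configuration since its skew entries $\alpha_i(t)=(1-t)\alpha_i+t\alpha_i'$ are convex combinations of positive reals, hence positive. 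Now choose $t^\ast\in(0,1]$ small enough that $\|\varphi(t^\ast)-\varphi\|=t^\ast\|\varphi'-\varphi\|<\epsilon$. Clock rigidity then forces $(\mathcal{G},\varphi(t^\ast))$ to be clock congruent to $(\mathcal{G},\varphi)$, i.e.\ $R_c^K(\varphi)\varphi(t^\ast)=0$ by Theorem~\ref{thm_equivalencecondition}. Expanding $R_c^K(\varphi)\varphi(t^\ast)=(1-t^\ast)R_c^K(\varphi)\varphi+t^\ast R_c^K(\varphi)\varphi'$ and invoking $R_c^K(\varphi)\varphi=0$ (Lemma~\ref{lem_S} applied to $(K_n,\varphi)$) yields $t^\ast R_c^K(\varphi)\varphi'=0$, hence $R_c^K(\varphi)\varphi'=0$. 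So $(\mathcal{G},\varphi')$ is clock congruent to $(\mathcal{G},\varphi)$, and as $\varphi'$ was arbitrary, $(\mathcal{G},\varphi)$ is globally clock rigid.

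I do not anticipate a genuine obstacle: this is the clock-rigidity counterpart of the analogous bearing-rigidity statement in \cite{zhao2015bearing}, and it goes through precisely because the edge clock function is homogeneous of degree one in its second argument, which makes the ``clock equivalent to $(\mathcal{G},\varphi)$'' set a vector subspace rather than a curved variety. The only spots that need a sentence of care are (i) checking the interpolants stay valid clock configurations, handled by convexity of the positive-skew constraint, and (ii) keeping straight that the matrices $R_c(\varphi)$ and $R_c^K(\varphi)$ in Theorem~\ref{thm_equivalencecondition} are evaluated at the fixed reference $\varphi$, so that ``clock equivalent/congruent to $(\mathcal{G},\varphi)$'' is genuinely membership in the fixed subspaces $\Null(R_c(\varphi))$ and $\Null(R_c^K(\varphi))$. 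If one would rather not carry the validity check at all, the entire argument can be run inside the open cone of positive-skew configurations.
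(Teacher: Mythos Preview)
Your proof is correct and follows essentially the same approach the paper intends: the paper defers this theorem to the corresponding bearing-rigidity result in \cite{zhao2015bearing}, whose proof is precisely the linear-interpolation argument you give, exploiting that clock equivalence and clock congruence with $(\mathcal{G},\varphi)$ are characterized as membership in the fixed linear subspaces $\Null(R_c(\varphi))$ and $\Null(R_c^K(\varphi))$ via Theorem~\ref{thm_equivalencecondition}. Your extra care in verifying that the interpolants have positive skew (so that each $\varphi(t)$ is a bona fide clock configuration) is a nice touch that the bearing-rigidity setting does not require.
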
 
\begin{thm}\label{thm_ICR}
For a clock framework $(\mathcal{G},\varphi)$, the following statements are equivalent:
\begin{enumerate}[label=(\alph*)]
    \item $(\mathcal{G},\varphi)$ is infinitesimally clock rigid;
    \item $\rank(R_c(\varphi))=2n-2$;
    \item $\Null(R_c(\varphi))=\Span{\allone_n\otimes[0,1]^T,\varphi}$.
\end{enumerate}
\end{thm}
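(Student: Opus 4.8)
The plan is to derive the three equivalences from the elementary facts already recorded in Lemma~\ref{lem_S} and Lemma~\ref{lem_rankinequality} together with the rank--nullity theorem, by first making precise the linear subspace of \emph{trivial} infinitesimal clock variations and then proving the two equivalences (b)~$\Leftrightarrow$~(c) and (a)~$\Leftrightarrow$~(c), which together chain to all of (a), (b), (c).

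First I would identify the set $\mathcal{T}$ of trivial infinitesimal clock variations as a subspace of $\mathbb{R}^{2n}$. A translation of the clock offset configuration $\beta$ by a scalar $c_1$ leaves $\alpha$ unchanged and adds $c_1\allone_n$ to $\beta$; in the interleaved ordering of $\varphi$ this is the variation $c_1(\allone_n\otimes[0,1]^T)$. A scaling (common skew) of the entire clock framework along the one--parameter family $\varphi\mapsto(1+c_2 t)\varphi$ has derivative $c_2\varphi$ at $t=0$. Hence $\mathcal{T}=\Span{\allone_n\otimes[0,1]^T,\varphi}$. By Lemma~\ref{lem_rankinequality} the two spanning vectors are linearly independent (this uses $\alpha_i>0$ for all $i$), so $\Dim\mathcal{T}=2$; the same lemma gives the inclusion $\mathcal{T}\subseteq\Null(R_c(\varphi))$, consistent with Lemma~\ref{lem_S} ($R_c(\varphi)\varphi=\allzero$) and the $\pm1$ structure of the $\beta$--entries of each $s_k$ ($R_c(\varphi)(\allone_n\otimes[0,1]^T)=\allzero$), so both generating directions are genuinely timestamp--preserving.

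Next I would prove (b)~$\Leftrightarrow$~(c). By rank--nullity, $\Dim\Null(R_c(\varphi))=2n-\rank(R_c(\varphi))$. If $\rank(R_c(\varphi))=2n-2$ then $\Null(R_c(\varphi))$ has dimension $2$; since it contains the $2$--dimensional subspace $\mathcal{T}$, the two coincide, which is (c). Conversely, if $\Null(R_c(\varphi))=\mathcal{T}$ then $\Dim\Null(R_c(\varphi))=2$, hence $\rank(R_c(\varphi))=2n-2$. For (a)~$\Leftrightarrow$~(c): by definition an infinitesimal clock variation of $(\mathcal{G},\varphi)$ is any $\delta\varphi\in\Null(R_c(\varphi))$, and it is trivial exactly when $\delta\varphi\in\mathcal{T}$; so Definition~\ref{def_infinitesimalclockrigidity} says $(\mathcal{G},\varphi)$ is infinitesimally clock rigid iff $\Null(R_c(\varphi))\subseteq\mathcal{T}$, and combining this with the reverse inclusion $\mathcal{T}\subseteq\Null(R_c(\varphi))$ from Lemma~\ref{lem_rankinequality} yields $\Null(R_c(\varphi))=\mathcal{T}=\Span{\allone_n\otimes[0,1]^T,\varphi}$, i.e.\ (c). This closes the cycle.

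I expect the only genuine obstacle to be the first step: arguing rigorously that the trivial infinitesimal clock variations span \emph{no more} than the two directions $\allone_n\otimes[0,1]^T$ and $\varphi$, i.e.\ that a shift of $\beta$ and a common skew are the only timestamp--preserving deformation families one is willing to call trivial. Conceptually this is the clock--rigidity counterpart of the fact that translations and scalings exhaust the bearing--preserving motions (cf.\ the bearing rigidity treatment in \cite{zhao2015bearing}); once it is granted, the remaining steps are routine linear algebra already packaged in Lemmas~\ref{lem_S} and \ref{lem_rankinequality}.
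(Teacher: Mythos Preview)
Your proposal is correct and follows essentially the same approach the paper intends: the paper does not spell out a proof but declares Theorems~\ref{thm_globalrigid}--\ref{thm_CRvariation} to be straightforward adaptations of the bearing rigidity proofs in \cite{zhao2015bearing}, and your argument---identify the trivial variations as the $2$--dimensional subspace $\Span{\allone_n\otimes[0,1]^T,\varphi}$ via Lemma~\ref{lem_rankinequality}, then use rank--nullity and Definition~\ref{def_infinitesimalclockrigidity}---is exactly that adaptation. Your concern about the ``only genuine obstacle'' is not an obstacle here: the paper \emph{defines} trivial infinitesimal clock variations to be precisely the offset translations and framework scalings, so $\mathcal{T}=\Span{\allone_n\otimes[0,1]^T,\varphi}$ holds by definition rather than requiring a separate argument.
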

\begin{thm}\label{thm_infinitesimaltoglobal}
Infinitesimal clock rigidity implies global clock rigidity.
\end{thm}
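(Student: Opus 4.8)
The plan is to prove the implication purely at the level of ranks of clock rigidity matrices, chaining together the three results already in hand: the characterization of global clock rigidity in Theorem~\ref{thm_globalrigid}, the characterization of infinitesimal clock rigidity in Theorem~\ref{thm_ICR}, and the universal rank bound in Lemma~\ref{lem_rankinequality}. The key observation is that $\mathcal{G}$ and $K_n$ share the vertex set $\mathcal{V}$, so the edge clock functions of $(\mathcal{G},\varphi)$ form a subcollection of those of $(K_n,\varphi)$, which forces a one-sided rank inequality between $R_c(\varphi)$ and $R_c^K(\varphi)$; the reverse inequality is then supplied by Lemma~\ref{lem_rankinequality} applied to $K_n$, pinning both ranks to $2n-2$.

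Concretely, I would proceed as follows. First, assume $(\mathcal{G},\varphi)$ is infinitesimally clock rigid; by the equivalence (a)$\Leftrightarrow$(b) in Theorem~\ref{thm_ICR} this yields $\rank(R_c(\varphi))=2n-2$. Second, note that for some edge ordering the rows $s_k(\varphi)^T$ of $R_c(\varphi)$, with $k$ ranging over $\mathcal{E}$, appear verbatim among the rows of $R_c^K(\varphi)$ (whose rows are indexed by all pairs in $\mathcal{V}\times\mathcal{V}$, since $\mathcal{E}\subseteq\mathcal{V}\times\mathcal{V}$); hence $R_c(\varphi)$ is, up to a row permutation, a submatrix of $R_c^K(\varphi)$, giving $\rank(R_c^K(\varphi))\ge\rank(R_c(\varphi))=2n-2$. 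Third, apply Lemma~\ref{lem_rankinequality} to the clock framework $(K_n,\varphi)$ — the lemma's proof only uses the structure of the vectors $s_k$ and the assumption $\alpha_i\in\mathbb{R}^+$, so it is valid for any graph, in particular $K_n$ — to obtain $\rank(R_c^K(\varphi))\le 2n-2$. Combining, $\rank(R_c^K(\varphi))=2n-2=\rank(R_c(\varphi))$. Finally, invoke Theorem~\ref{thm_globalrigid}: the equality $\rank(R_c^K(\varphi))=\rank(R_c(\varphi))$ is exactly the stated criterion for $(\mathcal{G},\varphi)$ to be globally clock rigid, which completes the argument.

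I do not expect a genuine obstacle here, since the statement is essentially a corollary of the earlier theorems; the only points that warrant care are (i) justifying rank monotonicity under enlarging the edge set — i.e.\ that $R_c(\varphi)$ is a row-submatrix of $R_c^K(\varphi)$ after reindexing, which uses only that the $k$th row depends on the edge and the fixed timestamps $\overline{T}^i_{ij},\overline{T}^j_{ij}$ in the same way for both frameworks — and (ii) explicitly noting that Lemma~\ref{lem_rankinequality} is graph-independent so it may be applied to $K_n$. Both are routine, so the proof should be short.
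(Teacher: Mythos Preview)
Your proposal is correct and matches the approach the paper implicitly adopts: the paper does not write out a proof of Theorem~\ref{thm_infinitesimaltoglobal} but refers the reader to the analogous bearing-rigidity argument in \cite{zhao2015bearing}, which proceeds exactly as you outline---use the rank/null-space characterization of infinitesimal rigidity (Theorem~\ref{thm_ICR}), the row-submatrix inclusion $R_c(\varphi)\hookrightarrow R_c^K(\varphi)$ to get $\rank(R_c^K(\varphi))\ge 2n-2$, the universal bound of Lemma~\ref{lem_rankinequality} applied to $K_n$ for the reverse inequality, and then Theorem~\ref{thm_globalrigid} to conclude. Your two noted caveats (row-submatrix justification and graph-independence of Lemma~\ref{lem_rankinequality}) are the right ones to flag and are indeed routine.
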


\begin{thm}\label{thm_CRvariation}
An infinitesimally clock rigid framework can be uniquely determined up to a translation of clock offset $\beta$ and a scaling of the entire clock framework.
\end{thm}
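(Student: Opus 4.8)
The plan is to read ``uniquely determined up to'' through the null-space characterisations already established, so that essentially no new computation is required. Concretely, I want to show that if $(\mathcal{G},\varphi)$ is infinitesimally clock rigid and $(\mathcal{G},\varphi')$ is any clock framework clock equivalent to it, then $\varphi'$ is obtained from $\varphi$ by first scaling the whole configuration and then shifting the clock offset configuration, and conversely that every such transformation yields a clock-equivalent framework. The two workhorses are Theorem~\ref{thm_equivalencecondition}, which turns clock equivalence into the linear condition $R_c(\varphi)\varphi'=\allzero$, and Theorem~\ref{thm_ICR}, which under infinitesimal clock rigidity pins down $\Null(R_c(\varphi))$ exactly.

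First I would take a clock framework $(\mathcal{G},\varphi')$ that is clock equivalent to $(\mathcal{G},\varphi)$. By Theorem~\ref{thm_equivalencecondition} this is equivalent to $R_c(\varphi)\varphi'=\allzero$, i.e.\ $\varphi'\in\Null(R_c(\varphi))$. Since $(\mathcal{G},\varphi)$ is infinitesimally clock rigid, Theorem~\ref{thm_ICR}(c) gives $\Null(R_c(\varphi))=\Span{\allone_n\otimes[0,1]^T,\varphi}$, so there exist scalars $a,b\in\mathbb{R}$ with
\[
\varphi'=a\,\varphi+b\,(\allone_n\otimes[0,1]^T).
\]
Reading this block by block, $\varphi_i'=[\alpha_i',\beta_i']^T$ satisfies $\alpha_i'=a\alpha_i$ and $\beta_i'=a\beta_i+b$ for every $i$; equivalently, $\varphi'$ is the result of scaling the entire clock framework by $a$ (a common skew) and then translating the clock offset configuration by $b$. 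Because a valid clock configuration must have $\alpha_i'\in\mathbb{R}^+$ while $\alpha_i\in\mathbb{R}^+$, the scaling factor necessarily satisfies $a>0$; this is the only extra point to verify.

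Finally I would record the converse direction so that the ``up to'' is exact: for any $a,b\in\mathbb{R}$ set $\varphi'=a\varphi+b(\allone_n\otimes[0,1]^T)$; by Lemma~\ref{lem_S} and Lemma~\ref{lem_rankinequality} both $\varphi$ and $\allone_n\otimes[0,1]^T$ lie in $\Null(R_c(\varphi))$, hence $R_c(\varphi)\varphi'=\allzero$ and, by Theorem~\ref{thm_equivalencecondition}, $(\mathcal{G},\varphi')$ is clock equivalent to $(\mathcal{G},\varphi)$. Thus the clock-equivalence class of an infinitesimally clock rigid framework is precisely its orbit under offset translations and global scalings, which are exactly the (finite) trivial clock variations depicted in Fig.~\ref{fig_clock_trivial}. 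I do not expect a genuine obstacle here: the substantive content has been absorbed into Theorems~\ref{thm_equivalencecondition} and~\ref{thm_ICR}, and what remains is bookkeeping of the two-term decomposition plus the $a>0$ observation. If one wanted to stress that this conclusion holds for arbitrarily far-away frameworks rather than only nearby ones, one would note that clock equivalence (unlike an infinitesimal statement) is already a global relation, so no separate appeal to Theorem~\ref{thm_infinitesimaltoglobal} is strictly needed, although the result is of course consistent with it.
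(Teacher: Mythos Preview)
Your proposal is correct and follows essentially the same approach the paper intends: the paper does not give an explicit proof but refers to the analogous bearing rigidity result (Theorem~6 in \cite{zhao2015bearing}), whose proof likewise combines the equivalence condition (Theorem~\ref{thm_equivalencecondition}) with the null-space characterisation under infinitesimal rigidity (Theorem~\ref{thm_ICR}) to conclude that any equivalent configuration is an affine combination $a\varphi+b(\allone_n\otimes[0,1]^T)$. Your additional remark that $a>0$ is needed for $\varphi'$ to be a valid clock configuration, and your explicit converse direction, are both appropriate and make the argument self-contained.
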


Similar to bearing rigidity, the clock rigidity of a clock framework is also a generic property which depends on its graph rather than the clock configuration. However, bearing measurements in bearing rigidity are only determined by the position configuration, whereas timestamp measurements in clock rigidity are not uniquely determined for a given clock configuration; it also depends on the distance between sensor nodes and the sending timestamp. We define a vector pair $(d_{in},T_s)$ where $d_{in}=[d_1,...,d_m]^T$ denotes the inter-node distances and $T_s=[T_{(i,j)}^i,T_{(j,i)}^j,...]^T$ for all $\{v_i,v_j\}\in\mathcal{E}$ denotes the sending timestamps. Given $(d_{in},T_s)$, we can define a generically clock rigid graph and a generic clock configuration, then show the following relationship to infinitesimal clock rigidity. The proof follows a similar structure from \cite{zhao2017laman}. 

\begin{defn}\label{def_GCR}
Given $(d_{in},T_s)$, a graph $\mathcal{G}$ is \textbf{generically clock rigid} if there exists at least one clock configuration $\varphi$ such that $(\mathcal{G},\varphi)$ is infinitesimally clock rigid.
\end{defn}
\begin{defn}
A clock configuration $\varphi$ is \textbf{generic} for graph $\mathcal{G}$ if $(\mathcal{G},\varphi)$ is infinitesimally clock rigid.
\end{defn}
\begin{lem}
 Given $(d_{in},T_s)$, if $\mathcal{G}$ is generically clock rigid, then $(\mathcal{G},\varphi)$ is infinitesimally clock rigid for almost all $\varphi$.
\end{lem}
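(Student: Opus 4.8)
The plan is to mirror the standard genericity argument from distance/bearing rigidity: reduce the claim to a statement about the zero set of a polynomial (or analytic) function in the clock configuration, and then invoke the fact that such a zero set, if proper, has Lebesgue measure zero. First I would use Theorem~\ref{thm_ICR}(b): a clock framework $(\mathcal{G},\varphi)$ is infinitesimally clock rigid if and only if $\rank(R_c(\varphi)) = 2n-2$. Combined with Lemma~\ref{lem_rankinequality}, which always gives $\rank(R_c(\varphi)) \le 2n-2$, infinitesimal clock rigidity is equivalent to $\rank(R_c(\varphi)) = 2n-2$, i.e. to the \emph{maximality} of the rank of $R_c(\varphi)$ over all admissible $\varphi$.

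Next I would examine the entries of $R_c(\varphi) = S(\varphi)$. From (\ref{sij}), for a fixed $(d_{in}, T_s)$, the quantities $\overline{T}^i_{ij}$ and $\overline{T}^j_{ij}$ depend on $\varphi$ only through the receiving timestamps, which are determined by the clock model (\ref{OWRdist}); one must check that the relevant entries of $S(\varphi)$ are polynomial (or at least real-analytic) functions of $\varphi$. Granting this, $\rank(R_c(\varphi)) = 2n-2$ fails exactly when every $(2n-2)\times(2n-2)$ minor of $R_c(\varphi)$ vanishes. Each such minor is a polynomial in $\varphi$; call them $p_1,\dots,p_N$. If $\mathcal{G}$ is generically clock rigid, Definition~\ref{def_GCR} guarantees some $\varphi^*$ with $(\mathcal{G},\varphi^*)$ infinitesimally clock rigid, hence some minor $p_\ell$ with $p_\ell(\varphi^*) \ne 0$, so $p_\ell$ is not the zero polynomial. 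The set where $\rank(R_c(\varphi)) < 2n-2$ is contained in $\{\varphi : p_\ell(\varphi) = 0\}$, the zero set of a nontrivial polynomial, which has measure zero in $\mathbb{R}^{2n}$ (equivalently, its complement is open and dense). Intersecting with the open set $\{\varphi : \alpha_i > 0 \text{ for all } i\}$ — which is the physically meaningful domain and carries full measure on its own — the exceptional set remains measure zero, so $(\mathcal{G},\varphi)$ is infinitesimally clock rigid for almost all $\varphi$.

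The main obstacle I anticipate is the dependence of the timestamp quantities $\overline{T}^i_{ij}, \overline{T}^j_{ij}$ on $\varphi$. Unlike bearing rigidity, where the rigidity matrix is a clean rational function of the position configuration, here the measured receiving timestamps are themselves generated through the clock parameters via (\ref{OWRdist}), so one must carefully pin down for a \emph{fixed} $(d_{in}, T_s)$ what is held constant and what varies. One clean way to handle this: treat $(d_{in}, T_s)$ together with $\varphi$ as the free variables generating the receiving timestamps, so that the entries of $S(\varphi)$ become rational in $(\varphi, d_{in}, T_s)$; clearing denominators (the denominators being products of $\alpha_i > 0$, hence nonvanishing on the domain) makes the minors polynomial, and the rest of the argument goes through verbatim. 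If instead the intended reading fixes the sending timestamps $T_s$ and the distances and lets $\varphi$ vary, one should verify the minors are still real-analytic (not merely smooth) in $\varphi$, since the measure-zero conclusion for a proper zero set holds for real-analytic functions as well; this is the only point requiring genuine care, and it parallels exactly the corresponding step in \cite{zhao2017laman}.
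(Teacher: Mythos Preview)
Your proposal is correct and follows essentially the same route as the paper: both reduce infinitesimal clock rigidity to the rank condition $\rank(R_c(\varphi))=2n-2$, observe that the entries of $R_c(\varphi)$ are rational in $\varphi$ (with denominators that are products of the $\alpha_j>0$, coming from the expression $T^j_{(i,j)}(\varphi)=(d_{ij}/c+\beta_i-\beta_j+\alpha_iT^i_{(i,j)}+\alpha_jT^j_{(j,i)})/\alpha_j$), clear denominators so that the vanishing of all $(2n-2)$-minors becomes a polynomial system, and then use the existence of one infinitesimally rigid configuration to conclude the bad set is a proper algebraic set of measure zero. The ``main obstacle'' you flag is exactly the point the paper addresses by writing out the receiving-timestamp formula explicitly, so your anticipated resolution via clearing $\alpha_i$-denominators is precisely what is done.
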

\begin{proof}
Let $\Omega_c$ be the set of $\varphi$ where $\rank(R_c)<2n-2$. Suppose $g(\varphi)$ is the vector consisting of all the minors of $R_c$ of order $(2n-2)$. Then $\Omega_c$ is the set of solutions to $g(\varphi)=0$. By (\ref{OWRdist}), for node $j$ in any edge $\{v_i,v_j\}\in\mathcal{E}$, $T_{ij}^j(\varphi)=(d_{ij}/c+\beta_i-\beta_j+\alpha_iT_{(i,j)}^i+\alpha_jT_{(j,i)}^j)/\alpha_j$ and $\alpha_j>0$. Given $(d_{in},T_s)$, equation $g(\varphi)=0$ can be converted to a set of polynomial equations of $\varphi$. So $\Omega_c$ is an algebraic set and hence it is the entire space or it is of measure zero. According to Definition \ref{def_GCR}, there exists at least one $\varphi$ such that $(\mathcal{G},\varphi)$ is infinitesimally clock rigid, so $\Omega_c$ is not the entire space. Therefore, $\Omega_c$ is of measure zero and $(\mathcal{G},\varphi)$ is infinitesimally clock rigid for almost all $\varphi$.
\end{proof}

\section{Connection to Bearing Rigidity}\label{sec_connection}
The clock rigidity theory studies whether a clock framework can be uniquely determined by the inter-neighbor TOA timestamp measurements, which follows from research directions in distance and bearing rigidity theory \cite{anderson2008rigid,zhao2015bearing}. In this section, we establish the connection between the clock rigidity theory and the bearing rigidity theory and prove that a clock framework is infinitesimally clock rigid if and only its graph is generically bearing rigid in $\mathbb{R}^2$ with at least one redundant edge. 

To establish this connection, first we introduce the dummy variable $\gamma=[\gamma_1,...,\gamma_m]^T$ to rewrite the constraint (\ref{clock_constr}) for every $\{v_i,v_j\}\in\mathcal{E}$ with a corresponding fixed edge index $k$ in the following form
\begin{subequations}\label{seperate_constr}
\begin{align}
    &\overline{T}^i_{ij}\alpha_i+\beta_i-\gamma_k=0\\
    &\overline{T}^j_{ij}\alpha_j+\beta_j-\gamma_k=0.
\end{align}
\label{dummy}
\end{subequations}
Equation (\ref{seperate_constr}) can be also written as
\begin{subequations}
\begin{align}
    &[\overline{T}^i_{ij},1][\alpha_i,\beta_i-\gamma_k]^T=0\\
    &[\overline{T}^j_{ij},1][\alpha_j,\beta_j-\gamma_k]^T=0.
\end{align}
\end{subequations}
The vector $[\alpha_i,\beta_i-\gamma_k]^T$ and $[\alpha_j,\beta_j-\gamma_k]^T$ are orthogonal to the vector $[\overline{T}^i_{ij},1]^T$ and $[\overline{T}^j_{ij},1]^T$, respectively. Since $\alpha_i,\alpha_j>0$, for every $\{v_i,v_j\}\in \mathcal{E}$ we have
\begin{subequations}\label{bearing_constr}
\begin{align}
   &\frac{[\alpha_i,\beta_i-\gamma_k]^T}{\norm{[\alpha_i,\beta_i-\gamma_k]}}=\frac{[1,-\overline{T}^i_{ij}]^T}{\norm{[1,-\overline{T}^i_{ij}]}}\\
   &\frac{[\alpha_j,\beta_j-\gamma_k]^T}{\norm{[\alpha_j,\beta_j-\gamma_k]}}=\frac{[1,-\overline{T}^j_{ij}]^T}{\norm{[1,-\overline{T}^j_{ij}]}}.
\end{align}
\end{subequations}

\begin{figure}
  \begin{center}
  \includegraphics[scale=0.2]{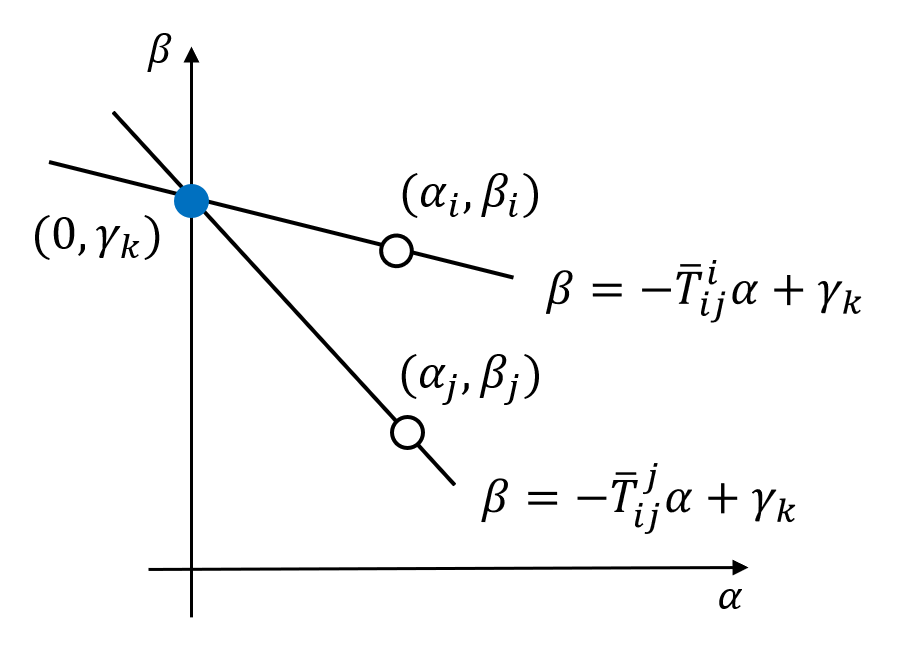}\\
  \caption{Clock constraint visualization. The constraint (\ref{seperate_constr}) can be comprehended as two straight lines with the same vertical intercept in a two-dimensional coordinate, passing through the points $(\alpha_i,\beta_i)$ and $(\alpha_j,\beta_j)$, respectively. Since the gradient of the straight line is constant, the constraint (\ref{bearing_constr}) follows.}
  \label{fig_clockconstr}
  \end{center}
\end{figure}

Fig. \ref{fig_clockconstr} visualizes the equivalence between constraint (\ref{seperate_constr}) and (\ref{bearing_constr}). We take the partial derivative of the left hand side of (\ref{bearing_constr}) with respect to $\zeta=[\varphi^T,\gamma^T]^T\in\mathbb{R}^{2n+m}$ for every $\{v_i,v_j\}\in \mathcal{E}$ and define the resulting Jacobian with respect to $\zeta$ as $S''(\zeta)\in\mathbb{R}^{4m\times (2n+m)}$. Then we can give the following lemma.

\begin{lem}\label{lem_RctoS''}
Given a clock framework $(\mathcal{G},\varphi)$ and a variation of the clock configuration $\delta\varphi$, $\delta\varphi\in\Null(R_c(\varphi))$ if and only if there exists a vector $\delta\gamma\in\mathbb{R}^m$ such that $\delta\zeta=[\delta\varphi^T,\delta\gamma^T]^T\in\Null(S''(\zeta))$. 
\end{lem}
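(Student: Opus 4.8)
The plan is to route the equivalence through the intermediate, \emph{unnormalized} system (\ref{seperate_constr}): I will show that $\Null(S''(\zeta))$ equals the null space of the Jacobian of (\ref{seperate_constr}), and that the $\delta\varphi$-component of that null space is exactly $\Null(R_c(\varphi))$. To set up, observe that since $(\mathcal{G},\varphi)$ is a clock framework the constraints (\ref{clock_constr}) hold, so the dummy vector $\gamma$ with $\gamma_k=\overline{T}^i_{ij}\alpha_i+\beta_i=\overline{T}^j_{ij}\alpha_j+\beta_j$ for each edge $k=\{v_i,v_j\}$ is well defined; this fixes $\zeta=[\varphi^T,\gamma^T]^T$ and makes (\ref{bearing_constr}) hold at $\zeta$, so $g_k^i:=[\alpha_i,\beta_i-\gamma_k]^T$ is a positive multiple of $[1,-\overline{T}^i_{ij}]^T$ (and $g_k^j:=[\alpha_j,\beta_j-\gamma_k]^T$ of $[1,-\overline{T}^j_{ij}]^T$).

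Next I introduce $\widetilde{S}(\zeta)\in\mathbb{R}^{2m\times(2n+m)}$, the Jacobian with respect to $\zeta$ of the left-hand side of (\ref{seperate_constr}); since that system is affine in $\zeta$, $\widetilde{S}(\zeta)$ is a constant matrix whose two rows for edge $k=\{v_i,v_j\}$ carry $\overline{T}^i_{ij}$ in the $\alpha_i$ column, $1$ in the $\beta_i$ column, and $-1$ in the $\gamma_k$ column, and analogously for $j$. The central step is then $\Null(S''(\zeta))=\Null(\widetilde{S}(\zeta))$. For this I use that the Jacobian of a normalized map $g(\zeta)\mapsto g(\zeta)/\norm{g(\zeta)}$ equals $\tfrac{1}{\norm{g}}P_g\,(\partial g/\partial\zeta)$, where $P_g:=I_2-gg^T/\norm{g}^2$ is the orthogonal projector onto $\mathrm{span}(g)^\perp$, with kernel $\mathrm{span}(g)$. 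Because $g_k^i$ is parallel to $[1,-\overline{T}^i_{ij}]^T$ and $[1,-\overline{T}^i_{ij}]^T\perp[\overline{T}^i_{ij},1]^T$, we get $P_{g_k^i}v=0$ iff $v\in\mathrm{span}(g_k^i)$ iff $[\overline{T}^i_{ij},1]v=0$, for every $v\in\mathbb{R}^2$. Taking $v$ to be the corresponding $2$-vector block of $(\partial g/\partial\zeta)\,\delta\zeta$ and using $\norm{g}\neq0$, this shows that the block of $S''(\zeta)$ coming from the $i$-end constraint of edge $k$ annihilates $\delta\zeta$ precisely when the matching row of $\widetilde{S}(\zeta)$ does; the same holds for the $j$-end block, and intersecting over all edges gives the null-space equality.

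Finally I connect $\Null(\widetilde{S}(\zeta))$ with $\Null(R_c(\varphi))=\Null(S(\varphi))$ (Lemma \ref{lem_S}). If $\delta\zeta=[\delta\varphi^T,\delta\gamma^T]^T\in\Null(\widetilde{S}(\zeta))$, then $\overline{T}^i_{ij}\delta\alpha_i+\delta\beta_i-\delta\gamma_k=0$ and $\overline{T}^j_{ij}\delta\alpha_j+\delta\beta_j-\delta\gamma_k=0$ for each edge; subtracting eliminates $\delta\gamma_k$ and yields $s_k(\varphi)^T\delta\varphi=0$ for all $k$, i.e.\ $\delta\varphi\in\Null(R_c(\varphi))$. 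Conversely, given $\delta\varphi\in\Null(R_c(\varphi))$, set $\delta\gamma_k:=\overline{T}^i_{ij}\delta\alpha_i+\delta\beta_i$ for each edge $k=\{v_i,v_j\}$; the identity $s_k(\varphi)^T\delta\varphi=0$ forces $\delta\gamma_k=\overline{T}^j_{ij}\delta\alpha_j+\delta\beta_j$ as well, so $\delta\zeta=[\delta\varphi^T,\delta\gamma^T]^T\in\Null(\widetilde{S}(\zeta))$. Chaining the two steps proves both directions. I expect the middle step, $\Null(S''(\zeta))=\Null(\widetilde{S}(\zeta))$, to be the main obstacle: it is the only place the normalization in (\ref{bearing_constr}) matters, and it relies on the constraints holding at $\zeta$, so that each $g_k^i$ points along $[1,-\overline{T}^i_{ij}]^T$ and the projector condition $P_{g_k^i}v=0$ collapses to the single scalar equation of the unnormalized constraint; the remainder is bookkeeping with the block structure of the two matrices.
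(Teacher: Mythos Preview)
Your proposal is correct and follows essentially the same route as the paper: both factor the equivalence through the unnormalized intermediate system (your $\widetilde{S}$ is the paper's $S'$), first arguing $\Null(S''(\zeta))=\Null(S'(\zeta))$ and then eliminating the $\gamma$-block to recover $\Null(R_c(\varphi))$. Your projector computation for the normalization step is more careful than the paper's one-line appeal to ``equivalence'' of (\ref{seperate_constr}) and (\ref{bearing_constr}), and you do the elimination by explicit subtraction/construction where the paper instead row-reduces $S'$ to the block form $\begin{bmatrix} S(\varphi) & \allzero \\ X & -I_m \end{bmatrix}$; these are cosmetic differences on the same skeleton.
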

\begin{proof}
By writing (\ref{seperate_constr}) for every $\{v_i,v_j\}\in \mathcal{E}$ as a linear matrix equation, we have
\begin{align}
   S'(\zeta)\zeta=\allzero,
\end{align}
where $S'(\zeta)\in\mathbb{R}^{2m\times (2n+m)}$.

We define extended clock function  $F_1:\mathbb{R}^{2n+m}\rightarrow\mathbb{R}^{2m}$ as
\begin{align}
   \EF{\zeta}{\zeta}\triangleq S'(\zeta)\zeta.
\end{align}
It is clear that the Jacobian with respect to $\zeta$ of the extended clock function is $S'(\zeta)$. 

Equations (\ref{seperate_constr}) and (\ref{bearing_constr}) are equivalent, so the infinitesimal variations which preserve the equality in (\ref{bearing_constr}) also preserve the equality in (\ref{seperate_constr}), i.e., $\Null(S''(\zeta))=\Null(S'(\zeta))$.

By elementary row operations which preserve the null space of matrix $S'(\zeta)$, the matrix $S'(\zeta)$ can be written in the following form:
\begin{align}
    T_2T_1S'(\zeta)=\begin{bmatrix}
        S(\varphi) & \allzero\\
        X & -I_m
    \end{bmatrix},
\end{align}
where $T_1$ is a row switching elementary matrix so that the resulting matrix $T_1S'$ has the $k$th row and $(m+k)$th row corresponding to the $k$th edge. Further, $T_2$ is a row addition elementary matrix which has the form
\begin{align*}
    T_2=\begin{bmatrix}
        I_m & -I_m\\
        \allzero & I_m
    \end{bmatrix}.
\end{align*}
Each row of the matrix $X$ corresponds to only one of the constraints (\ref{seperate_constr}) of an edge $\{v_i,v_j\}\in\mathcal{E}$, e.g., 
\begin{align}
    \Bigg[\allzero^T\quad\underbrace{0\quad0}_{v_i}\quad \allzero^T\quad\underbrace{\overline{T}^j_{ij}\quad1}_{v_j}\quad \allzero^T \underbrace{-1}_{\gamma_k}\quad \allzero^T\Bigg].
\end{align}
Since the identity matrix $I_m$ is full rank and $R_c(\varphi)=S(\varphi)$, we have $R_c(\varphi)\delta\varphi=0$ if and only if there exists $\delta\zeta =[\delta\varphi^T,\delta\gamma^T]^T$ such that $S''(\zeta)\delta\zeta=0$.

\end{proof}

It can be observed that the left hand side of (\ref{bearing_constr}) is the expression of the bearing  between the point $(\alpha_i,\beta_i)$ (or $(\alpha_j,\beta_j)$) and $(0,\gamma_k)$ and the right hand side is a constant. Therefore, the Jacobian $S''(\zeta)$ actually follows the definition of bearing rigidity matrix in a new clock framework, whose graph is constructed from the original graph and reflects the underlying bearing property of the clock constraints. We next define this new clock framework and explore how it relates to the original framework.

Define a new graph $\mathcal{G}'=(\mathcal{V}\cup\mathcal{V}',\mathcal{E}_1\cup\mathcal{E}_2)$, where $\mathcal{V}$ is constructed according to the edge in $\mathcal{G}$, i.e., $\mathcal{V}'=\{v'_1,...,v'_m\}$ and $v'_k$ corresponds to the $k$th edge $\{v_i,v_j\}\in\mathcal{E}$. The edge set $\mathcal{E}_1$ is constructed so that for every $\{v_i,v_j\}\in\mathcal{E}$, we have $\{v_i,v'_k\},\{v_j,v'_k\}\in\mathcal{E}_1$. The edge set $\mathcal{E}_2$ is constructed so that $(\mathcal{V}',\mathcal{E}_2)$ is a spanning tree of the line graph of $\mathcal{G}$, i.e. the adjacent vertices in $(\mathcal{V}',\mathcal{E}_2)$ imply the adjacent edges in $\mathcal{G}$. The line graph of any connected graph is connected  \cite{chartrand1969connectivity}, so there always exists a spanning tree. We denote $\eta=[\eta_1,...\eta_m]^T$ where $\eta_k=[0,\gamma_k]^T$ and define a clock framework $(\mathcal{G}',[\varphi^T,\eta^T]^T)$. So the configuration $[\varphi^T,\eta^T]^T$ provides a mapping from vertex $v_i\in\mathcal{V}$ to the points $\varphi_i$ and from $v'_k\in\mathcal{V}'$ to the points $\eta_k$. Given $(d_{in},T_s)$, $(\mathcal{G}',[\varphi^T,\eta^T]^T)$ can be uniquely decided by $(\mathcal{G},\varphi)$.

\begin{figure}
  \begin{center}
  \includegraphics[scale=0.27]{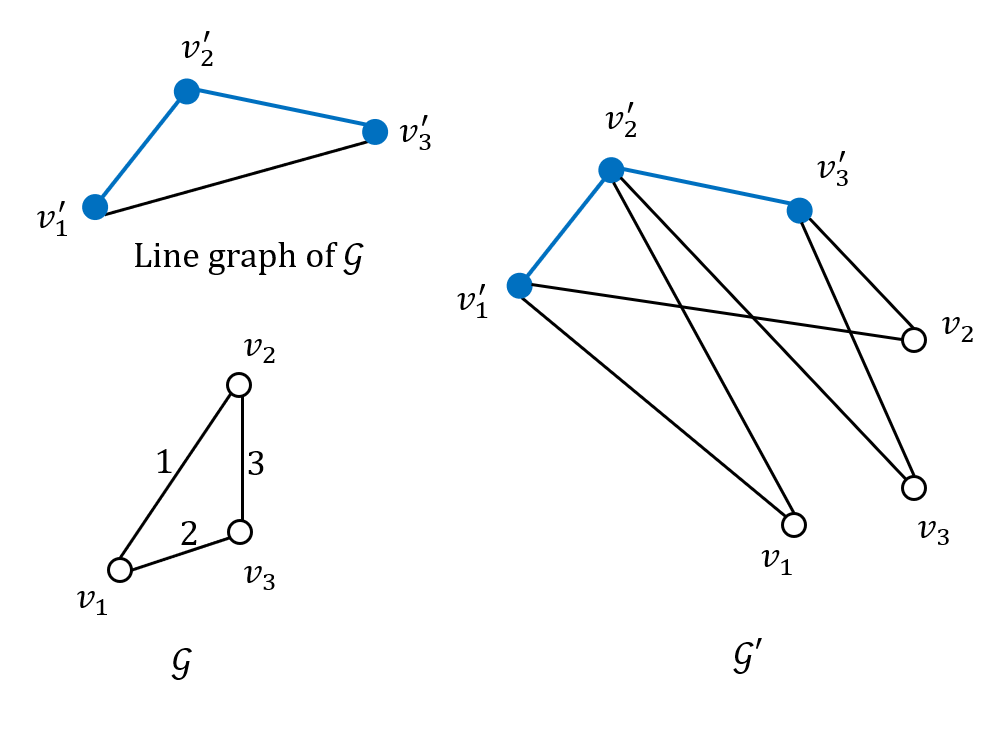}\\
  \caption{Original graph $\mathcal{G}$ to new graph $\mathcal{G}'$. The solid blue vertices correspond to $\mathcal{V}'$ and the blue edges correspond to $\mathcal{E}_2$, which is a spanning tree of the line graph of $\mathcal{G}$.}
  \label{fig_GtoG'}
  \end{center}
\end{figure}
\begin{figure}
  \begin{center}
  \includegraphics[scale=0.18]{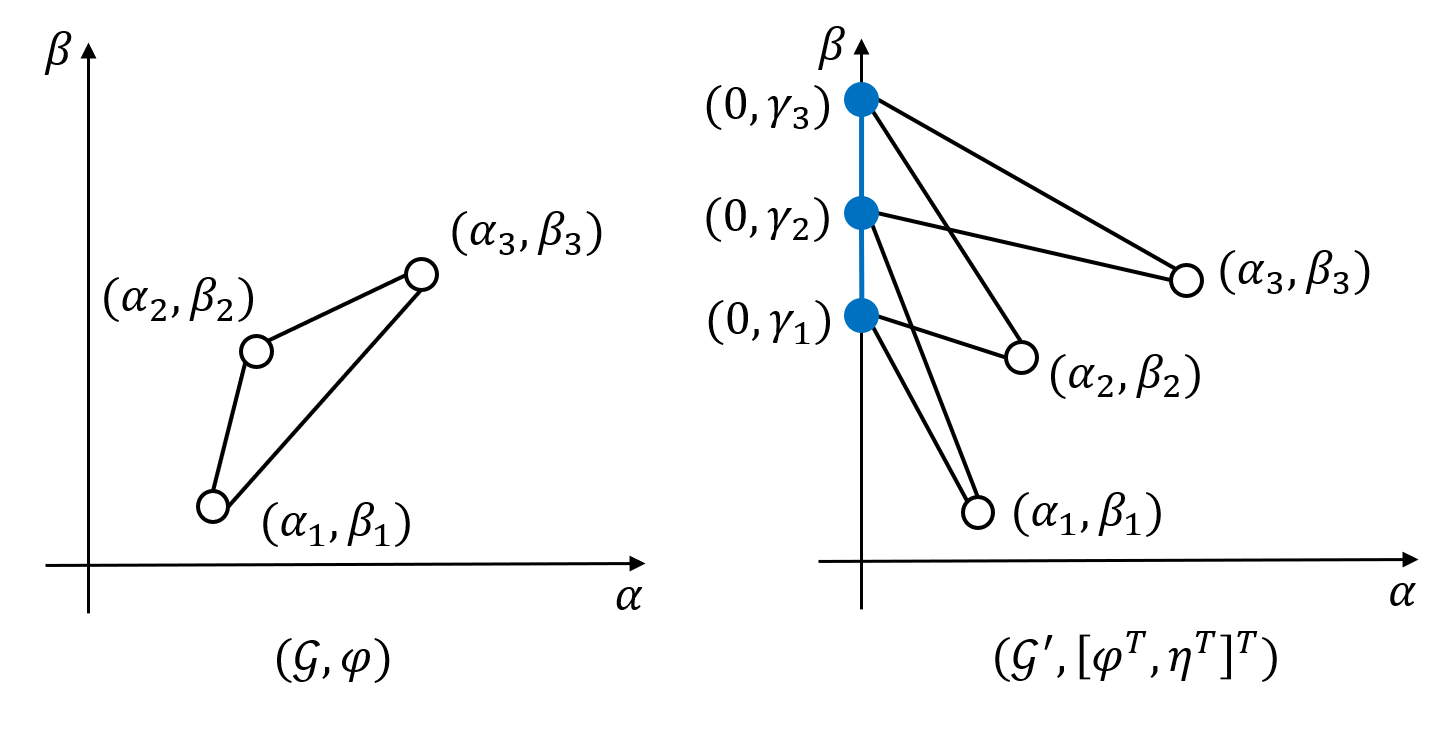}\\
  \caption{Original clock framework $(\mathcal{G},\varphi)$ and new clock framework $(\mathcal{G}',[\varphi^T,\eta^T]^T)$. The solid blue vertices correspond to $\mathcal{V}'$ and the blue edges correspond to $\mathcal{E}_2$.}
  \label{fig_twoframework}
  \end{center}
\end{figure}

Now we recall some concepts from bearing rigidity theory. A framework $(\mathcal{G},p)$ is a combination of graph $\mathcal{G}$ and a configuration in $d$-dimensional space $p=[p_1^T,...,p_n^T]$ where $p_i\in\mathbb{R}^d$. A framework $(\mathcal{G},p)$ is infinitesimally bearing rigid if all the infinitesimal motions are trivial, i.e., the same up to translation and scaling \cite{zhao2015bearing}. 

\begin{defn}[\!\!\cite{zhao2017laman}]\label{def_GBR}
 A graph $\mathcal{G}$ is \textbf{generically bearing rigid} in $\mathbb{R}^d$ if there exists at least one configuration $p$ in $\mathbb{R}^d$ such that $(\mathcal{G},p)$ is infinitesimally bearing rigid.
\end{defn}
\begin{lem}[\!\!\cite{zhao2017laman}]
 If $\mathcal{G}$ is generically bearing rigid in $\mathbb{R}^d$, then $(\mathcal{G},p)$ is infinitesimally bearing rigid for almost all $p$ in $\mathbb{R}^d$.
\end{lem}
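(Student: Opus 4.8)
The plan is to mirror the argument already used for the analogous clock-rigidity lemma earlier in the paper (the one showing that the set $\Omega_c$ has measure zero), an argument also carried out in \cite{zhao2017laman}: infinitesimal bearing rigidity is a maximal-rank condition on the bearing rigidity matrix, and the locus where that rank drops is cut out by polynomial equations, hence is either all of configuration space or of Lebesgue measure zero.

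First I would recall from bearing rigidity theory \cite{zhao2015bearing} that for a framework $(\mathcal{G},p)$ in $\mathbb{R}^d$ the space of trivial infinitesimal motions (translations together with a uniform scaling) always has dimension $d+1$ and always lies in $\Null(R_B(p))$, so $\rank(R_B(p))\leq dn-d-1$, with equality precisely when $(\mathcal{G},p)$ is infinitesimally bearing rigid. Write $r=dn-d-1$, and let $U\subseteq\mathbb{R}^{dn}$ be the set of configurations with $p_i\neq p_j$ for every $\{v_i,v_j\}\in\mathcal{E}$, so that all edge bearings, and hence $R_B(p)$, are well defined on $U$; the complement $\mathbb{R}^{dn}\setminus U$ is the zero set of the nonzero polynomial $\prod_{\{v_i,v_j\}\in\mathcal{E}}\norm{p_i-p_j}^2$ and is therefore of measure zero.

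Next I would note that each block of $R_B(p)$ is, up to sign, $(\norm{p_i-p_j}^2 I_d-(p_i-p_j)(p_i-p_j)^T)/\norm{p_i-p_j}^3$, so every entry of $R_B(p)$ is a rational function of $p$ whose denominator is a power of $\norm{p_i-p_j}$ for the corresponding edge. Hence each $r\times r$ minor of $R_B(p)$ is rational with denominator dividing $\prod_{\{v_i,v_j\}\in\mathcal{E}}\norm{p_i-p_j}^{N}$ for a suitable integer $N$; collecting all such minors, each multiplied through by this product, yields a polynomial map $g\colon\mathbb{R}^{dn}\to\mathbb{R}^{\ell}$ with the property that, for $p\in U$, $\rank(R_B(p))<r$ if and only if $g(p)=0$. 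Thus $\Omega_B\triangleq\{p:g(p)=0\}$ is an algebraic set, so it is either all of $\mathbb{R}^{dn}$ or of measure zero.

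Finally I would invoke Definition \ref{def_GBR}: generic bearing rigidity of $\mathcal{G}$ provides a configuration $p^\star\in U$ for which $(\mathcal{G},p^\star)$ is infinitesimally bearing rigid, i.e. $\rank(R_B(p^\star))=r$, whence $g(p^\star)\neq 0$ and $\Omega_B\neq\mathbb{R}^{dn}$. Therefore $\Omega_B$ has measure zero, and the set of $p$ for which $(\mathcal{G},p)$ fails to be infinitesimally bearing rigid is contained in $\Omega_B\cup(\mathbb{R}^{dn}\setminus U)$, a union of two measure-zero sets; hence $(\mathcal{G},p)$ is infinitesimally bearing rigid for almost all $p$ in $\mathbb{R}^d$. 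I do not expect a genuine obstacle: the only point requiring care is the denominator-clearing that turns the rank-deficiency set into the zero set of honest polynomials rather than of rational functions, and this is exactly the manipulation performed in the clock-rigidity lemma proved above, where $T^j_{ij}(\varphi)$ played the role of the bearing blocks here.
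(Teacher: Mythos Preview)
Your proposal is correct and is precisely the standard algebraic-set argument from \cite{zhao2017laman}; the paper itself does not supply a proof of this lemma (it is cited as a known result), but your argument mirrors exactly the proof the paper gives for the analogous clock-rigidity lemma, with the bearing blocks $(\norm{p_i-p_j}^2 I_d-(p_i-p_j)(p_i-p_j)^T)/\norm{p_i-p_j}^3$ playing the role that $\overline{T}^j_{ij}(\varphi)$ played there. The denominator-clearing step you flag is indeed the only subtlety, and you have handled it correctly.
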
 

Now we are ready to prove the relation between graph $\mathcal{G}$ and $\mathcal{G}'$. First we give the definition of redundant edge.
\begin{defn}
An edge in a generically rigid graph $\mathcal{G}$ is a \textbf{redundant edge} if $\mathcal{G}$ is still generically rigid after removing this edge.
\end{defn}

The following lemma gives the necessary and sufficient condition between the bearing rigidity of graph $\mathcal{G}$ and $\mathcal{G'}$.

\begin{lem}\label{lem_GtoG'}
The graph $\mathcal{G}'$ is generically bearing rigid in $\mathbb{R}^2$ if and only if $\mathcal{G}$ is generically bearing rigid in $\mathbb{R}^2$ with at least one redundant edge.
\end{lem}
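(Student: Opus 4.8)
The plan is to move the whole question into the two–dimensional generic rigidity matroid and then read off the rigidity of $\mathcal{G}'$ from that of $\mathcal{G}$ together with the spanning–tree structure on $\mathcal{V}'$. I would first record three reductions. (a) In $\mathbb{R}^2$ a displacement preserves all edge lengths of $(\mathcal{G},p)$ iff its image under $I_n\otimes J^1_2$ preserves all edge bearings; so the bearing and distance infinitesimal–motion spaces have equal dimension, the generic bearing and distance rigidity matroids of a graph coincide, and generic bearing rigidity in $\mathbb{R}^2$ is characterised by Laman's condition; in particular a graph $H$ is generically rigid in $\mathbb{R}^2$ iff it contains a spanning $(2,3)$-tight subgraph. (b) A generically rigid graph has a redundant edge iff it has at least $2n-2$ edges (a basis of its $2$D rigidity matroid has $2n-3$ edges, and any edge outside a basis is redundant). (c) $\mathcal{G}'$ has $n+m$ vertices and $|\mathcal{E}_1|+|\mathcal{E}_2|=2m+(m-1)=3m-1$ edges, so generic rigidity of $\mathcal{G}'$ forces $3m-1\ge 2(n+m)-3$, i.e. $m\ge 2n-2$. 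Hence it suffices to show: $\mathcal{G}'$ is generically rigid in $\mathbb{R}^2$ iff $\mathcal{G}$ is generically rigid in $\mathbb{R}^2$ and $m\ge 2n-2$. (I may also assume $\mathcal{G}$ has minimum degree $\ge 2$: a vertex of degree $\le 1$ in $\mathcal{G}$ is one in $\mathcal{G}'$ as well, and makes both sides of the equivalence fail.)

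For sufficiency, suppose $\mathcal{G}$ has a spanning rigid subgraph $\mathcal{H}=(\mathcal{V},\mathcal{F})$ with $|\mathcal{F}|=2n-2$. I would exhibit the spanning subgraph $\mathcal{H}'$ of $\mathcal{G}'$ that retains all of $\mathcal{E}_2$, both $\mathcal{E}_1$-edges incident to $v'_k$ for each $e_k\in\mathcal{F}$, and exactly one $\mathcal{E}_1$-edge incident to $v'_k$ for each $e_k\in\mathcal{E}\setminus\mathcal{F}$ (the choice of which one to delete fixed along an ordering of $\mathcal{E}\setminus\mathcal{F}$ that respects the tree $\mathcal{E}_2$, so that $\mathcal{H}'$ stays connected). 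Then $|\mathcal{E}(\mathcal{H}')|=3m-1-(m-(2n-2))=2(n+m)-3$, so it suffices to prove $\mathcal{H}'$ is $(2,3)$-sparse; being then $(2,3)$-tight and spanning, it certifies that $\mathcal{G}'$ is generically rigid.

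For necessity, the edge count in (c) already gives $m\ge 2n-2$; to obtain that $\mathcal{G}$ is rigid I would argue the contrapositive. This is also where it helps to look at $\mathcal{G}'$ at a configuration directly: placing $\mathcal{V}$ at a generic $p$ and each $v'_k$ at a generic $q_k$ (a generic configuration of $\mathcal{G}'$), the two bearing constraints of the $\mathcal{E}_1$-edges at $v'_k$ pin $\delta q_k$ as a linear function of $(\delta p_i,\delta p_j)$ since $q_k,p_i,p_j$ are not collinear, so each of the $m-1$ tree edges $\tau\in\mathcal{E}_2$ becomes a scalar functional $L_\tau$ on $\delta p\in\mathbb{R}^{2n}$, the infinitesimal motion space of $\mathcal{G}'$ is $\bigcap_{\tau}\Null(L_\tau)$, every $L_\tau$ annihilates the $3$-dimensional trivial space $\text{span}\{\allone_n\otimes[1,0]^T,\allone_n\otimes[0,1]^T,p\}$, and $\mathcal{G}'$ is generically rigid iff $\rank\{L_\tau\}_{\tau\in\mathcal{E}_2}=2n-3$ --- which needs both $m-1\ge 2n-3$ and that $\mathcal{G}$ not be too flexible. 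To convert "$\mathcal{G}$ not rigid" into "$\rank\{L_\tau\}<2n-3$" --- equivalently into a vertex subset $U$ of $\mathcal{G}'$ whose induced edge count exceeds $2|U|-3$ --- I would take a minimal edge set of $\mathcal{G}$ witnessing its flexibility and inflate it through the subdivision vertices and the tree $\mathcal{E}_2$ into a corresponding witness in $\mathcal{G}'$.

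The main obstacle is the $(2,3)$-sparsity check for $\mathcal{H}'$ in the sufficiency direction (and the mirror-image witness count in necessity). For a vertex subset $U=U_1\cup U_2$ with $U_1\subseteq\mathcal{V}$ and $U_2\subseteq\mathcal{V}'$, the delicate regime is $|U_1|$ small with $|U_2|$ large: the edges of $\mathcal{H}'[U]$ are at most the $\mathcal{E}_1$-edges from $U_2$ into $U_1$ plus $|\mathcal{E}_2[U_2]|\le |U_2|-1$, and to squeeze this below $2|U|-3$ one must simultaneously use the $(2,3)$-sparsity of a minimally rigid core of $\mathcal{H}$ (controlling how many $v'_k\in U_2$ can have both endpoints of $e_k$ inside $U_1$), the acyclicity of the tree $\mathcal{E}_2$, and the fact that $v'_k$ retains two $\mathcal{V}$-neighbours only when $e_k\in\mathcal{F}$. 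This is precisely where the hypotheses "$\mathcal{E}_2$ is a spanning tree" (the fewest extra edges that still connect all the $v'_k$, matching the $2(n+m)-3$ count exactly) and "of the line graph of $\mathcal{G}$" (so each $\mathcal{E}_2$-edge, hence each $L_\tau$, touches only the three $\mathcal{V}$-vertices incident to a pair of adjacent edges of $\mathcal{G}$, keeping the accounting local) are used in an essential way; getting these inequalities to close is the step I expect to be hardest.
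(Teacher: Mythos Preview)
Your plan is viable in principle but takes a genuinely different---and much harder---route than the paper. The paper observes that the passage from $\mathcal{G}$ to $\mathcal{G}'$ can be realised entirely by Henneberg moves in $\mathbb{R}^2$: first remove one redundant edge $\{v_p,v_q\}$ and perform a single \emph{vertex addition} (new vertex $v'_{k_1}$ joined to $v_p,v_q$), then process the remaining edges of $\mathcal{G}$ one by one, each time as an \emph{edge splitting} (remove $\{v_i,v_j\}$, add $v'_k$ with the two $\mathcal{E}_1$-edges plus one $\mathcal{E}_2$-edge to an already-created $v'_{k_\ell}$, the adjacency being guaranteed by walking along the spanning tree of the line graph). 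Since vertex addition and edge splitting preserve generic bearing rigidity, and their inverses do as well, both directions of the equivalence follow in one stroke with no counting at all.

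By contrast, your route through Laman $(2,3)$-sparsity forces you to prove a subset inequality for every $U=U_1\cup U_2$, and---as you yourself flag---the mixed regime with few $U_1$ and many $U_2$ is delicate; the necessity direction via ``inflating a witness'' is even less developed. None of this is wrong, but it reinvents from scratch what the Henneberg machinery gives for free: the spanning-tree-of-the-line-graph hypothesis is exactly what lets each new $v'_k$ pick up a third edge to a previously created $v'_{k_\ell}$, so that every step after the first is an edge split. Recasting your construction in that language collapses the whole sparsity verification into a one-line appeal to the rigidity preservation of Henneberg operations.
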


\begin{proof}
The Henneberg operation is a useful method for adding new vertices into a graph while preserving its rigidity. For a graph $\mathcal{G}=(\mathcal{V},\mathcal{E})$, the vertex addition operation in $\mathbb{R}^d$ adds a new vertex with $d$ new incident edges. The edge splitting operation in $\mathbb{R}^d$ removes an existing edge and adds a new vertex with $d+1$ new incident edges. It is known that both operations preserve the bearing rigidity of graphs in $\mathbb{R}^d$ \cite{zhao2017laman}.

Suppose that $\mathcal{G}$ is generically bearing rigid. We denote a spanning tree of the line graph of $\mathcal{G}$ as $\mathcal{G}_{s}$, so the $k$th vertex of $\mathcal{G}_{s}$ corresponds to the $k$th edge of $\mathcal{G}$. Define a set $\mathcal{M}=\emptyset$. Then we do the following steps to get a graph $\mathcal{G}'$ from $\mathcal{G}$:

\textbf{Step 1}: Remove a redundant edge $k_1$ corresponding to $\{v_p,v_q\}$ in $\mathcal{G}$. By definition, the resulting graph is still generically bearing rigid. Then, add a new vertex $v'_{k_1}$ with two new edges $\{v_p,v'_{k_1}\}$ and $\{v_q,v'_{k_1}\}$ to the graph $\mathcal{G}$. Add the integer $k_1$ into the set $\mathcal{M}$, i.e.,  $\mathcal{M}=\{k_1\}$. \textbf{(vertex addition)}

\textbf{Step 2}: Select a remaining edge $k$ corresponding to $\{v_i,v_j\}\in \mathcal{E}$ where the $k$th vertex in $\mathcal{G}_{s}$ is adjacent to the $k_\ell$th vertex and $k_\ell\in\mathcal{M}$. Remove the edge $k$ in $\mathcal{G}$ and add a new vertex $v'_k$.  Then add three new edges $\{v_i,v'_{k}\},\{v_j,v'_k\},\{v'_{k},v'_{k_\ell}\}$, where the first two edge are in $\mathcal{E}_1$ and the last one is in $\mathcal{E}_2$. Add the integer $k$ into the set $\mathcal{M}$. Repeat this step for the remaining edges in $\mathcal{E}$ until all the edges in $\mathcal{E}$ are removed. \textbf{(edge splitting)}

Since all of the above steps are rigidity-preserving, $\mathcal{G}'$ is generically bearing rigid. The corresponding deconstruction steps are also rigidity-preserving, so the proof is reversible.
\end{proof}

\begin{figure}[t]
  \begin{center}
  \includegraphics[scale=0.3]{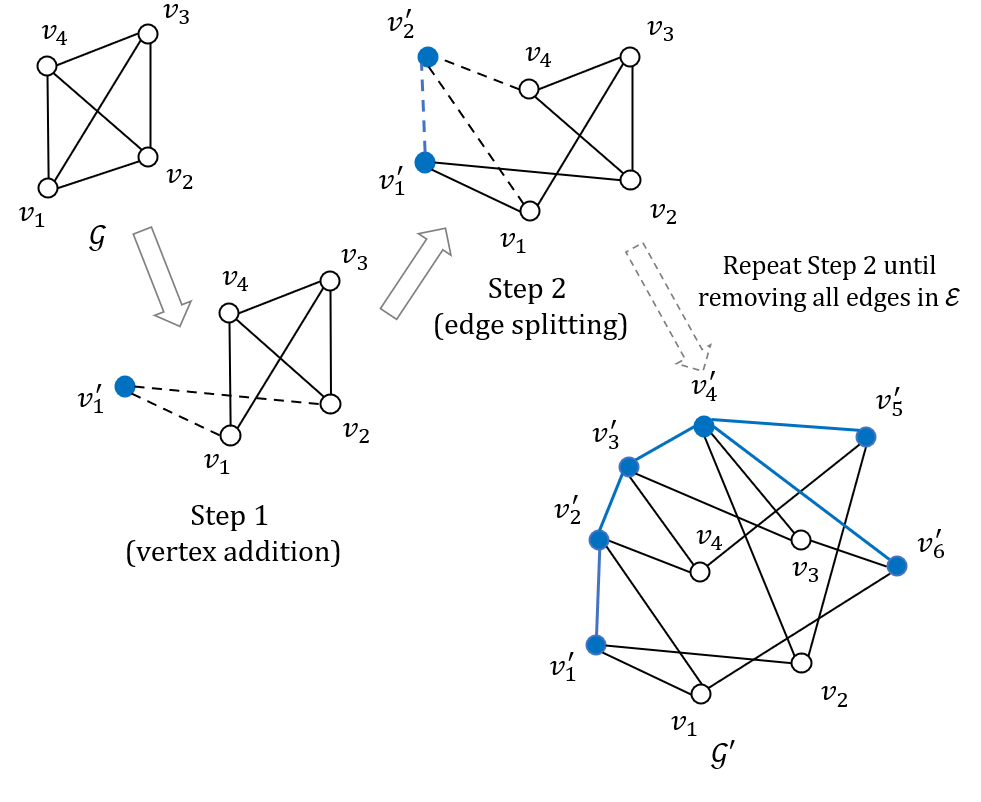}\\
  \caption{Illustration of Henneberg operation in Lemma \ref{lem_GtoG'}. Graph $\mathcal{G}$ is generically bearing rigid in $\mathbb{R}^2$ with one redundant edge, so the resulting graph $\mathcal{G}'$ is generically bearing rigid. Dashed blue edges in $\mathcal{G}'$ correspond to $\mathcal{E}_2$.}
  \label{fig_Henneberg}
  \end{center}
\end{figure}

Lemma \ref{lem_GtoG'} establishes the rigidity relation between the original graph $\mathcal{G}$ and the new graph $\mathcal{G}'$. Recall, the Jacobian of the left hand side of (\ref{bearing_constr}) with respect to $\zeta$ is $S''(\zeta)$. Next we will prove the relation between the bearing rigidity of the new clock framework $(\mathcal{G},[\varphi^T,\eta^T]^T)$ and the rank of the matrix $S''(\zeta)$.

Before proceeding, we need to introduce Assumption \ref{assum2} (Device assumption), which can be trivially satisfied in a typical UWB implementation.

\begin{subtheorem}{assumption} \label{assum2}
\begin{assumption}[Single antenna]\label{assum_antenna}
A UWB sensor has only a single antenna, so it can not receive more than one message at a time, i.e., there exists no $T^i_{(a,i)}=T^i_{(b,i)}$ for all $a\neq b$.
\end{assumption}
\begin{assumption}[Broadcast scheme] \label{assum_broadcast}
Every node in the UWB sensor network broadcasts once in one round of measurements, i.e., the sending timestamps $T_{(i,j)}^i$ at node $i$ are the same for all node $j$ in the neighborhood of node $i$.
\end{assumption}
\end{subtheorem}

Assumption \ref{assum_antenna} is satisfied by the typical implementation of UWB sensors and Assumption \ref{assum_broadcast} can be trivially satisfied by the communication protocol design of the UWB sensor networks, which also reduces the communication complexity of the network. Assumption \ref{assum2} provides a useful inequality as shown in Lemma \ref{lem_broadcast}.

\begin{lem}\label{lem_broadcast}
 Under Assumption \ref{assum2}, for a graph $\mathcal{G}'=(\mathcal{V}\cup\mathcal{V}',\mathcal{E}_1\cup\mathcal{E}_2)$ defined above, $\gamma_k\neq\gamma_{\ell}$ for any $\{v'_k,v'_\ell\}\in\mathcal{E}_2$.
\end{lem}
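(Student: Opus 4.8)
The plan is to unwind the definition of $\mathcal{E}_2$ and then apply the two device assumptions to the explicit formula for $\gamma_k$. Recall that $(\mathcal{V}',\mathcal{E}_2)$ is a spanning tree of the line graph of $\mathcal{G}$, so an edge $\{v'_k,v'_\ell\}\in\mathcal{E}_2$ forces the $k$th and $\ell$th edges of $\mathcal{G}$ to be adjacent, i.e., to share a common vertex. First I would fix such a shared vertex $v_i$ and write the $k$th edge as $\{v_i,v_j\}$ and the $\ell$th edge as $\{v_i,v_p\}$. The only structural fact I need here is that $j\neq p$: two distinct edges of the simple graph $\mathcal{G}$ that both contain $v_i$ must have distinct other endpoints. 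This $j\neq p$ is exactly what will let me invoke the single-antenna assumption later.

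Next I would read off the value of the dummy variables from the defining equations~(\ref{dummy}). Evaluating the two scalar constraints associated to edges $k$ and $\ell$ at the vertex $v_i$ gives $\gamma_k=\overline{T}^i_{ij}\alpha_i+\beta_i$ and $\gamma_\ell=\overline{T}^i_{ip}\alpha_i+\beta_i$, and hence
\begin{align*}
    \gamma_k-\gamma_\ell=\left(\overline{T}^i_{ij}-\overline{T}^i_{ip}\right)\alpha_i
    =\frac{\alpha_i}{2}\left(T^i_{(i,j)}+T^i_{(j,i)}-T^i_{(i,p)}-T^i_{(p,i)}\right).
\end{align*}
Now the Broadcast scheme (Assumption~\ref{assum_broadcast}) gives $T^i_{(i,j)}=T^i_{(i,p)}$, since node $i$ broadcasts a single sending timestamp to all its neighbors, so the sending terms cancel and $\gamma_k-\gamma_\ell=\tfrac{\alpha_i}{2}\big(T^i_{(j,i)}-T^i_{(p,i)}\big)$.

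To finish, I would apply the Single-antenna assumption (Assumption~\ref{assum_antenna}): because $j\neq p$, the receiving timestamps $T^i_{(j,i)}$ and $T^i_{(p,i)}$ at node $i$ are distinct, so the bracket is nonzero; combined with $\alpha_i>0$ this yields $\gamma_k\neq\gamma_\ell$. I do not expect any real obstacle in this argument — it is essentially a bookkeeping exercise — but the one point that must be stated carefully is the reduction in the first paragraph: one has to justify that the common vertex $v_i$ is unique (or at least pick one consistently) and that the two ``far'' endpoints $v_j,v_p$ are genuinely different vertices, since that distinctness is precisely the hypothesis $a\neq b$ under which Assumption~\ref{assum_antenna} speaks. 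If the line graph were taken over a multigraph this step would fail, so it is worth a sentence confirming $\mathcal{G}$ is simple.
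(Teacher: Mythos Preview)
Your proposal is correct and follows essentially the same argument as the paper: both identify the shared vertex $v_i$ of the two adjacent edges, evaluate $\gamma_k$ and $\gamma_\ell$ via the constraint at $v_i$, cancel the sending timestamps using the broadcast assumption, and conclude the receiving timestamps differ by the single-antenna assumption together with $\alpha_i>0$. Your explicit remarks that $j\neq p$ because $\mathcal{G}$ is simple and that $\alpha_i>0$ is needed are slightly more careful than the paper, but the structure is identical.
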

\begin{proof}
Consider an edge $\{v'_k,v'_\ell\}\in\mathcal{E}_2$ where $v'_k$  and $v'_{\ell}$ correspond to edges $\{v_i,v_a\}\in\mathcal{E}$ and $\{v_i,v_b\}\in\mathcal{E}$, respectively. According to (\ref{seperate_constr}), 
\begin{subequations}
\begin{align}
    \gamma_k&=\overline{T}^i_{ia}\alpha_i+\beta_i=\frac{T^i_{(i,a)}+T^i_{(a,i)}}{2}\alpha_i+\beta_i\\
    \gamma_{\ell}&=\overline{T}^i_{ib}\alpha_i+\beta_i=\frac{T^i_{(i,b)}+T^i_{(b,i)}}{2}\alpha_i+\beta_i,
\end{align}
\end{subequations}
where $T^i_{(a,i)}\neq T^i_{(b,i)}$ due to the Assumption \ref{assum_antenna} and $T^i_{(i,b)}=T^i_{(i,a)}$ due to the Assumption \ref{assum_broadcast}. So $\gamma_k\neq\gamma_{\ell}$ for any $\{v'_k,v'_\ell\}\in\mathcal{E}_2$.
\end{proof}

Lemma \ref{lem_broadcast} plays an important role in the proof of the following lemma. Under Assumption \ref{assum2}, Lemma \ref{lem_IBRtoS''} establishes the relation between the bearing rigidity of $(\mathcal{G'},[\varphi^T,\eta^T]^T)$ and the rank of matrix $S''(\zeta)$. Note that Lemma \ref{lem_IBRtoS''} still holds without Assumption \ref{assum2}, but only for a subset of the generic set of configurations with which $\gamma_k\neq\gamma_{\ell}$ for all $\{v'_k,v'_\ell\}\in\mathcal{E}_2$. The necessity proof follows a similar structure from \cite{shames2009minimization}.

\begin{lem}\label{lem_IBRtoS''}
Under Assumption \ref{assum2}, the clock framework $(\mathcal{G'},[\varphi^T,\eta^T]^T)$ is infinitesimally bearing rigid if and only if $\rank(S''(\zeta))=2n+m-2$ and $\Null(S''(\zeta))=\Span{[\varphi^T,\gamma^T]^T,[\allone_n^T\otimes[0,1],\allone_m^T]^T}$.
\end{lem}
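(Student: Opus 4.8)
The plan is to recognize $S''(\zeta)$ as (a row-rearrangement of) the bearing rigidity matrix of the clock framework $(\mathcal{G}', [\varphi^T,\eta^T]^T)$ in $\mathbb{R}^2$, and then invoke the standard characterization of infinitesimal bearing rigidity from \cite{zhao2015bearing}. Recall that the bearing rigidity matrix of a framework in $\mathbb{R}^d$ has the property that its null space always contains the span of the translation direction $\allone_N\otimes I_d$ (here $N = n+m$ vertices in $\mathcal{G}'$, but the clock coordinates only move in the two directions $[1,0]^T$ for uniform skew and $[0,1]^T$ for uniform offset — actually one must be careful: the vertices in $\mathcal{V}'$ have first coordinate pinned to $0$, so the relevant translation is only along $[0,1]^T$) and the scaling direction $[\varphi^T,\gamma^T]^T$ of the whole configuration about the origin. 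So the "only if" direction is the substantive one: if the framework is infinitesimally bearing rigid, then by definition every infinitesimal bearing motion is trivial, which for a bearing framework means $\Null$ of the bearing rigidity matrix equals exactly $\Span{[\varphi^T,\gamma^T]^T, [\allone_n^T\otimes[0,1],\allone_m^T]^T}$, of dimension $2$, giving rank $= 2n+m-2$.

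First I would carefully identify $S''(\zeta)$ with the bearing rigidity matrix $R_B$ of $(\mathcal{G}',[\varphi^T,\eta^T]^T)$. Equations (\ref{bearing_constr}) say precisely that for each edge $\{v_i,v'_k\}\in\mathcal{E}_1$ the unit bearing from $\varphi_i$ to $\eta_k$ equals the fixed unit vector $[1,-\overline{T}^i_{ij}]^T/\lVert\cdot\rVert$; the Jacobian of a normalized-difference map $\tfrac{p_a-p_b}{\lVert p_a - p_b\rVert}$ is exactly the orthogonal-projection block $-\tfrac{1}{\lVert p_a-p_b\rVert}P_{g_{ab}}$ appearing in the bearing rigidity matrix, where $P_g = I - gg^T$. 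The only subtlety is that $S''$ as defined by the author uses only the $\mathcal{E}_1$-type constraints (the $4m$ rows correspond to the $2m$ edges in $\mathcal{E}_1$, each contributing a $2\times$ block), whereas the graph $\mathcal{G}'$ also has the $\mathcal{E}_2$ edges. So I would need to check that the $\mathcal{E}_2$-bearing constraints are implied by, or add nothing to the row space of, the $\mathcal{E}_1$-constraints at the given configuration — and this is where Lemma \ref{lem_broadcast} enters: $\gamma_k\neq\gamma_\ell$ for $\{v'_k,v'_\ell\}\in\mathcal{E}_2$ guarantees the bearings along $\mathcal{E}_2$ are well-defined, and since vertices $v'_k,v'_\ell$ both have first coordinate $0$, every such bearing is $\pm[0,1]^T$; the corresponding projection block annihilates exactly the $[0,1]$ direction, which is already in the null space of the $\mathcal{E}_1$ rows, so these rows are redundant for the rank and null-space computation. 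This reduces the statement to: $\rank S'' = \rank R_B$ and $\Null S'' = \Null R_B$.

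Then the argument is: $(\mathcal{G}',[\varphi^T,\eta^T]^T)$ infinitesimally bearing rigid $\iff$ (by the definition of infinitesimal bearing rigidity and the bearing analogue of Theorem \ref{thm_ICR}) $\Null(R_B) = \Span{\text{trivial motions}}$. The trivial motions of a bearing framework are translations and scalings; here, because the $\mathcal{V}'$-coordinates are constrained to the line $\{0\}\times\mathbb{R}$, the admissible constant translation is $[\allone_n^T\otimes[0,1],\allone_m^T]^T$ and the scaling about the origin is $[\varphi^T,\gamma^T]^T$ (note $\eta_k = [0,\gamma_k]^T$ scales to $[0,\gamma_k]^T$, consistent). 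These two are linearly independent since $\alpha_i>0$. Hence infinitesimal bearing rigidity $\iff \Null(S'') = \Span{[\varphi^T,\gamma^T]^T, [\allone_n^T\otimes[0,1],\allone_m^T]^T}$, and since this space is $2$-dimensional, $\iff \rank(S'') = 2n+m-2$. For the "if" direction one checks both spanning vectors genuinely lie in $\Null(S'')$ (direct computation from (\ref{bearing_constr})) so that the rank bound $\leq 2n+m-2$ is tight exactly when equality and null-space coincidence hold.

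The main obstacle I expect is the bookkeeping around the $\mathcal{E}_2$ edges and the pinned first coordinate of the $\mathcal{V}'$ vertices: one has to be precise that $S''$ (built from $\mathcal{E}_1$ alone) and the full bearing rigidity matrix $R_B$ of $\mathcal{G}'$ (built from $\mathcal{E}_1\cup\mathcal{E}_2$) have the same rank and null space, which requires showing the $\mathcal{E}_2$-rows lie in the span of the $\mathcal{E}_1$-rows — the place where Assumption \ref{assum2} / Lemma \ref{lem_broadcast} is essential to keep those bearings nondegenerate. Secondarily, matching the "trivial variations" of the clock picture (shift of $\beta$, common skew) with the "translations and scalings" of the bearing picture requires care because the clock framework lives in a $2$-dimensional plane but only a $1$-parameter family of translations is compatible with the constraint structure; spelling this out correctly is the crux of getting the null-space description exactly right.
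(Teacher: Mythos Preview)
Your approach has a genuine gap stemming from a misidentification of $S''(\zeta)$. The matrix $S''$ lives in $\mathbb{R}^{4m\times(2n+m)}$: its columns correspond to $\zeta=[\varphi^T,\gamma^T]^T$, i.e.\ the $2n$ coordinates of the $\mathcal{V}$-vertices together with only the \emph{second} coordinates of the $\mathcal{V}'$-vertices. The bearing rigidity matrix $R_b$ of $(\mathcal{G}',[\varphi^T,\eta^T]^T)$, by contrast, has $2(n+m)$ columns. So $S''$ is not a row-submatrix of $R_b$; it is a genuine submatrix missing $m$ columns (the $x$-coordinates of the $\mathcal{V}'$-vertices). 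Consequently the two matrices cannot have the same rank or null space in any literal sense, and your plan to show ``$\rank S''=\rank R_b$ and $\Null S''=\Null R_b$'' cannot be carried out as stated.

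Relatedly, your claim that the $\mathcal{E}_2$-rows are redundant is backwards. For an edge $\{v'_k,v'_\ell\}\in\mathcal{E}_2$ the bearing is $\pm[0,1]^T$, so the projection $P_g=I-gg^T=\diag{1,0}$ \emph{kills} the $[0,1]$-component but \emph{retains} the $[1,0]$-component: the resulting row constrains precisely the relative $x$-motion of $v'_k$ and $v'_\ell$. These rows are not in the row span of the $\mathcal{E}_1$-rows; they are essential. In the paper's proof this shows up as the block decomposition
\[
P_1R_bP_2=\begin{bmatrix}S''(\zeta)&A\\ \allzero&B\end{bmatrix},
\]
where $B=QH\otimes[1,0]^T$ with $H$ the incidence matrix of the spanning tree $(\mathcal{V}',\mathcal{E}_2)$, so that $\Null(B)=\Span{\allone_m}$ (here Lemma~\ref{lem_broadcast} ensures $Q$ is well defined). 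The role of $B$ is exactly to tie together the $m$ extra $x$-coordinate degrees of freedom into a single common $x$-translation. This is what reconciles the \emph{three}-dimensional trivial null space of $R_b$ (two translations plus scaling, as always for a bearing framework in $\mathbb{R}^2$) with the \emph{two}-dimensional null space claimed for $S''$: the $x$-translation $u_3=[\allone_n^T\otimes[1,0],\allzero^T,\allone_m^T]^T$ lives in the columns you have dropped. Your argument that ``only a $1$-parameter family of translations is compatible with the constraint structure'' conflates the particular configuration (which happens to place $\mathcal{V}'$ on the $y$-axis) with the space of infinitesimal bearing motions (which is unconstrained in $\mathbb{R}^2$). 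Without the block decomposition and the analysis of $B$, both directions of the equivalence are incomplete.
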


\begin{proof}
Denote the bearing rigidity matrix of the clock framework $(\mathcal{G}',[\varphi^T,\eta^T]^T)$ as $R_b$. Then after permutation,
\begin{align}\label{PermutedRb}
    P_1R_bP_2=\begin{bmatrix}
        S''(\zeta) & A\\
        \allzero & B
    \end{bmatrix}\in\mathbb{R}^{2(3m-1)\times 2(n+m)},
\end{align}
where $A$ is a submatrix of $R_b$, whose columns correspond to the x-coordinate of points in $\mathcal{V}'$ and rows correspond to the edges in $\mathcal{E}_1$. The matrix  $B=QH\otimes[1,0]^T\in\mathbb{R}^{2(m-1)\times m}$ where $Q=\diag{|\gamma_k-\gamma_\ell|^{-1}}$ for all  $\{v'_k,v'_\ell\}\in\mathcal{E}_2$ and $H$ is the incidence matrix of $(\mathcal{V}',\mathcal{E}_2)$. Lemma \ref{lem_broadcast} guarantees that $\gamma_k\neq\gamma_{\ell}$ in the matrix $Q$. The matrix $P_1$ permutes rows of $R_b$ so that the first $2m$ rows correspond to the edges in $\mathcal{E}_1$ and the last $2(m-1)$ rows correspond to the edges in $\mathcal{E}_2$. The matrix $P_2$ permutes columns of $R_b$ so that the first $2n+m$ columns correspond to $[\varphi^T,\gamma^T]^T$, i.e., the coordinates of points in $\mathcal{V}$ and the y-coordinate of points in $\mathcal{V}'$. The last $m$ columns after permutation correspond to the x-coordinate of points in $\mathcal{V}'$.

(Necessity) Suppose that $(\mathcal{G'},[\varphi^T,\eta^T]^T)$ is infinitesimally bearing rigid. The permuted bearing rigidity matrix $P_1R_bP_2$ has same rank as $R_b$, i.e., $\rank(P_1R_bP_2)=2(n+m)-3$ and $\Null(P_1R_bP_2)=\Span{u_1,u_2,u_3}$ where $u_1=[\varphi^T,\gamma^T,\allzero^T]^T$, $u_2=[\allone_n^T\otimes [0,1],\allone_m^T,\allzero^T]^T$ ,and $u_3=[\allone_n^T\otimes [1,0],\allzero^T,\allone_m^T]^T$ \cite{zhao2015bearing}.

Suppose that a vector $w\in\Null(S'')$ and let $u=[w^T,\allzero^T]^T\in\mathbb{R}^{2(n+m)}$ then $u\in \Null(P_1R_bP_2)$. Hence $u$ must be a linear combination of $u_1$, $u_2$, and $u_3$, i.e., for some scalar $a$, $b$, $c$, not all zero,
\begin{align*}
    u=au_1+bu_2+cu_3,
\end{align*}
and the last $m$ rows gives
\begin{align*}
    \allzero=a\bar{u}_1+b\bar{u}_2+c\bar{u}_3,
\end{align*}
where $\bar{u}_1=\allzero$, $\bar{u}_2=\allzero$ and, $\bar{u}_3=\allone_m$. 
So, $c=0$ and $a,b$ are not both zero. For the remaining $2n+m$ rows,
\begin{align}
    w=au'_1+bu'_2,
    \label{eqn_nullRr}
\end{align}
where $u'_1=[\varphi^T,\gamma^T]^T$ and $u'_2=[\allone_n^T\otimes[0,1],\allone_m^T]^T$.
Since (\ref{eqn_nullRr}) holds for any $w\in \Null(S'')$, $\rank(S'')=2n+m-2$ and $\Null(S'')=\Span{u'_1,u'_2}$. 

(Sufficiency) Suppose that $\Null(S'')=\Span{u'_1,u'_2}$. The matrix $B=QH\otimes[1,0]^T$ where $Q$ is a diagonal matrix and $H$ is the incidence matrix of $(\mathcal{V}',\mathcal{E}_2)$, hence $\Null(B)=\Span{\allone_m}$. Thus $\Null(P_1R_b P_2)=\Span{[{u'_1}^T,\allzero^T]^T,[{u'_2}^T,\allzero^T]^T,[{u'_3}^T,\allone_m^T]^T}$, where $S''u'_3+A\allone_m=\allzero$. It follows immediately that $\Null(P_1R_b P_2)=\Span{u_1,u_2,u_3}$ and hence the clock framework
 $(\mathcal{G}',[\varphi^T,\eta^T]^T)$ is infinitesimally bearing rigid.
\end{proof}

The following theorem states the relation between the clock rigidity of clock framework $(\mathcal{G},\varphi)$ and the bearing rigidity of clock framework $(\mathcal{G}',[\varphi^T,\eta^T]^T)$.  

\begin{thm}\label{thm_IBR<->ICR}
Under Assumption \ref{assum2}, a clock framework $(\mathcal{G},\varphi)$ is infinitesimally clock rigid if and only if $(\mathcal{G}',[\varphi^T,\eta^T]^T)$ is infinitesimally bearing rigid.
\end{thm}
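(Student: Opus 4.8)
The plan is to deduce the theorem directly from three facts already in hand, so that no fresh rigidity argument is required. By Theorem~\ref{thm_ICR}, infinitesimal clock rigidity of $(\mathcal{G},\varphi)$ is equivalent to $\Null(R_c(\varphi))=\Span{\allone_n\otimes[0,1]^T,\varphi}$. By Lemma~\ref{lem_IBRtoS''}, under Assumption~\ref{assum2}, infinitesimal bearing rigidity of $(\mathcal{G}',[\varphi^T,\eta^T]^T)$ is equivalent to $\rank(S''(\zeta))=2n+m-2$ together with $\Null(S''(\zeta))=\Span{[\varphi^T,\gamma^T]^T,[\allone_n^T\otimes[0,1],\allone_m^T]^T}$. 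And Lemma~\ref{lem_RctoS''} ties the two null spaces together: $\delta\varphi\in\Null(R_c(\varphi))$ if and only if $\delta\varphi$ extends, via some $\delta\gamma$, to a vector $\delta\zeta=[\delta\varphi^T,\delta\gamma^T]^T\in\Null(S''(\zeta))$. So the proof reduces to transporting the generators of one null space to those of the other along this correspondence.

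The one preparatory observation I would make is that the extension $\delta\varphi\mapsto\delta\zeta$ is in fact a linear injection with an explicit formula. From the row-reduced form of $S'(\zeta)$ produced in the proof of Lemma~\ref{lem_RctoS''}, together with $\Null(S''(\zeta))=\Null(S'(\zeta))$, every $\delta\zeta\in\Null(S''(\zeta))$ obeys $\delta\gamma=X\delta\varphi$ for the fixed matrix $X$ there; so $\delta\gamma$ is uniquely determined by $\delta\varphi$. Evaluating on the two trivial clock variations, $\delta\varphi=\varphi$ returns $\delta\gamma=X\varphi=\gamma$ (which is just the defining relation (\ref{seperate_constr}) in the reduced coordinates), and $\delta\varphi=\allone_n\otimes[0,1]^T$ returns $\delta\gamma=\allone_m$. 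Hence the map $\delta\varphi\mapsto[\delta\varphi^T,(X\delta\varphi)^T]^T$ sends $\Span{\allone_n\otimes[0,1]^T,\varphi}$ onto $\Span{[\allone_n^T\otimes[0,1],\allone_m^T]^T,[\varphi^T,\gamma^T]^T}$; since $\varphi$ and $\allone_n\otimes[0,1]^T$ are linearly independent by Lemma~\ref{lem_rankinequality}, the two image vectors are independent as well.

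Both implications then fall out by chasing inclusions. For necessity, assume $(\mathcal{G},\varphi)$ is infinitesimally clock rigid. Any $\delta\zeta\in\Null(S''(\zeta))$ has $\delta\varphi\in\Null(R_c(\varphi))=\Span{\allone_n\otimes[0,1]^T,\varphi}$ by Lemma~\ref{lem_RctoS''} and Theorem~\ref{thm_ICR}, so $\delta\varphi=a\varphi+b(\allone_n\otimes[0,1]^T)$ and therefore $\delta\gamma=X\delta\varphi=a\gamma+b\allone_m$, placing $\delta\zeta$ in the claimed span; the reverse inclusion holds because $[\varphi^T,\gamma^T]^T=\zeta$ and $[\allone_n^T\otimes[0,1],\allone_m^T]^T$ both lie in $\Null(S'(\zeta))=\Null(S''(\zeta))$ (the first by $S'(\zeta)\zeta=\allzero$, the second by direct substitution into the affine edge equations). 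This gives the rank and null-space conditions of Lemma~\ref{lem_IBRtoS''}, so $(\mathcal{G}',[\varphi^T,\eta^T]^T)$ is infinitesimally bearing rigid. For sufficiency, assume that framework is infinitesimally bearing rigid, so $\Null(S''(\zeta))=\Span{[\varphi^T,\gamma^T]^T,[\allone_n^T\otimes[0,1],\allone_m^T]^T}$ by Lemma~\ref{lem_IBRtoS''}; for any $\delta\varphi\in\Null(R_c(\varphi))$, Lemma~\ref{lem_RctoS''} provides an extension $\delta\zeta$ in this span, whence $\delta\varphi\in\Span{\allone_n\otimes[0,1]^T,\varphi}$, and the reverse inclusion is Lemma~\ref{lem_rankinequality}, so Theorem~\ref{thm_ICR} yields infinitesimal clock rigidity of $(\mathcal{G},\varphi)$.

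I expect the argument to be essentially bookkeeping; the one point that deserves attention — and the only place Assumption~\ref{assum2} is used — is that Lemma~\ref{lem_IBRtoS''} must be applied in precisely its stated form. A generic bearing framework in $\mathbb{R}^2$ carries three independent trivial variations, whereas $\Null(S''(\zeta))$ here has to be exactly two-dimensional; the extra direction $[\allone_n^T\otimes[1,0],\allzero^T,\allone_m^T]^T$ must decouple from the $S''(\zeta)$ block, which is exactly what Lemma~\ref{lem_IBRtoS''} secures (with Lemma~\ref{lem_broadcast}, and thus Assumption~\ref{assum2}, ensuring the diagonal matrix $Q$ in that proof is well defined). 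Granting that input, there is no real obstacle.
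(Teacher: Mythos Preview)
Your proposal is correct and follows essentially the same approach as the paper: chain Theorem~\ref{thm_ICR}, Lemma~\ref{lem_RctoS''}, and Lemma~\ref{lem_IBRtoS''} to pass between $\Null(R_c(\varphi))$ and $\Null(S''(\zeta))$. You supply more explicit bookkeeping than the paper does---in particular the uniqueness of the extension $\delta\gamma=X\delta\varphi$ and the verification that the two trivial clock variations extend to the two claimed generators of $\Null(S''(\zeta))$---but this is exactly the detail the paper suppresses when it writes ``So Lemma~\ref{lem_RctoS''} gives that $\rank(S'')=2n+m-2$ and $\Null(S'')=\ldots$''.
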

\begin{proof}
We first prove the necessity. Suppose $(\mathcal{G},\varphi)$ is infinitesimally clock rigid. By Theorem \ref{thm_ICR}, we have
\begin{align}
    \Null(R_c(\varphi))=\Span{\allone_n\otimes[0,1]^T,\varphi}.
\end{align}
So Lemma \ref{lem_RctoS''} gives that $\rank(S'')=2n+m-2$ and $\Null(S'')=\Span{[\allone_n^T\otimes[0,1],\allone_m^T]^T, [\varphi^T,\gamma^T]^T}$. By Lemma \ref{lem_IBRtoS''}, $(\mathcal{G}',[\varphi^T,\eta^T]^T)$ is infinitesimally bearing rigid. Since Theorem \ref{thm_ICR} and Lemmas \ref{lem_RctoS''} and \ref{lem_IBRtoS''} all state necessary and sufficient conditions,  the proof for the sufficiency of this Theorem also holds.
\end{proof}

\begin{cor}\label{cor_genericforIBR}
 Under Assumption \ref{assum2}, a clock configuration $[\varphi^T,\eta^T]^T$ is generic for bearing rigid of graph $\mathcal{G}'$ if and only if the clock configuration $\varphi$ is generic for clock rigidity of graph $\mathcal{G}$.
\end{cor}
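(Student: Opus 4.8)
The plan is to obtain the corollary directly from Theorem \ref{thm_IBR<->ICR} by unwinding the definitions of a \emph{generic configuration} in each of the two rigidity theories. Recall that, by definition, a clock configuration $\varphi$ is generic for clock rigidity of $\mathcal{G}$ precisely when $(\mathcal{G},\varphi)$ is infinitesimally clock rigid, and (following \cite{zhao2017laman}) a configuration $p$ is generic for bearing rigidity of a graph precisely when the associated framework is infinitesimally bearing rigid. Also recall that, once $(d_{in},T_s)$ is fixed, the augmented configuration $[\varphi^T,\eta^T]^T$ is uniquely determined by $\varphi$, so the statement concerns exactly the matched pair $\varphi \leftrightarrow [\varphi^T,\eta^T]^T$ and there is no ambiguity in which configuration of $\mathcal{G}'$ is meant.

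First I would prove the forward implication. Assume $\varphi$ is generic for clock rigidity of $\mathcal{G}$; by definition $(\mathcal{G},\varphi)$ is infinitesimally clock rigid, so Theorem \ref{thm_IBR<->ICR} (which is where Assumption \ref{assum2} enters) yields that $(\mathcal{G}',[\varphi^T,\eta^T]^T)$ is infinitesimally bearing rigid, i.e., $[\varphi^T,\eta^T]^T$ is generic for bearing rigidity of $\mathcal{G}'$. The converse is identical: if $[\varphi^T,\eta^T]^T$ is generic for bearing rigidity of $\mathcal{G}'$ then $(\mathcal{G}',[\varphi^T,\eta^T]^T)$ is infinitesimally bearing rigid, and the reverse direction of the biconditional in Theorem \ref{thm_IBR<->ICR} gives that $(\mathcal{G},\varphi)$ is infinitesimally clock rigid, hence $\varphi$ is generic. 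Since both steps merely substitute definitions into an already-established equivalence, there is essentially no obstacle here; the only point that deserves a sentence is that the word \emph{generic} is used in the ``infinitesimally rigid at this very configuration'' sense of \cite{zhao2017laman}, not in a transcendence-degree sense, so the corollary is genuinely a relabeling of Theorem \ref{thm_IBR<->ICR} rather than an additional Zariski-density argument.

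If desired, I would close with a remark tying this to the graph-level picture: combining the corollary with Lemma \ref{lem_GtoG'} and the ``almost all'' lemmas shows that, for the given $(d_{in},T_s)$, the graph $\mathcal{G}$ is generically clock rigid if and only if $\mathcal{G}'$ is generically bearing rigid in $\mathbb{R}^2$, equivalently $\mathcal{G}$ is generically bearing rigid in $\mathbb{R}^2$ with at least one redundant edge. That observation is, however, a consequence to be drawn separately and is not part of the proof of the corollary itself.
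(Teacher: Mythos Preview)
Your proposal is correct and matches the paper's approach: the paper states this corollary immediately after Theorem~\ref{thm_IBR<->ICR} with no proof, since it is indeed just the theorem restated through the definitions of ``generic configuration'' on each side. Your additional clarification that ``generic'' here means ``infinitesimally rigid at this configuration'' (Definition~15 and its bearing analogue) rather than an algebraic-genericity notion is a useful sanity check, but it does not add argumentative content beyond what the paper intends.
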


Now we are ready to prove the following theorem, which gives a sufficient and necessary graph property for establishing infinitesimal clock rigidity. 

\begin{thm}[Main result]\label{thm_GBR<->ICR}
Under Assumption \ref{assum2}, for any generic clock configuration $\varphi$, a clock framework $(\mathcal{G},\varphi)$ is infinitesimally clock rigid if and only if $\mathcal{G}$ is generically bearing rigid in $\mathbb{R}^2$ with at least one redundant edge.
\end{thm}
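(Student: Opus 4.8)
The plan is to derive Theorem~\ref{thm_GBR<->ICR} by composing the two equivalences already in hand. Theorem~\ref{thm_IBR<->ICR} (equivalently Corollary~\ref{cor_genericforIBR}) reduces infinitesimal clock rigidity of $(\mathcal{G},\varphi)$ to infinitesimal bearing rigidity of the auxiliary framework $(\mathcal{G}',[\varphi^T,\eta^T]^T)$, and Lemma~\ref{lem_GtoG'} reduces generic bearing rigidity of $\mathcal{G}'$ in $\mathbb{R}^2$ to generic bearing rigidity of $\mathcal{G}$ in $\mathbb{R}^2$ with at least one redundant edge; the two genericity (almost-all) lemmas then let me pass between ``for some generic configuration'' and ``for every generic configuration''. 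So the theorem is really the statement that $\mathcal{G}$ is generically clock rigid if and only if $\mathcal{G}$ is generically bearing rigid in $\mathbb{R}^2$ with a redundant edge.

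For necessity I would argue directly: if $(\mathcal{G},\varphi)$ is infinitesimally clock rigid at the generic $\varphi$, then $(\mathcal{G}',[\varphi^T,\eta^T]^T)$ is infinitesimally bearing rigid by Theorem~\ref{thm_IBR<->ICR}, so $\mathcal{G}'$ is generically bearing rigid in $\mathbb{R}^2$ by Definition~\ref{def_GBR}, and Lemma~\ref{lem_GtoG'} yields that $\mathcal{G}$ is generically bearing rigid in $\mathbb{R}^2$ with at least one redundant edge. This half is immediate.

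The work is in sufficiency. Assuming $\mathcal{G}$ is generically bearing rigid in $\mathbb{R}^2$ with a redundant edge, Lemma~\ref{lem_GtoG'} makes $\mathcal{G}'$ generically bearing rigid in $\mathbb{R}^2$; but to run Theorem~\ref{thm_IBR<->ICR} backwards I need an infinitesimally bearing rigid configuration of $\mathcal{G}'$ of the special clock form $[\varphi^T,\eta^T]^T$ with every vertex of $\mathcal{V}'$ on the $\beta$-axis ($\eta_k=[0,\gamma_k]^T$), and such configurations are measure zero, so a generic bearing rigid configuration of $\mathcal{G}'$ is not enough. My plan is to re-run the Henneberg construction from the proof of Lemma~\ref{lem_GtoG'} while forcing each added vertex $v'_k$ onto the $\beta$-axis. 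Starting from a position assignment of $\mathcal{G}$ in $\mathbb{R}^2$ that is infinitesimally bearing rigid, has all first coordinates positive (so the $\alpha_i$ are admissible skews), and avoids the finitely many degeneracies ``two adjacent vertices vertically aligned'' and ``three vertices spanning two adjacent edges collinear'' — each a generic condition, hence simultaneously attainable — the vertex-addition step can place $v'_{k_1}$ anywhere on the $\beta$-axis except the one intercept of the line through its two neighbours, and each edge-splitting step forces $v'_k$ onto the $\beta$-axis intercept of the line through the endpoints of the split edge and then needs only the tree-neighbour $v'_{k_\ell}$ to lie off that line. The latter is automatic when $v'_{k_\ell}$ was itself placed on the line of an adjacent edge: two adjacent edges share a vertex $v_i$, their lines meet only at $v_i$, and $v_i$ has positive first coordinate so cannot lie on the $\beta$-axis; and it can be arranged by a suitable choice of $\gamma_{k_1}$ (avoiding finitely many heights) in the remaining case where $v'_{k_\ell}$ is the vertex-addition vertex. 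The outcome is an infinitesimally bearing rigid $(\mathcal{G}',[\varphi^T,\eta^T]^T)$ of clock form; since the intercepts $\gamma_k$ are freely prescribable through the sending timestamps and inter-node distances, it is realized by a genuine clock configuration $\varphi$ with an admissible $(d_{in},T_s)$. Theorem~\ref{thm_IBR<->ICR} then gives that $(\mathcal{G},\varphi)$ is infinitesimally clock rigid, so $\mathcal{G}$ is generically clock rigid, and the genericity lemma for clock rigidity established above upgrades this to infinitesimal clock rigidity for almost all $\varphi$, i.e.\ for every generic clock configuration.

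The main obstacle I anticipate is exactly this $\beta$-axis-constrained Henneberg construction in the sufficiency direction: I must check that restricting the auxiliary vertices to a single line does not violate the general-position hypotheses that make the individual Henneberg moves rigidity-preserving, and that the resulting configuration can be realized by an $(d_{in},T_s)$ consistent with Assumption~\ref{assum2}. A purely measure-theoretic shortcut — ``$\mathcal{G}'$ is generically bearing rigid, hence bearing rigid for almost all configurations, hence for some clock-form configuration'' — is unavailable because the clock-form configurations form a measure-zero set, so an explicit construction of this flavour appears to be needed.
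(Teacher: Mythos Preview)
Your necessity argument is identical to the paper's. For sufficiency, however, the paper takes a much shorter route: from Lemma~\ref{lem_GtoG'} it gets that $\mathcal{G}'$ is generically bearing rigid, then simply invokes the standing hypothesis ``$\varphi$ is a generic configuration for $\mathcal{G}$'' together with Corollary~\ref{cor_genericforIBR} to conclude that $(\mathcal{G}',[\varphi^T,\eta^T]^T)$ is infinitesimally bearing rigid, and closes with Theorem~\ref{thm_IBR<->ICR}. No constrained Henneberg build is performed.

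You have correctly spotted the subtlety the paper glosses over. With the paper's own Definition of ``generic'' (namely, $\varphi$ is generic for $\mathcal{G}$ iff $(\mathcal{G},\varphi)$ is infinitesimally clock rigid), the hypothesis ``$\varphi$ generic'' already \emph{is} the conclusion, and Corollary~\ref{cor_genericforIBR} is just Theorem~\ref{thm_IBR<->ICR} restated; so the paper's sufficiency step is formally valid but circular. What your construction actually proves is the substantive statement ``$\mathcal{G}$ generically bearing rigid in $\mathbb{R}^2$ with a redundant edge $\Rightarrow$ $\mathcal{G}$ generically clock rigid,'' which is what one wants the theorem to say but which the paper's argument does not independently establish. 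Your axis-constrained Henneberg idea is the right kind of fix: the vertex-addition step has a one-dimensional family of admissible placements on the $\beta$-axis, and for each edge-splitting step the new vertex lands at the $\beta$-axis intercept of the line through the split edge, with the non-degeneracy for the third (tree) edge guaranteed because adjacent edges of $\mathcal{G}$ meet only at a vertex with positive $\alpha$-coordinate. The one point to tighten is the realizability claim at the end: you should check that the resulting $\gamma_k$'s, together with some choice of $(d_{in},T_s)$ obeying Assumption~\ref{assum2}, are consistent with the clock model (equation~\eqref{seperate_constr}), so that the bearing-rigid configuration you built is genuinely of the form $[\varphi^T,\eta^T]^T$ for an admissible clock framework.
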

\begin{proof}
 (Necessity) Suppose $(\mathcal{G},\varphi)$ is infinitesimally clock rigid. By Theorem \ref{thm_IBR<->ICR}, $(\mathcal{G}',[\varphi^T,\eta^T]^T)$ is infinitesimally bearing rigid and hence $\mathcal{G}'$ is generically bearing rigid. Then by Lemma \ref{lem_GtoG'}, $\mathcal{G}$ is generically bearing rigid in $\mathbb{R}^2$ with at least one redundant edge. 

(Sufficiency)
Suppose $\mathcal{G}$ is generically bearing rigid in $\mathbb{R}^2$ with at least one redundant edge. By Lemma \ref{lem_GtoG'}, $\mathcal{G}'$ is generically bearing rigid in $\mathbb{R}^2$. Since $\varphi$ is a generic configuration for $\mathcal{G}$, by Corollary \ref{cor_genericforIBR}, $(\mathcal{G}',[\varphi^T,\eta^T]^T)$ is infinitesimally bearing rigid. By Theorem \ref{thm_IBR<->ICR}, $(\mathcal{G},\varphi)$ is infinitesimally clock rigid.

The theorems and lemmas contributing to this proof are shown in Fig. \ref{fig_thm_dependency}.
\end{proof}

\begin{figure}
  \begin{center}
  \includegraphics[scale=0.25]{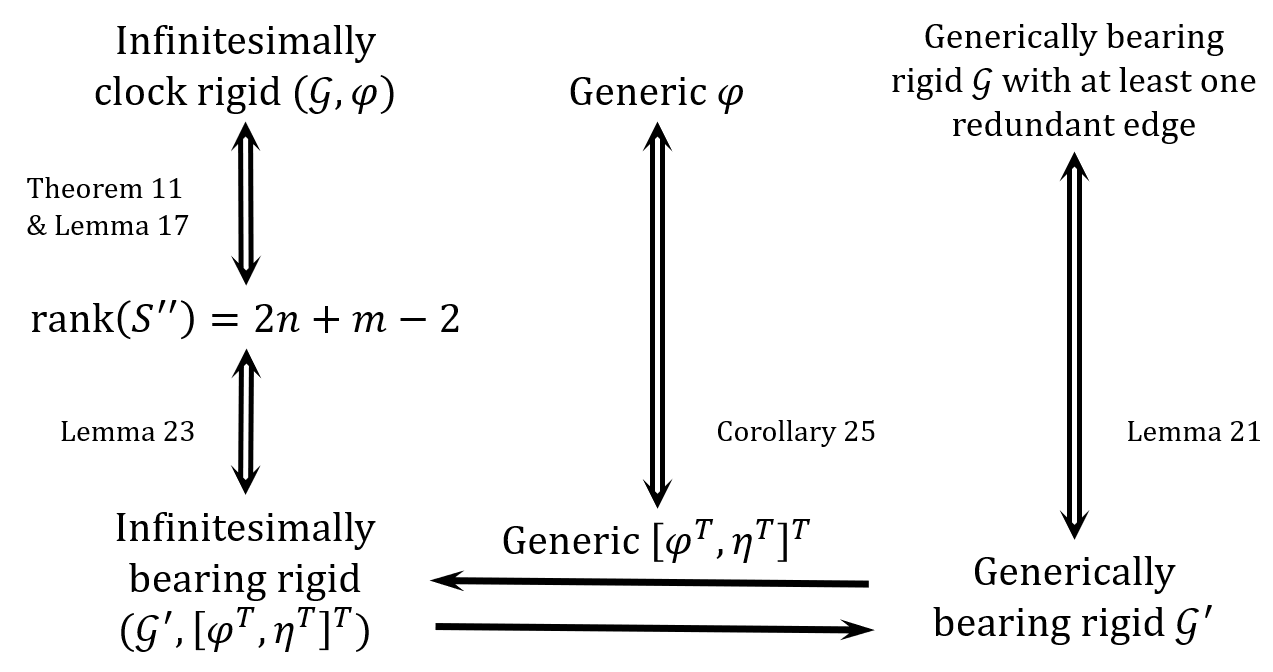}\\
  \caption{Diagram of the statements contributing to the proof of Theorem \ref{thm_GBR<->ICR}.}
  \label{fig_thm_dependency}
  \end{center}
\end{figure}

Theorem \ref{thm_GBR<->ICR} suggests that the infinitesimal clock rigidity can be determined by checking if the underlying graph is generically bearing rigid with at least one redundant edge. Up to this point there is no existing direct graph-based method to check clock rigidity. Theorem \ref{thm_GBR<->ICR} can be used to provide a topological method to establish infinitesimal clock rigidity based on Laman graphs; defined here. 
\begin{defn}[\!\!\cite{anderson2008rigid,tay1985generating}]
A graph $\mathcal{G}=(\mathcal{V},\mathcal{E})$ is \textbf{Laman} if $|\mathcal{E}|=2|\mathcal{V}|-3$ and every subset of $k\geq 2$ vertices spans at most $2k-3$ edges.
\end{defn}

The topological result is stated in the subsequent corollary and draws on the following theorem.
\begin{thm}[\!\!\cite{zhao2017laman}]\label{thm_GBR_Laman}
A graph $\mathcal{G}$ is generically bearing rigid in $\mathbb{R}^2$ if and only if the graph contains a Laman spanning subgraph.
\end{thm}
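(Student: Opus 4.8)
The plan is to prove both implications from the rank characterization of infinitesimal bearing rigidity in the plane, combined with the Henneberg construction of Laman graphs. Recall that for a framework $(\mathcal{G},p)$ on $n$ vertices in $\mathbb{R}^2$, the bearing rigidity matrix $R_b$ always carries the trivial bearing motions — the $2$ translations and the $1$ uniform scaling — in its null space, so $\rank(R_b)\le 2n-3$, with equality exactly when $(\mathcal{G},p)$ is infinitesimally bearing rigid. Since $\rank(R_b(p))$ attains its maximum on a dense open set of configurations, ``generically bearing rigid'' in the sense of Definition \ref{def_GBR} is equivalent to $\max_p\rank(R_b(p))=2n-3$. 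Moreover, in $\mathbb{R}^2$ the edge block associated with each bearing constraint is an orthogonal projector of rank $d-1=1$, so any minimally generically bearing rigid graph has exactly $2n-3$ edges; these two facts are the workhorses.

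For sufficiency, suppose $\mathcal{G}$ contains a Laman spanning subgraph $\mathcal{H}$. I would invoke the standard combinatorial fact (Henneberg / Tay--Whiteley) that every Laman graph is obtainable from $K_2$ by a sequence of degree-$2$ vertex additions and edge splittings — precisely the two Henneberg operations already used in the proof of Lemma \ref{lem_GtoG'}, which are known to preserve generic bearing rigidity in $\mathbb{R}^2$. The base case $K_2$ in $\mathbb{R}^2$ is infinitesimally bearing rigid ($n=2$, a single edge gives $\rank(R_b)=1=2n-3$), so by induction $\mathcal{H}$ is generically bearing rigid. Appending the remaining edges of $\mathcal{G}$ only adds rows to $R_b$ and cannot decrease its rank, hence $\mathcal{G}$ is generically bearing rigid in $\mathbb{R}^2$.

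For necessity, fix a generic $p$ with $\rank(R_b(p))=2n-3$ and select $2n-3$ edges of $\mathcal{G}$ whose rows in $R_b(p)$ are linearly independent; let $\mathcal{H}$ be the resulting spanning subgraph, which has exactly $2n-3$ edges. It remains to check the Laman sparsity count: every subset $S$ of $k\ge2$ vertices spans at most $2k-3$ edges of $\mathcal{H}$. If some $S$ spanned $\ge 2k-2$ of them, the corresponding rows of $R_b(p)$ would have all their nonzero entries among the $2k$ columns indexed by $S$, so they form a row-submatrix of the bearing rigidity matrix of the induced framework $(\mathcal{G}[S],p|_S)$; that matrix still has the induced $2$ translations and $1$ scaling in its null space, hence rank at most $2k-3$, so at most $2k-3$ of those rows can be independent — contradicting the choice of $\mathcal{H}$. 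Therefore $\mathcal{H}$ is a Laman spanning subgraph of $\mathcal{G}$.

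The step I expect to be the main obstacle is the necessity direction: making rigorous that the row-submatrix of $R_b(p)$ indexed by the edges inside $S$ genuinely coincides with a row-subset of the bearing rigidity matrix of the induced subframework and therefore inherits its $3$-dimensional trivial null space, which is what forces the $2k-3$ bound. An alternative, cleaner-to-cite route is to use the known plane equivalence between generic bearing (parallel) rigidity and generic distance rigidity and then quote Laman's theorem verbatim, but that merely relocates the real content into that equivalence. I would also double-check the Henneberg base case and the claim that each plane bearing edge contributes rank exactly $1$ at a generic configuration, since the entire edge count rests on it.
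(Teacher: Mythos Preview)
The paper does not prove Theorem \ref{thm_GBR_Laman}; it is quoted from \cite{zhao2017laman} and used as a black box in the proof of Corollary \ref{cor_ICR_Laman}. So there is no ``paper's own proof'' to compare against here.

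That said, your argument is the standard one and is sound. Two small remarks. First, in the necessity direction you assert that $\mathcal{H}$ is \emph{spanning} without justification; but this actually falls out of your own sparsity argument: if some vertex $v$ were isolated in $\mathcal{H}$, take $S=\mathcal{V}\setminus\{v\}$, which then spans all $2n-3$ edges of $\mathcal{H}$ while $2|S|-3=2n-5<2n-3$, a contradiction. Second, the phrase ``select $2n-3$ edges whose rows in $R_b(p)$ are linearly independent'' is slightly loose, since in the Zhao--Zelazo formulation each edge contributes a rank-$1$ row \emph{block} (the orthogonal projector $I_2-g_{ij}g_{ij}^T$) rather than a single row; the precise statement is that the linear matroid on $\mathcal{E}$ induced by the row blocks of $R_b(p)$ has rank $2n-3$, and any basis of that matroid has exactly $2n-3$ edges. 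Both points are cosmetic; the content of your proof is correct, and the concern you flagged about the induced subframework inheriting the $3$-dimensional trivial null space is not an obstacle --- the restriction of the row blocks to the columns indexed by $S$ is literally the bearing rigidity matrix of $(\mathcal{G}[S],p|_S)$.
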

\begin{cor}\label{cor_ICR_Laman}
Under Assumption \ref{assum2}, for any generic clock configuration $\varphi$, a clock framework $(\mathcal{G},\varphi)$ is infinitesimally clock rigid if and only if $\mathcal{G}$ contains a Laman spanning subgraph $\mathcal{G}_\ell$ and $\mathcal{G}\neq\mathcal{G}_\ell$.
\end{cor}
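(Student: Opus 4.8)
The plan is to obtain the corollary by chaining Theorem~\ref{thm_GBR<->ICR} with the Laman characterization of generic bearing rigidity in $\mathbb{R}^2$ (Theorem~\ref{thm_GBR_Laman}). Theorem~\ref{thm_GBR<->ICR} already reduces the statement ``$(\mathcal{G},\varphi)$ is infinitesimally clock rigid for generic $\varphi$'' to the purely combinatorial condition ``$\mathcal{G}$ is generically bearing rigid in $\mathbb{R}^2$ and has at least one redundant edge.'' So the only thing left to establish is the graph-theoretic equivalence: $\mathcal{G}$ is generically bearing rigid in $\mathbb{R}^2$ with a redundant edge if and only if $\mathcal{G}$ contains a Laman spanning subgraph $\mathcal{G}_\ell$ with $\mathcal{G}\neq\mathcal{G}_\ell$.

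For the forward implication I would assume $\mathcal{G}$ is generically bearing rigid in $\mathbb{R}^2$ and has a redundant edge $e$. By the definition of redundant edge, the graph $\mathcal{G}-e$ obtained by deleting $e$ is still generically bearing rigid in $\mathbb{R}^2$, so by Theorem~\ref{thm_GBR_Laman} it contains a Laman spanning subgraph $\mathcal{G}_\ell$. Since deleting an edge does not change the vertex set, $\mathcal{G}_\ell$ is also a spanning subgraph of $\mathcal{G}$; and because $e\in\mathcal{E}$ while $e$ does not appear in $\mathcal{G}-e$ and hence not in $\mathcal{G}_\ell$, we conclude $\mathcal{G}\neq\mathcal{G}_\ell$.

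For the converse I would assume $\mathcal{G}$ contains a Laman spanning subgraph $\mathcal{G}_\ell$ with $\mathcal{G}\neq\mathcal{G}_\ell$. By Theorem~\ref{thm_GBR_Laman}, $\mathcal{G}$ is itself generically bearing rigid in $\mathbb{R}^2$. Since $\mathcal{G}_\ell$ is spanning and strictly contained in $\mathcal{G}$, there is at least one edge $e$ of $\mathcal{G}$ not belonging to $\mathcal{G}_\ell$. Then $\mathcal{G}_\ell$ is still a Laman spanning subgraph of $\mathcal{G}-e$, so $\mathcal{G}-e$ is generically bearing rigid in $\mathbb{R}^2$ by Theorem~\ref{thm_GBR_Laman}, which means $e$ is a redundant edge of $\mathcal{G}$. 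Combining the two implications with the equivalence in Theorem~\ref{thm_GBR<->ICR} then gives the corollary.

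I do not expect a genuine obstacle here: the argument is essentially bookkeeping that translates ``redundant edge'' into ``proper Laman spanning subgraph'' and back. The only place that needs a moment of care is verifying in both directions that the Laman subgraph one produces is genuinely \emph{spanning} for the relevant graph --- which holds because edge deletion preserves the vertex set --- and that the ``proper containment'' ($\mathcal{G}\neq\mathcal{G}_\ell$) on one side lines up exactly with the existence of a removable (redundant) edge on the other.
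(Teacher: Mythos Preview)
Your proposal is correct and matches the paper's own approach: the paper simply states that the corollary ``follows directly from the statements of Theorem~\ref{thm_GBR<->ICR} and Theorem~\ref{thm_GBR_Laman},'' and your argument is exactly the bookkeeping that spells out this direct combination. The two-way translation between ``redundant edge'' and ``proper Laman spanning subgraph'' is carried out correctly.
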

\begin{proof}
The result follows directly from the statements of Theorem \ref{thm_GBR<->ICR} and Theorem \ref{thm_GBR_Laman}.
\end{proof}

The following example exercises Corollary \ref{cor_ICR_Laman} for the clock framework $(K_3,\varphi)$ and $(K_4,\varphi)$, respectively.

\begin{exmp}\label{exmp_K3andK4}
The complete graph $K_3$ is generically bearing rigid in $\mathbb{R}^2$ since it contains a Laman spanning subgraph $\mathcal{G}_\ell$ as shown in Fig. \ref{fig_K3andK4}, but $\mathcal{G}_\ell=\mathcal{G}$, i.e., there is no redundant edge for its rigidity. Following from Corollary \ref{cor_ICR_Laman}, $(K_3,\varphi)$ is not infinitesimally clock rigid for any generic configuration $\varphi$.

The complete graph $K_4$ contains a Laman spanning subgraph $\mathcal{G}_\ell$ and $\mathcal{G}_\ell\neq\mathcal{G}$, so $(K_4,\varphi)$ is infinitesimally clock rigid following from Corollary \ref{cor_ICR_Laman}. Note that $K_4$ is also the minimal graph which establishes infinitesimal clock rigidity property.
\end{exmp}

\begin{figure}[!t]
\begin{center}
\subfigure[Non-infinitesimally clock rigid]{\includegraphics[scale=0.4]{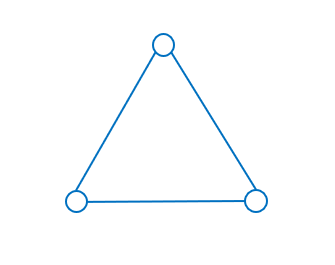}%
\label{Fig_K3}}
\hfil
\subfigure[Infinitesimally clock rigid]{\includegraphics[scale=0.4]{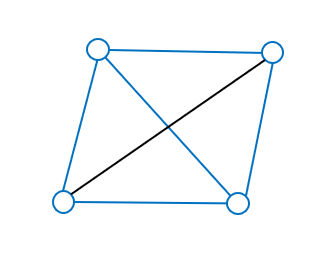}%
\label{Fig_K4}}
\end{center}
\caption{Complete graph $K_3$ and $K_4$. The blue vertices and edges show the Laman spanning subgraph. The black edge in $K_4$ is the redundant edge for its bearing rigidity.}
 \label{fig_K3andK4}
\end{figure}

Theorem \ref{thm_GBR<->ICR} and Corollary \ref{cor_ICR_Laman} also show great value in the joint position and clock problem, which is further discussed in Sections \ref{sec_jointmatrix} and \ref{sec_estimation}.

\section{Joint Rigidity} \label{sec_jointmatrix}
It is studied in clock rigidity theory whether a clock framework with certain graph property can be determined up to some trivial variations given the TOA timestamp measurements between neighbors. The close relation between distance and time in TOA measurements also provides an invariant equality involving both position and clock information. In this section, we combine clock rigidity theory and distance rigidity theory to analyze the joint position and clock problem. We explore the conditions under which the position and clock information in a framework can be uniquely and simultaneously determined up to some trivial variations.

Consider a TOA-based UWB sensor network. Define a position configuration $p=[p_1^T,...,p_n^T]^T\in\mathbb{R}^{nd}$ and a clock configuration $\varphi=[\varphi_1^T,...,\varphi_n^T]^T\in\mathbb{R}^{2n}$. We next take both position and clock into account and define a position-clock framework $(\mathcal{D},\sigma)$ where $\mathcal{D}=(\mathcal{V},\mathcal{E}_\mathcal{D})$ is a directed graph and $\sigma=[p^T,\varphi^T]^T\in\mathbb{R}^{n(d+2)}$ is a position-clock configuration. The position-clock configuration provides a mapping from  $v_i\in\mathcal{V}$ to $\sigma_i=[p^T_i,\varphi^T_i]^T\in\mathbb{R}^{d+2}$, including a position $p_i\in\mathbb{R}^d$ and a clock $\varphi_i\in\mathbb{R}^2$. 

Based on (\ref{OWRdist}), the position-clock framework satisfies the following constraint for every edge $(v_i,v_j)\in\mathcal{E}_\mathcal{D}$
\begin{align}\label{constr_tof}
   \norm{p_i-p_j}^2-c^2(\alpha_jT^j_{(i,j)}+\beta_j-\alpha_iT^i_{(i,j)}-\beta_i)^2=0,
 \end{align}
 where $c$ is the speed of light. We take the partial derivative of the left hand side of (\ref{constr_tof}) with respect to $\sigma$ for every $(v_i,v_j)\in\mathcal{E}_\mathcal{D}$ and call the resulting Jacobian $R_{cd}(\sigma)$ with respect to $\sigma$ the joint rigidity matrix since it includes the information about both clock rigidity and distance rigidity. Each row of $R_{cd}(\sigma)$ corresponds to an edge $(v_i,v_j)$ which has the form
 \begin{align}\label{joint_constr}
    \Bigg[&\allzero^T\quad\underbrace{(p_i-p_j)^T}_{v_i}\quad\allzero^T\quad\underbrace{(p_j-p_i)^T}_{v_j}\quad \allzero^T\quad   \underbrace{cd_{ij}T^i_{(i,j)}\quad cd_{ij}}_{v_i}\nonumber\\
     &\allzero^T\quad\underbrace{-cd_{ij}T^j_{(i,j)}\quad-cd_{ij}}_{v_j}\quad\allzero^T\Bigg]\in\mathbb{R}^{(d+2)n},
\end{align}
where $d_{ij}=c(\alpha_jT^j_{(i,j)}+\beta_j-\alpha_iT^i_{(i,j)}-\beta_i)$.

Let $\delta\sigma$ be a joint variation of the configuration $\sigma$. If $R_{cd}(\sigma)\delta\sigma=0$, then $\delta\sigma$ is called an infinitesimal joint variation of $(\mathcal{D},\sigma)$. Infinitesimal joint variations preserve timestamp measurements. An infinitesimal joint variation is called trivial if it corresponds to a translation and a rotation of position configuration $p$, a translation of the clock offset configuration $\beta$ and a scaling of the entire position-clock framework. See Fig. \ref{fig_joint_trivial}. Analogous to infinitesimal distance (bearing) rigidity, we define infinitesimal joint rigidity.
 
 \begin{defn}\label{def_IJR}
 A position-clock framework is \textbf{infinitesimally joint rigid} if all the infinitesimal joint variations are trivial.
 \end{defn}
 
 We next give a necessary and sufficient condition of infinitesimal joint rigidity.
 
\begin{figure}[!t]
\begin{center}
\subfigure[Translation and rotation of position configuration $p$]{\includegraphics[scale=0.24]{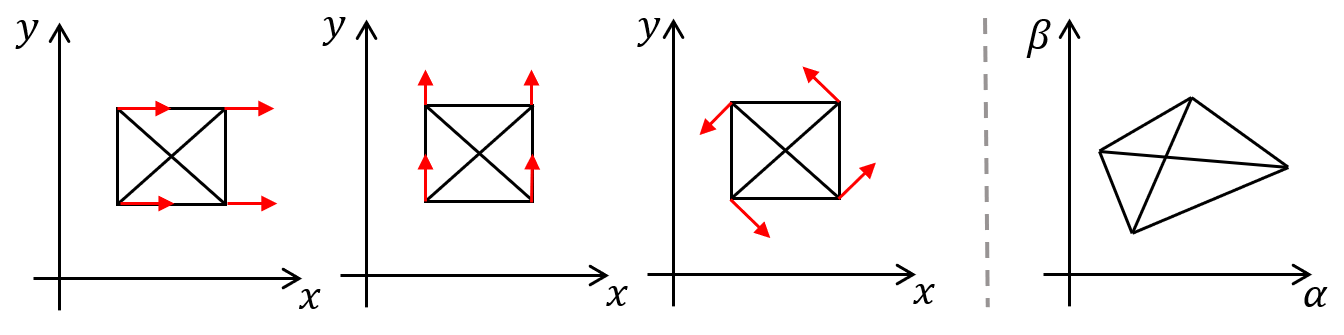}%
\label{Fig_joint_trivial_a}}
\hfil
\subfigure[Translation of $\beta$ ]{\includegraphics[scale=0.24]{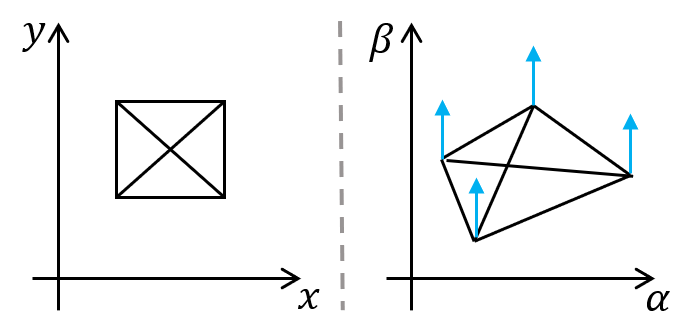}%
\label{Fig_joint_trivial_b}}
\hfil
\subfigure[Scaling of the entire framework]{\includegraphics[scale=0.24]{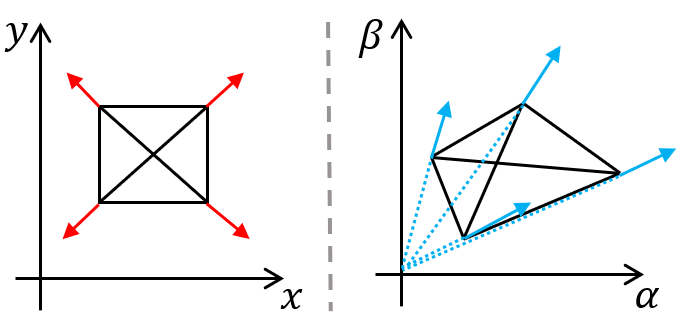}%
\label{Fig_joint_trivial_c}}
\end{center}
\caption{Basic trivial infinitesimal joint variations in $\mathbb{R}^{2+2}$. Position variations (red arrows) and clock variations (blue arrows) are shown in the $xy$-coordinate plane  and the $\alpha\beta$-coordinate plane, respectively.}
  \label{fig_joint_trivial}
\end{figure}

\begin{thm}\label{thm_IJRrank}
A position-clock framework $(\mathcal{D},\sigma)$ in $\mathbb{R}^{d+2}$ is infinitesimally joint rigid if and only if $\rank(R_{cd}(\sigma))=(d+2)n-d(d+1)/2-2$.
\end{thm}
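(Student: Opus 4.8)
Since $R_{cd}(\sigma)\in\mathbb{R}^{|\mathcal{E}_\mathcal{D}|\times(d+2)n}$, the rank–nullity theorem turns the claimed identity $\rank(R_{cd}(\sigma))=(d+2)n-d(d+1)/2-2$ into the equivalent statement $\Dim\Null(R_{cd}(\sigma))=d(d+1)/2+2$. By Definition \ref{def_IJR}, $(\mathcal{D},\sigma)$ is infinitesimally joint rigid precisely when every $\delta\sigma\in\Null(R_{cd}(\sigma))$ is trivial, i.e. when $\Null(R_{cd}(\sigma))$ coincides with the space $\mathcal{T}$ of trivial infinitesimal joint variations. So the plan is to prove two facts: (i) $\mathcal{T}\subseteq\Null(R_{cd}(\sigma))$ for every position–clock framework, and (ii) $\Dim\mathcal{T}=d(d+1)/2+2$. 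Granting these, $\Null(R_{cd}(\sigma))=\mathcal{T}$ holds if and only if $\Dim\Null(R_{cd}(\sigma))=d(d+1)/2+2$, which is the rank identity. This mirrors the argument behind Theorem \ref{thm_ICR} for clock rigidity.

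For (i), I would write the four families of trivial variations as explicit vectors in $\mathbb{R}^{(d+2)n}$ and substitute them into a generic row (\ref{joint_constr}). A position translation $\delta p=\allone_n\otimes v$, $\delta\varphi=\allzero$, annihilates the row because the two position blocks $(p_i-p_j)^T$ and $(p_j-p_i)^T$ cancel against a common $v$; an elemental position rotation $\delta p_k=J_d^a p_k$, $\delta\varphi=\allzero$, annihilates it because $(p_i-p_j)^T J_d^a (p_i-p_j)=0$ by skew-symmetry of $J_d^a$; a clock-offset shift $\delta p=\allzero$, $\delta\varphi=\allone_n\otimes[0,1]^T$, annihilates it because the constants $cd_{ij}$ and $-cd_{ij}$ cancel. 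The only variation coupling the two blocks is the overall scaling $\delta\sigma=\sigma=[p^T,\varphi^T]^T$: here the position block contributes $(p_i-p_j)^T(p_i-p_j)=\norm{p_i-p_j}^2$ while the clock block contributes $cd_{ij}\bigl(\alpha_iT^i_{(i,j)}+\beta_i-\alpha_jT^j_{(i,j)}-\beta_j\bigr)=-d_{ij}^2$, and these cancel exactly because $\sigma$ satisfies the time-of-flight constraint (\ref{constr_tof}), i.e. $\norm{p_i-p_j}^2=d_{ij}^2$. This is the one genuinely non-routine computation, and it is precisely what forces a clock scaling to be accompanied by a position scaling.

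For (ii), I would show the $d+d(d-1)/2+1+1=d(d+1)/2+2$ vectors above are linearly independent. Projecting a vanishing linear combination onto the clock coordinates leaves a combination of $\allone_n\otimes[0,1]^T$ and $\varphi$, which are independent since $\alpha_i\in\mathbb{R}^+$ (the observation used in Lemma \ref{lem_rankinequality}); hence the offset and scaling coefficients vanish, and the residue is a combination of the position translations and rotations, which is independent provided the $p_i$ affinely span $\mathbb{R}^d$ — the standard non-degeneracy hypothesis carried over from distance rigidity. Thus $\Dim\mathcal{T}=d(d+1)/2+2$, and combining with (i) completes the proof. The main obstacle is the scaling computation in (i), where one must invoke the TOF equality rather than treat $\|p_i-p_j\|^2$ and $d_{ij}^2$ as unrelated; the remaining bookkeeping — in particular making the affine-span assumption explicit and handling the block structure $R_{cd}(\sigma)=[\,\text{distance part}\mid\text{clock part}\,]$ — is routine.
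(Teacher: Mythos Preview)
Your proposal is correct and follows essentially the same approach as the paper: identify the $d(d+1)/2+2$ trivial variations (position translations, position rotations, offset shift, overall scaling), verify each lies in $\Null(R_{cd}(\sigma))$ via direct substitution into the row (\ref{joint_constr}), and then invoke Definition \ref{def_IJR} together with rank--nullity. Your treatment is in fact more explicit than the paper's in two places---the scaling computation (the paper simply asserts $R_{cd}(\sigma)\sigma=0$ because $\sigma$ satisfies (\ref{constr_tof})) and the linear-independence argument, where you correctly flag the need for the $p_i$ to affinely span $\mathbb{R}^d$, a non-degeneracy assumption the paper leaves implicit.
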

\begin{proof}
The expression of joint rigid matrix $R_{cd}$ in (\ref{joint_constr}) shows that  $\Span{[(\allone_n\otimes I_d)^T,\allzero^T]^T, [p^T(I_n\otimes J_d^1)^T,\allzero^T]^T,...,[p^T(I_n\otimes J_d^{d(d-1)/2})^T,\allzero^T]^T}\subseteq \Null(R_{cd})$. It can be observed that these vectors correspond to translations and rotations of position configuration $p$ in $d$-dimensional space. We also have $[\allzero^T,\allone_n^T\otimes [0,1]]^T\subseteq\Null(R_{cd})$, corresponding to a translation of the clock offset configuration $\beta$.

It is also true that $\Span{\sigma}\subseteq\Null(R_{cd})$ since $\sigma=[p^T,\varphi^T]^T$ and $R_{cd}(\sigma)\sigma=0$ satisfies the constraints in (\ref{constr_tof}). The configuration $\sigma$ corresponds to a scaling of the position-clock framework. Following from Definition \ref{def_IJR}, $(\mathcal{D},\sigma)$ is infinitesimally joint rigid if and only if $\Dim(\Null(R_{cd}))=d(d+1)/2+2$, i.e., $\rank(R_{cd}(\sigma))=(d+2)n-d(d+1)/2-2$.
\end{proof}

A consequence of Theorem \ref{thm_IJRrank} is a graph $\mathcal{D}$ must be sufficiently connected to be infinitesimally joint rigid.

\begin{cor}\label{cor_minimumedges}
An infinitesimally joint rigid position-clock framework $(\mathcal{D},\sigma)$ in $\mathbb{R}^{d+2}$ only if it has at least $(d+2)n-d(d+1)/2-2$ edges.
\end{cor}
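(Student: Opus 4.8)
The plan is to derive this edge lower bound directly from the rank condition in Theorem~\ref{thm_IJRrank}. The key observation is that $R_{cd}(\sigma)$ has exactly $m = |\mathcal{E}_\mathcal{D}|$ rows, one per directed edge, as is clear from the row description in~(\ref{joint_constr}). For any matrix, the rank cannot exceed the number of rows, so $\rank(R_{cd}(\sigma)) \le m$. If $(\mathcal{D},\sigma)$ is infinitesimally joint rigid, Theorem~\ref{thm_IJRrank} forces $\rank(R_{cd}(\sigma)) = (d+2)n - d(d+1)/2 - 2$, and combining the two gives $m \ge (d+2)n - d(d+1)/2 - 2$ immediately.

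Concretely, the proof is just these two lines: first state that $R_{cd}(\sigma) \in \mathbb{R}^{m \times (d+2)n}$ so $\rank(R_{cd}(\sigma)) \le m$; then invoke Theorem~\ref{thm_IJRrank} to substitute the exact rank value and conclude. No computation beyond this is needed.

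There is essentially no obstacle here — the statement is a routine corollary. The only thing worth being slightly careful about is making sure the reader sees why $R_{cd}$ has $m$ rows (it is built by stacking one row~(\ref{joint_constr}) per edge $(v_i,v_j)\in\mathcal{E}_\mathcal{D}$), and noting that this is a necessary (``only if'') condition, not sufficient, consistent with how the corollary is phrased. One could optionally remark that when $d=2$ the bound reads $m \ge 4n - 5$, which aligns with the $K_4$ minimality observation in Example~\ref{exmp_K3andK4} lifted to the joint setting, but that is just commentary and not part of the proof.

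\begin{proof}
By construction, the joint rigidity matrix $R_{cd}(\sigma)$ has one row of the form~(\ref{joint_constr}) for each directed edge $(v_i,v_j)\in\mathcal{E}_\mathcal{D}$, so $R_{cd}(\sigma)\in\mathbb{R}^{m\times(d+2)n}$ where $m=|\mathcal{E}_\mathcal{D}|$ is the number of edges. Hence $\rank(R_{cd}(\sigma))\le m$. If $(\mathcal{D},\sigma)$ is infinitesimally joint rigid, then by Theorem~\ref{thm_IJRrank}, $\rank(R_{cd}(\sigma))=(d+2)n-d(d+1)/2-2$. Combining the two relations yields $m\ge(d+2)n-d(d+1)/2-2$.
\end{proof}
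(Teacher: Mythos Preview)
Your proof is correct and matches the paper's approach: the paper states the corollary as an immediate consequence of Theorem~\ref{thm_IJRrank} without giving an explicit proof, and your argument (rank bounded by the number of rows, combined with the rank condition) is exactly the intended reasoning.
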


Fig. \ref{fig_digraphs} shows some examples of infinitesimal joint rigidity and flexibility in $\mathbb{R}^{2+2}$. Fig. \ref{fig_digraphs}(a) and \ref{fig_digraphs}(b) are infinitesimally joint rigid since the rank of their joint rigidity matrix equals to $4n-5$. They also exhibit the minimum number of directed edges for joint rigidity on $5$ and $6$ node graphs. Fig. \ref{fig_digraphs}(c) has the same number of edges as Fig. \ref{fig_digraphs}(b) but it is non-infinitesimally joint rigid, showing that the necessary condition in Corollary \ref{cor_minimumedges} is not sufficient.

\begin{figure}[!t]
\begin{center}
\subfigure[Infinitesimally joint rigid ($5$ vertices, $15$ edges, $\rank(R_{cd})=15$)]{\includegraphics[scale=0.265]{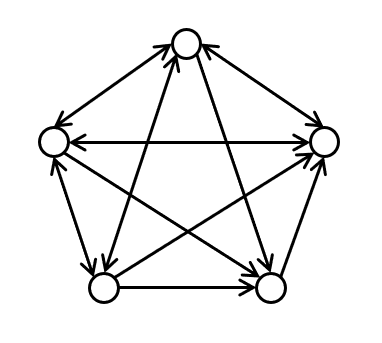}%
\label{Fig_digraph_rk15}}
\hfil
\subfigure[Infinitesimally joint rigid ($6$ vertices, $19$ edges, $\rank(R_{cd})=19$)]{\includegraphics[scale=0.265]{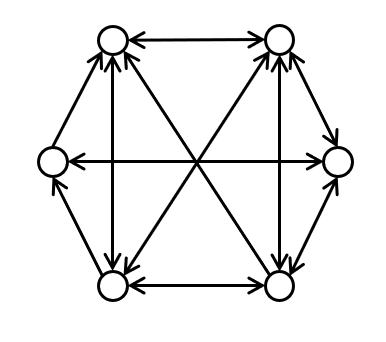}%
\label{Fig_digraph_rk19}}
\hfil
\subfigure[Non-infinitesimally joint rigid ($6$ vertices, $19$ edges, $\rank(R_{cd})=18$)]{\includegraphics[scale=0.265]{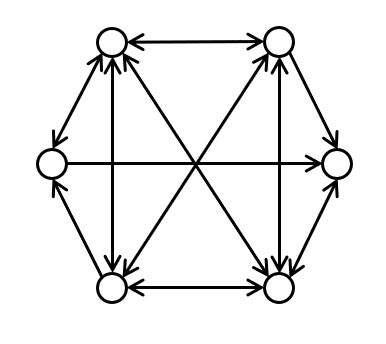}%
\label{Fig_digraph_rk18}}
\end{center}
\caption{Examples of position-clock frameworks in $\mathbb{R}^{2+2}$. }
  \label{fig_digraphs}
\end{figure}

We denote the disoriented graph of $\mathcal{D}=(\mathcal{V},\mathcal{E}_\mathcal{D})$ as $\mathcal{G}=(\mathcal{V},\mathcal{E})$, which is the undirected graph obtained after removing the orientation of the directed edges
of $\mathcal{D}$. In a directed graph $\mathcal{D}$, it is challenging to conclude generic graph properties from the joint rigidity matrix $R_{cd}(\sigma)$ due to its nontrivial expression. As an example, Fig. \ref{fig_digraphs}(b) and \ref{fig_digraphs}(c) show that even with the same disoriented graph and same number of edges, the joint rigidity of the position-clock framework is indeterminable.  We next leverage Assumption 1 (Bidirectional communication) to simplify the problem from a directed graph to an undirected graph and study the graph property for realizing infinitesimal joint rigidity.

Under Assumption 1, assume that $|\mathcal{E}_\mathcal{D}|=2m$, so $|\mathcal{E}|=m$. We reference the edge in $\mathcal{E}$ by edge index $k$ rather than node pair $\{v_i,v_j\}$, so  $d_{k}\triangleq d_{ij}=c(\alpha_jT^j_{(i,j)}+\beta_j-\alpha_iT^i_{(i,j)}-\beta_i)$ for all $\{v_i,v_j\}\in\mathcal{E}$ and $k=1,2,...,m$. The next lemma shows the characteristics of the joint rigidity matrix under Assumption 1. 

\begin{lem}\label{lem_TRcd}
 Given a position-clock framework $(\mathcal{D},\sigma)$ and its disoriented graph $\mathcal{G}$ in $\mathbb{R}^{d+2}$ under Assumption \ref{assump1}, the joint rigidity matrix $R_{cd}(\sigma)$ has the following form after elementary row operations 
 \begin{align}
    TR_{cd}(\sigma)=\begin{bmatrix}
    &R_d(p) &Y\\
 &\allzero &2cDR_c(\varphi)
 \end{bmatrix},
 \end{align}
 where $T$ represents the corresponding elementary row operations, $R_d(p)$ is the distance rigidity matrix of $(\mathcal{G},p)$, $R_c(\varphi)$ is the clock rigidity matrix of $(\mathcal{G},\varphi)$, $D=\diag{d_{k}}$ and $Y$ is a submatrix of $R_{cd}$, whose columns correspond to $\varphi$ and rows correspond to one of the directed edges between each neighboring node pair.
 \end{lem}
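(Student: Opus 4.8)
The plan is to exploit the pairing of directed edges induced by Assumption \ref{assump1}. Since communication is bidirectional, each undirected edge $k=\{v_i,v_j\}\in\mathcal{E}$ corresponds to exactly the two directed edges $(v_i,v_j)$ and $(v_j,v_i)$ of $\mathcal{E}_\mathcal{D}$, so $R_{cd}(\sigma)$ has $2m$ rows that group naturally into $m$ pairs. First I would apply a row permutation (an elementary operation) so that the two rows of each undirected edge are adjacent and the edges appear in the index order $k=1,\dots,m$. Writing the two rows of edge $k$ out explicitly from (\ref{joint_constr}), the crucial observations are: (i) by the TOA distance equality (\ref{ts_equal}), which holds under Assumption \ref{assump1}, the two ``edge distances'' coincide, $d_{ij}=d_{ji}=d_k$; and (ii) the \emph{position} blocks of the two paired rows are identical, each carrying $(p_i-p_j)^T$ in the block of $v_i$ and $(p_j-p_i)^T$ in the block of $v_j$, because $\norm{p_i-p_j}^2=\norm{p_j-p_i}^2$.

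Consequently, subtracting the $(v_j,v_i)$-row from the $(v_i,v_j)$-row (a row-addition operation) annihilates the position part entirely, and in the clock columns it leaves, in the blocks of $v_i$ and $v_j$ respectively,
\[
\bigl[\,cd_k\bigl(T^i_{(i,j)}+T^i_{(j,i)}\bigr),\ 2cd_k\,\bigr]\quad\text{and}\quad\bigl[-cd_k\bigl(T^j_{(i,j)}+T^j_{(j,i)}\bigr),\ -2cd_k\,\bigr].
\]
By the definitions $\overline{T}^i_{ij}=(T^i_{(i,j)}+T^i_{(j,i)})/2$ and $\overline{T}^j_{ij}=(T^j_{(i,j)}+T^j_{(j,i)})/2$, this difference row equals $2cd_k\,[\,\allzero^T,\ s_k(\varphi)^T\,]$, with $s_k(\varphi)$ as in (\ref{sij}) and the $\allzero$ spanning the position columns. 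Stacking these $m$ difference rows and using $R_c(\varphi)=S(\varphi)$ from Lemma \ref{lem_S} yields the bottom block $\begin{bmatrix}\allzero & 2cDR_c(\varphi)\end{bmatrix}$ with $D=\diag{d_k}$.

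For the top block I would retain one row from each pair, say the $(v_i,v_j)$-row; its position part is, by (\ref{joint_constr}), the $k$th row of the distance rigidity matrix $R_d(p)$ of $(\mathcal{G},p)$, while its clock part, collected over $k$, is precisely the submatrix $Y$ described in the statement (columns indexed by $\varphi$, rows corresponding to one directed edge per neighbouring pair). Placing the $m$ retained rows above the $m$ difference rows and absorbing all of the above permutations and row additions into a single nonsingular matrix $T$ gives the claimed form of $TR_{cd}(\sigma)$. There is no deep obstacle here: the whole argument is the cancellation of the position entries, and the only point requiring care is checking the sign bookkeeping so that the combined clock entries of the two paired rows reproduce exactly $2cd_k\,s_k(\varphi)$ — which is exactly the same averaging that turned the directed constraint (\ref{OWRdist}) into the undirected edge clock function (\ref{clock_constr})–(\ref{sij}), now reappearing because $d_{ij}=d_{ji}$.
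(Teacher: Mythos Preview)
Your proposal is correct and follows essentially the same approach as the paper: permute so that the two directed rows of each undirected edge are paired, use $d_{ij}=d_{ji}$ from Assumption~\ref{assump1} to see that the position blocks coincide, and then a single row-subtraction per pair produces the $[\,\allzero\ \ 2cDR_c(\varphi)\,]$ block while the retained row gives $[\,R_d(p)\ \ Y\,]$. In fact the paper's own proof is terser---it just names $T=T_3T_1$ with $T_1$ a permutation and $T_3=\begin{bsmallmatrix}I_m&\allzero\\-I_m&I_m\end{bsmallmatrix}$ and says ``the proof follows directly''---whereas you spell out the cancellation and the identification with $2cd_k\,s_k(\varphi)^T$ explicitly.
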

\begin{proof}
Under Assumption 1, $(v_i,v_j)\in\mathcal{E}_\mathcal{D}$ with $\sigma=[p^T,\varphi^T]^T$ if and only if $(v_j,v_i)\in\mathcal{E}_\mathcal{D}$. The measured distances satisfy $d_{ij}=d_{ji}$, as discussed in Section \ref{sec_clockrigidity}. The elementary row operation is $T=T_3T_1$ where $T_1$ is a row switching elementary matrix so that the $k$th row and $(m+k)$th row of $T_1R_{cd}$ correspond to edges $(v_i,v_j)$ and $(v_j,v_i)$, respectively, and $T_3$ is a row addition elementary matrix which has the form
\begin{align*}
    T_3=\begin{bmatrix}
        I_m & \allzero\\
        -I_m & I_m
    \end{bmatrix}.
\end{align*}
The proof follows directly.
\end{proof}

The next theorem provides a sufficient condition for infinitesimal joint rigidity.
\begin{thm}\label{thm_IDR+ICR->IJR}
Under Assumption \ref{assump1}, a position-clock framework $(\mathcal{D},\sigma)$ in $\mathbb{R}^{d+2}$ with $\sigma=[p^T,\varphi^T]^T$ is infinitesimally joint rigid if the position framework $(\mathcal{G},p)$ in $\mathbb{R}^d$ is infinitesimally distance rigid and the clock framework $(\mathcal{G},\varphi)$ is infinitesimally clock rigid where $\mathcal{G}$ is the disoriented graph of $\mathcal{D}$.
\end{thm}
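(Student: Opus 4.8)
The plan is to verify Definition~\ref{def_IJR} directly: I will show that every infinitesimal joint variation $\delta\sigma=[\delta p^T,\delta\varphi^T]^T$ of $(\mathcal{D},\sigma)$, i.e.\ every $\delta\sigma$ with $R_{cd}(\sigma)\delta\sigma=\allzero$, is trivial. (One could instead aim for the rank count of Theorem~\ref{thm_IJRrank}, but the containment of null spaces is what the hypotheses deliver most cleanly.) The engine is Lemma~\ref{lem_TRcd}: since the elementary row operations $T$ are invertible, $R_{cd}(\sigma)\delta\sigma=\allzero$ is equivalent to the system
\begin{align*}
R_d(p)\,\delta p+Y\,\delta\varphi=\allzero,\qquad 2cD\,R_c(\varphi)\,\delta\varphi=\allzero .
\end{align*}
The inter-node distances $d_k$ are strictly positive, so $D=\diag{d_k}$ (and $2c$) is invertible and the second equation collapses to $R_c(\varphi)\delta\varphi=\allzero$. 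Invoking the hypothesis that $(\mathcal{G},\varphi)$ is infinitesimally clock rigid, Theorem~\ref{thm_ICR} gives $\Null(R_c(\varphi))=\Span{\allone_n\otimes[0,1]^T,\varphi}$, so $\delta\varphi=s_1(\allone_n\otimes[0,1]^T)+s_2\varphi$ for some scalars $s_1,s_2$.

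The key step is a short calculation with the block $Y$ read off from~(\ref{joint_constr}): for the $k$th edge $\{v_i,v_j\}$, the corresponding row of $Y$ carries $cd_kT^i_{(i,j)}$ and $cd_k$ in the $v_i$ slots and $-cd_kT^j_{(i,j)}$ and $-cd_k$ in the $v_j$ slots. Hence the $k$th entry of $Y(\allone_n\otimes[0,1]^T)$ is $cd_k-cd_k=0$, while the $k$th entry of $Y\varphi$ is $cd_k(T^i_{(i,j)}\alpha_i+\beta_i-T^j_{(i,j)}\alpha_j-\beta_j)=-d_k^2$, using $d_k=c(\alpha_jT^j_{(i,j)}+\beta_j-\alpha_iT^i_{(i,j)}-\beta_i)$. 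On the other hand the $k$th entry of $R_d(p)\,p$ is $\norm{p_i-p_j}^2=d_k^2$ by~(\ref{constr_tof}). Therefore $Y(\allone_n\otimes[0,1]^T)=\allzero$ and $Y\varphi=-R_d(p)\,p$.

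Substituting $\delta\varphi$ into the first equation then yields $R_d(p)(\delta p-s_2p)=\allzero$, i.e.\ $\delta p-s_2p\in\Null(R_d(p))$. Because $(\mathcal{G},p)$ is infinitesimally distance rigid, $\Null(R_d(p))$ is exactly the span of the trivial position motions, so $\delta p=s_2p+w$ with $w$ a translation--rotation of $p$. Assembling,
\begin{align*}
\delta\sigma=s_2\,[p^T,\varphi^T]^T+s_1\,[\allzero^T,\allone_n^T\otimes[0,1]]^T+[w^T,\allzero^T]^T ,
\end{align*}
which is a scaling of the whole framework, plus a translation of $\beta$, plus a trivial position motion --- precisely a trivial joint variation. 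Hence $(\mathcal{D},\sigma)$ is infinitesimally joint rigid.

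I expect the only real obstacle to be pinning down the identity $Y\varphi=-R_d(p)\,p$ (together with $Y(\allone_n\otimes[0,1]^T)=\allzero$): this is what ties the clock scaling direction $\varphi$ to the position scaling direction $p$, so that the coupled equation $R_d(p)\delta p+Y\delta\varphi=\allzero$ forces $\delta p$ to be $s_2p$ plus a trivial position motion rather than introducing new freedom. Everything else is routine assembly of Lemma~\ref{lem_TRcd}, Theorem~\ref{thm_ICR}, the invertibility of $D$, and the defining property of infinitesimal distance rigidity.
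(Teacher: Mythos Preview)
Your proposal is correct and takes essentially the same approach as the paper: both use Lemma~\ref{lem_TRcd} to decompose the system, both establish the identities $Y(\allone_n\otimes[0,1]^T)=\allzero$ and $R_d(p)p+Y\varphi=\allzero$, and both invoke the null-space characterizations of $R_d$ and $R_c$ from the distance- and clock-rigidity hypotheses. The only cosmetic difference is that the paper packages the conclusion as the rank count $\rank(R_{cd})=\rank(R_d)+\rank(R_c)$ and then appeals to Theorem~\ref{thm_IJRrank}, whereas you verify Definition~\ref{def_IJR} directly by exhibiting every $\delta\sigma\in\Null(R_{cd})$ as a trivial variation; the underlying linear algebra is identical.
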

\begin{proof}
Suppose that a position framework $(\mathcal{G},p)$ in $\mathbb{R}^d$ is infinitesimally distance rigid and a clock framework $(\mathcal{G},\varphi)$ is infinitesimally clock rigid. By Theorem \ref{thm_ICR} and the distance rigidity theory in \cite{anderson2008rigid}, we have $\Null(R_d(p))=\Span{\allone_n\otimes I_d,(I_n\otimes J_d^1)p,...,(I_n\otimes J_d^{d(d-1)/2})p}$ and $\Null(R_c(\varphi))=\Span{\allone_n\otimes[0,1]^T,\varphi}$.

By Lemma \ref{lem_TRcd}, from the expression of $Y$ and (\ref{constr_tof}), we have $R_d(p)\allzero+Y(\allone_n\otimes [0,1]^T)=0$ and $R_d(p)p+Y\varphi=0$. So $\rank(R_{cd}(\sigma))=\rank(R_d(p))+\rank(R_c(\varphi))=(dn-d(d+1)/2)+(2n-2)=(d+2)n-d(d+1)/2-2$. By Theorem \ref{thm_IJRrank}, $(\mathcal{D},\sigma)$ in $\mathbb{R}^{d+2}$ is infinitesimally joint rigid.
\end{proof}

Theorem \ref{thm_IDR+ICR->IJR} shows that under Assumption \ref{assump1} the joint rigidity of the position-clock framework $(\mathcal{D},\sigma)$ can be decoupled into a distance rigidity problem on the position framework $(\mathcal{G},p)$ and a clock rigidity problem on the clock framework $(\mathcal{G},\varphi)$. Consequently, distance rigidity theory in \cite{anderson2008rigid} and the clock rigidity theory in Section \ref{sec_clockrigidity} can be applied to establish joint rigidity.

A sufficient and necessary condition holds for infinitesimal joint rigidity with $d=2$, i.e., the corresponding position framework is in $2$-dimensional space.

\begin{thm}\label{thm_joint2D_NS}
Under Assumptions \ref{assump1} and \ref{assum2}, for a position-clock framework $(\mathcal{D},\sigma)$ in $\mathbb{R}^{2+2}$ with $\sigma=[p^T,\varphi^T]^T$ and corresponding disoriented graph $\mathcal{G}$, the following statements are equivalent for generic position configuration $p$ and generic clock configuration $\varphi$:
\begin{enumerate}[label=(\alph*)]
    \item the position-clock framework $(\mathcal{D},\sigma)$ in $\mathbb{R}^{2+2}$ is infinitesimally joint rigid;
    \item the clock framework $(\mathcal{G},\varphi)$ in $\mathbb{R}^2$ is infinitesimally clock rigid;
    \item the position framework $(\mathcal{G},p)$ in $\mathbb{R}^2$ is infinitesimally distance rigid with at least one redundant edge.
\end{enumerate}
\end{thm}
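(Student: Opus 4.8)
The plan is to establish the equivalence by combining the decoupling result of Theorem \ref{thm_IDR+ICR->IJR}, the rank formula of Theorem \ref{thm_IJRrank}, and the characterization of infinitesimal clock rigidity via bearing rigidity (Theorem \ref{thm_GBR<->ICR} and Corollary \ref{cor_ICR_Laman}). The cycle I would prove is (b) $\Leftrightarrow$ (c), and then (b)+(c) $\Rightarrow$ (a) $\Rightarrow$ (b). The key structural tool throughout is Lemma \ref{lem_TRcd}, which under Assumption \ref{assump1} block-triangularizes the joint rigidity matrix into $R_d(p)$ and $2cDR_c(\varphi)$ with $D=\diag{d_k}$ invertible (the inter-node distances are positive), so that $\rank(R_{cd}(\sigma)) = \rank(R_d(p)) + \rank(R_c(\varphi))$ whenever the off-diagonal block $Y$ contributes nothing new to the rank — which is exactly what the identities $R_d(p)p + Y\varphi = 0$ and $R_d(p)\allzero + Y(\allone_n\otimes[0,1]^T)=0$ from the proof of Theorem \ref{thm_IDR+ICR->IJR} guarantee.

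For (b) $\Leftrightarrow$ (c): by Corollary \ref{cor_ICR_Laman}, for generic $\varphi$ the clock framework $(\mathcal{G},\varphi)$ is infinitesimally clock rigid if and only if $\mathcal{G}$ contains a Laman spanning subgraph $\mathcal{G}_\ell$ with $\mathcal{G}\neq\mathcal{G}_\ell$, i.e., $\mathcal{G}$ is generically bearing rigid in $\mathbb{R}^2$ with at least one redundant edge. On the other side, a classical fact from distance rigidity theory in $\mathbb{R}^2$ is that a graph is generically (distance) rigid in $\mathbb{R}^2$ if and only if it contains a Laman spanning subgraph, so $(\mathcal{G},p)$ is infinitesimally distance rigid with at least one redundant edge for generic $p$ if and only if $\mathcal{G}$ contains a Laman spanning subgraph $\mathcal{G}_\ell\neq\mathcal{G}$. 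Since in $\mathbb{R}^2$ the Laman condition governs both distance rigidity and bearing rigidity, (b) and (c) become the same graph-theoretic statement about $\mathcal{G}$, and the equivalence follows. I would also note here that "one redundant edge for distance rigidity" and "one redundant edge for bearing rigidity" coincide in $\mathbb{R}^2$ precisely because both reduce to containing a proper Laman spanning subgraph.

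For (b)+(c) $\Rightarrow$ (a): this is immediate from Theorem \ref{thm_IDR+ICR->IJR}, since (c) in particular implies $(\mathcal{G},p)$ is infinitesimally distance rigid and (b) gives $(\mathcal{G},\varphi)$ infinitesimally clock rigid. For (a) $\Rightarrow$ (b): suppose $(\mathcal{D},\sigma)$ is infinitesimally joint rigid, so by Theorem \ref{thm_IJRrank}, $\rank(R_{cd}(\sigma)) = (d+2)n - d(d+1)/2 - 2$ with $d=2$, i.e., $4n-5$. Using Lemma \ref{lem_TRcd} and the fact that $D$ is invertible, $\rank(R_{cd}(\sigma)) \le \rank(R_d(p)) + \rank(R_c(\varphi))$, while the universal bounds $\rank(R_d(p))\le 2n-3$ (distance rigidity in $\mathbb{R}^2$) and $\rank(R_c(\varphi))\le 2n-2$ (Lemma \ref{lem_rankinequality}) force both to be attained: $\rank(R_c(\varphi)) = 2n-2$, which by Theorem \ref{thm_ICR} is exactly infinitesimal clock rigidity, giving (b); and $\rank(R_d(p)) = 2n-3$ gives infinitesimal distance rigidity, which combined with (b)$\Leftrightarrow$(c) yields (c) as well.

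\textbf{Main obstacle.} The delicate point is the rank-additivity step: from Lemma \ref{lem_TRcd} one only gets $\rank(R_{cd}) \le \rank(R_d(p)) + \rank(2cDR_c(\varphi))$ in general, and equality requires controlling the coupling block $Y$. I expect the cleanest route is to argue that the row space of $[\,R_d(p)\ \ Y\,]$ and that of $[\,\allzero\ \ 2cDR_c(\varphi)\,]$ intersect trivially modulo the already-identified common null directions — equivalently, to verify that the only linear dependencies among rows of $TR_{cd}$ are those already exhibited in the proof of Theorem \ref{thm_IDR+ICR->IJR}. The other place needing care is the genericity bookkeeping: "generic $p$" and "generic $\varphi$" must be chosen so that all three ranks simultaneously hit their generic maxima, which is fine since the exceptional sets are each of measure zero, but it should be stated explicitly that the conjunction of finitely many generic conditions is still generic.
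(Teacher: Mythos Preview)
Your equivalence (b)~$\Leftrightarrow$~(c) and the implication (b)+(c)~$\Rightarrow$~(a) are fine and match the paper's reasoning. The gap is in your (a)~$\Rightarrow$~(b) step: the rank inequality you invoke is in the wrong direction. For a block upper-triangular matrix
\[
TR_{cd}(\sigma)=\begin{bmatrix} R_d(p) & Y\\ \allzero & 2cDR_c(\varphi)\end{bmatrix},
\]
one always has $\rank(TR_{cd})\ge \rank(R_d)+\rank(R_c)$, not $\le$; a one-line counterexample to your stated inequality is $A=[0]$, $B=[1]$, $C=[0]$. Hence from $\rank(R_{cd})=4n-5$ you only obtain $\rank(R_d)+\rank(R_c)\le 4n-5$, which does not force either summand to its maximum. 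Your ``main obstacle'' paragraph inherits the same sign error, so the proposed fix (controlling $Y$ to get equality) is aimed at the wrong target.

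The paper avoids rank-additivity entirely. For (a)~$\Rightarrow$~(c) it works in the null space: infinitesimal joint rigidity gives $\Null(R_{cd})=\Span{u_1,\dots,u_5}$ explicitly, and any $w\in\Null(R_d(p))$ lifts to $[w^T,\allzero^T]^T\in\Null(TR_{cd})$ trivially (the bottom block sees only $\allzero$). Writing this lift as $\sum a_iu_i$ and reading the last $2n$ coordinates forces $a_4=a_5=0$, so $w\in\Span{x_1,x_2,x_3}$ and $(\mathcal{G},p)$ is infinitesimally distance rigid. The redundant edge is then obtained by a parity/edge-count argument: under Assumption~\ref{assump1} the number of directed edges is even, so $\rank(R_{cd})=4n-5$ forces at least $4n-4$ rows, i.e.\ $m\ge 2n-2$ undirected edges, one more than the $2n-3$ needed for distance rigidity. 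Note in particular that the paper never establishes $\rank(R_c)=2n-2$ directly from (a); it gets (c) first and then (b) via (b)~$\Leftrightarrow$~(c), the reverse of your intended route.
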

\begin{proof}
Since infinitesimal distance rigidity is equivalent to infinitesimal bearing rigidity in $\mathbb{R}^2$, (c) implies that $\mathcal{G}$ is generically bearing rigid with at least one redundant edge. So by Theorem \ref{thm_GBR<->ICR}, (b) is equivalent to (c). 

With the equivalence between (b) and (c), it follows immediately from Theorem \ref{thm_IDR+ICR->IJR} that (c)$\Rightarrow$(a). We next prove that (a)$\Rightarrow$(c).

Suppose the position-clock framework $(\mathcal{D},\sigma)$ is infinitesimally joint rigid in $\mathbb{R}^{2+2}$. By Theorem \ref{thm_IJRrank}, $\rank(R_{cd}(\sigma))=4n-5$ and $\Null(R_{cd}(\sigma))=\Span{u_1,u_2,u_3,u_4,u_5}=\Span{[x_1^T,\allzero^T]^T,[x_2^T,\allzero^T]^T,[x_3^T,\allzero^T]^T,[\allzero^T,y_4^T]^T,[p^T,\varphi^T]^T}$ where $x_1=\allone_n\otimes[1,0]^T$, $x_2=y_4=\allone_n\otimes[0,1]^T$, $x_3=(I_n\otimes J^1_2)p$. Since elementary row operations preserve the null space, $\Null(R_{cd})={\Null(TR_{cd})}$.

Consider a nonzero vector $w\in\Null(R_d(p))$. Let $u=[w^T,\allzero^T]^T\in\mathbb{R}^{4n}$, then by Lemma \ref{lem_TRcd}, $u\in \Null(TR_{cd}(\sigma))$. Hence $u$ must be a linear combination of $u_1$, $u_2$, $u_3$, $u_4$ and $u_5$, i.e., for some scalar $a_1$, $a_2$, $a_3$, $a_4$ and $a_5$, not all zero, $u=a_1u_1+a_2u_2+a_3u_3+a_4u_4+a_5u_5$. Examining the last $2n$ rows of $u$ gives $\allzero=a_4y_4+a_5\varphi$. Vectors $y_4$ and $\varphi$ are linearly independent, hence $a_4=a_5=0$ and $a_1,a_2,a_3$ are not all zero. For the remaining $2n$ rows, 
\begin{align}
    w=a_1x_1+a_2x_2+a_3x_3.
    \label{eqn_nullRd}
\end{align}
Since (\ref{eqn_nullRd}) holds for any $w\in \Null(R_d)$ and $x_1,x_2,x_3$ are linearly independent, then $\rank(R_d)=2n-3$ and $\Null(R_d)=\Span{x_1,x_2,x_3}$, i.e., the position framework $(\mathcal{G},p)$ is infinitesimally distance rigid. 

Under Assumption 1, the number of edges in $\mathcal{D}$ must be even. Since $\rank(R_{cd}(\sigma))=4n-5$, the joint rigidity matrix $R_{cd}$ should have at least $(4n-4)$ rows. So the distance rigidity matrix $R_d(p)$ has at least $(2n-2)$ rows, i.e., there are at least $(2n-2)$ edges in $\mathcal{G}$. As only $(2n-3)$ edges are necessary for infinitesimal distance rigidity, then the position framework $(\mathcal{G},p)$ is infinitesimally distance rigid with at least one redundant edge, so (a)$\Rightarrow$(c).
\end{proof}

Theorem \ref{thm_joint2D_NS} proves the equivalence between clock rigidity and joint rigidity for $d=2$. So the topological method to establish clock rigidity in Corollary \ref{cor_ICR_Laman} is also applicable to joint rigidity. 

The necessary and sufficient condition in Theorem \ref{thm_joint2D_NS} can not be extended to $d=3$. This follows as that statement (b)$\Rightarrow$(c) does not hold for $d=3$ since distance rigidity in $\mathbb{R}^3$ with at least one redundant edge (whose minimum edge number is $3n-5$) is not a necessary condition for bearing rigidity in $\mathbb{R}^2$ with at least one redundant edge (whose minimum edge number is $2n-2$). It can be seen that for $n\geq4$, $2n-2<3n-5$. A cardinality argument can be used to show the contradiction. Further, statement (a)$\Rightarrow$(c) also does not hold for $d=3$. A simple counterexample is that for $n=4$, by Corollary \ref{cor_minimumedges}, the minimum edge number for a joint rigid graph is $12$, which is also the minimum edge number for distance rigidity in $\mathbb{R}^3$, which does not satisfy the redundant edge requirement in statement (c).

Many wireless sensor network applications require position estimation in $3$-dimensional space. So we establish the following theorem, which shows a sufficient condition for infinitesimal joint rigidity in $\mathbb{R}^{3+2}$.

\begin{thm}
Under Assumptions \ref{assump1} and \ref{assum2}, for any generic clock configuration $\varphi$, a position-clock framework $(\mathcal{D},\sigma)$ in $\mathbb{R}^{3+2}$ with $\sigma=[p^T,\varphi^T]^T$ is infinitesimally joint rigid if the position framework $(\mathcal{G},p)$ in $\mathbb{R}^3$ is infinitesimally distance rigid and $n\geq 4$, where $\mathcal{G}$ is the disoriented graph of $\mathcal{D}$.
\end{thm}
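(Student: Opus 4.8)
The plan is to decouple the joint rigidity into its distance and clock parts via Theorem~\ref{thm_IDR+ICR->IJR}, so that only the clock side requires new work. Since the hypothesis already supplies that $(\mathcal{G},p)$ is infinitesimally distance rigid in $\mathbb{R}^3$, Theorem~\ref{thm_IDR+ICR->IJR} instantiated at $d=3$ reduces the claim to showing that the clock framework $(\mathcal{G},\varphi)$ is infinitesimally clock rigid for every generic clock configuration $\varphi$. By Theorem~\ref{thm_GBR<->ICR} (equivalently Corollary~\ref{cor_ICR_Laman}), this is in turn equivalent to the purely graph-theoretic statement that the disoriented graph $\mathcal{G}$ is generically bearing rigid in $\mathbb{R}^2$ with at least one redundant edge, so that is what I would establish.

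For the generic bearing rigidity, I would argue that infinitesimal distance rigidity of $(\mathcal{G},p)$ in $\mathbb{R}^3$ makes $\mathcal{G}$ generically distance rigid in $\mathbb{R}^3$, and then invoke the standard fact from combinatorial rigidity theory that a graph generically rigid in $\mathbb{R}^d$ is generically rigid in $\mathbb{R}^{d-1}$ (which follows from the quotient relationship between the generic $d$- and $(d-1)$-dimensional rigidity matroids). Applied with $d=3$ this gives that $\mathcal{G}$ is generically distance rigid in $\mathbb{R}^2$, hence generically bearing rigid in $\mathbb{R}^2$ since the two notions coincide in the plane. For the redundant edge I would use a cardinality count tied to $n\ge 4$: infinitesimal distance rigidity of $(\mathcal{G},p)$ in $\mathbb{R}^3$ forces $\rank(R_d(p))=3n-6$, so $\mathcal{G}$ has at least $3n-6$ edges, and for $n\ge 4$ one has $3n-6\ge 2n-2>2n-3$; since a spanning Laman subgraph of $\mathcal{G}$ has exactly $2n-3$ edges, $\mathcal{G}$ is strictly larger than such a subgraph and therefore carries a redundant edge for its bearing rigidity. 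By Corollary~\ref{cor_ICR_Laman}, $(\mathcal{G},\varphi)$ is then infinitesimally clock rigid for generic $\varphi$, and combining this with the given infinitesimal distance rigidity of $(\mathcal{G},p)$ in $\mathbb{R}^3$, Theorem~\ref{thm_IDR+ICR->IJR} yields that $(\mathcal{D},\sigma)$ is infinitesimally joint rigid.

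I expect the dimension-reduction step, namely $\mathbb{R}^3$-generic rigidity $\Rightarrow$ $\mathbb{R}^2$-generic rigidity, to be the main obstacle: it is not developed elsewhere in the paper, and since there is no Laman-type combinatorial certificate for $3$-rigidity it cannot simply be re-derived here, so it must be imported with a careful reference. Everything else is a routine chaining of earlier results; the care points are that Theorem~\ref{thm_IDR+ICR->IJR} is stated for general $\mathbb{R}^{d+2}$, so its use at $d=3$ is legitimate, that Assumptions~\ref{assump1} and~\ref{assum2} are exactly what Theorems~\ref{thm_IDR+ICR->IJR} and~\ref{thm_GBR<->ICR} require, and that the hypothesis $n\ge 4$ is genuinely needed, both to make $\rank(R_d(p))=3n-6$ exact and to guarantee the strict inequality $3n-6>2n-3$ that produces the redundant edge (for $n=4$ the statement degenerates to the $K_4$ case of Example~\ref{exmp_K3andK4}).
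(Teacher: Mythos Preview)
Your proposal is correct and follows the same overall skeleton as the paper's proof: reduce via Theorem~\ref{thm_IDR+ICR->IJR} to showing that $(\mathcal{G},\varphi)$ is infinitesimally clock rigid, then via Theorem~\ref{thm_GBR<->ICR}/Corollary~\ref{cor_ICR_Laman} reduce to the purely combinatorial claim that $\mathcal{G}$ is generically bearing rigid in $\mathbb{R}^2$ with a redundant edge, and finally use the edge count $3n-6>2n-3$ (for $n\ge 4$) to obtain the redundancy.

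The one place where you and the paper diverge is exactly the step you flagged: the implication ``generically rigid in $\mathbb{R}^3$ $\Rightarrow$ generically rigid in $\mathbb{R}^2$''. You propose to import this as an external matroid-theoretic fact. The paper instead argues it constructively: starting from a minimally $3$-rigid spanning subgraph $\mathcal{G}_s$ of $\mathcal{G}$ obtained by $\mathbb{R}^3$-Henneberg moves from a triangle, it replays the same sequence but replaces each $\mathbb{R}^3$ vertex-addition and edge-splitting by its $\mathbb{R}^2$ counterpart (adding one fewer edge per step), producing a Laman spanning subgraph $\mathcal{G}'_s\subsetneq\mathcal{G}_s\subseteq\mathcal{G}$. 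This keeps the argument self-contained and simultaneously exhibits the strict containment that gives the redundant edge, whereas your route is shorter but depends on a result outside the paper. Either way the conclusion is the same, and your cardinality argument for the redundant edge is equivalent to the paper's strict inclusion $\mathcal{G}'_s\subset\mathcal{G}$.
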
 

\begin{proof}\label{thm_joint3D}
By Theorem \ref{thm_IDR+ICR->IJR}, we only need to show that $(\mathcal{G},\varphi)$ is infinitesimally clock rigid.

Since $(\mathcal{G},p)$ is infinitesimally distance rigid in $\mathbb{R}^3$, there must exist a subgraph $\mathcal{G}_s=(\mathcal{V},\mathcal{E}_s)$ with $|\mathcal{E}_s|=3|\mathcal{V}|-6$, such that every subset of $k\geq 3$ vertices spans at most $3k-6$ edges \cite{anderson2008rigid}. In other words, $\mathcal{G}_s$ can be formed from a triangle graph by Henneberg construction in $\mathbb{R}^3$, including vertex addition in $\mathbb{R}^3$ which adds a new vertex with three new incident edges, and edge splitting in $\mathbb{R}^3$ which removes an existing edge and adds a new vertex with four new incident edges. 

Now we can reconstruct the graph from a triangle graph by replacing the Henneberg operations in $\mathbb{R}^3$ by the corresponding Henneberg operations in $\mathbb{R}^2$, i.e., add one less edge in every operation (add two less edges if the edge to be removed does not exist). Then for $n\geq 4$, the resulting graph $\mathcal{G}'_s\subset \mathcal{G}_s\subseteq\mathcal{G}$ and $\mathcal{G}'_s$ is a Laman graph.

By Laman's theorem, with generic $\varphi$, a clock framework $(\mathcal{G},\varphi)$ is infinitesimally distance rigid if and only if a subgraph of $\mathcal{G}$ is a Laman graph \cite{tay1985generating}. So, $(\mathcal{G},\varphi)$ is infinitesimally distance rigid. Since for $n\geq 4$, $\mathcal{G}'_s\subset \mathcal{G}$, there must exist at least one redundant edge for distance rigidity. By Theorem \ref{thm_GBR<->ICR}, the clock framework $(\mathcal{G},\varphi)$ is infinitesimally clock rigid. 
\end{proof}

The analysis and results of joint rigidity share many similarities with distance, bearing and clock rigidity theory. Their parallels are summarized in Table \ref{tab_rigiditysummary}.


\section{Joint position and clock estimation} \label{sec_estimation}
Joint rigidity theory and the corresponding results show the graph property with which the nodes' position and clock in a TOA-based UWB sensor network can be determined up to some trivial variations, i.e., translation and rotation of position configuration $p$, translation of the clock offset configuration $\beta$ and a scaling of entire position-clock framework. In this section, we study the position and clock estimation of a UWB sensor network based on the clock rigidity theory and demonstrate through simulation.

\subsection{Clock estimation}
Let $\hat{\varphi}$ be an estimation of the true clock configuration $\varphi$. We consider the estimation error
\begin{align}
    e_c(\hat{\varphi},\varphi)=F_c(S(\hat{\varphi}),\hat{\varphi})-F_c(S(\varphi),\varphi),
\end{align}
where $F_c$ is the clock function defined in (\ref{clockfunc}). Since the clock configuration of the network is assumed to be constant, by (\ref{clockfunc_constr}), $F_c(S(\varphi),\varphi)=0$. We write the estimation error as $e_c(\hat{\varphi})=F_c(S(\hat{\varphi}),\hat{\varphi})$ for simplicity.

The objective of the clock estimation can be stated as the minimization of the following function
\begin{align}\label{obj_1}
    P_1(\hat{\varphi})=\frac{1}{2}\norm{e_c(\hat{\varphi})}^2=\frac{1}{2}\sum_{\{v_i,v_j\}\in\mathcal{E}}e_{c_{ij}}(\hat{\varphi})^2,
\end{align}
where $e_{c_{ij}}(\hat{\varphi})=\hat{\alpha}_j\overline{T}^j_{ij}+\hat{\beta}_j-\hat{\alpha}_i\overline{T}^i_{ij}-\hat{\beta_i}$. The minimization of (\ref{obj_1}) can be obtained by the gradient descent method
\begin{align}\label{GD_1}
    \dot{\hat{\varphi}}=-k_g\frac{\partial P_1(\hat{\varphi})}{\partial \hat{\varphi}}=-k_gR_c(\hat{\varphi})^Te_c(\hat{\varphi}),
\end{align}
where $k_g$ is a positive gain.

\begin{figure}[t]
\begin{center}
\subfigure[Clock estimation]{\includegraphics[width=2.5in]{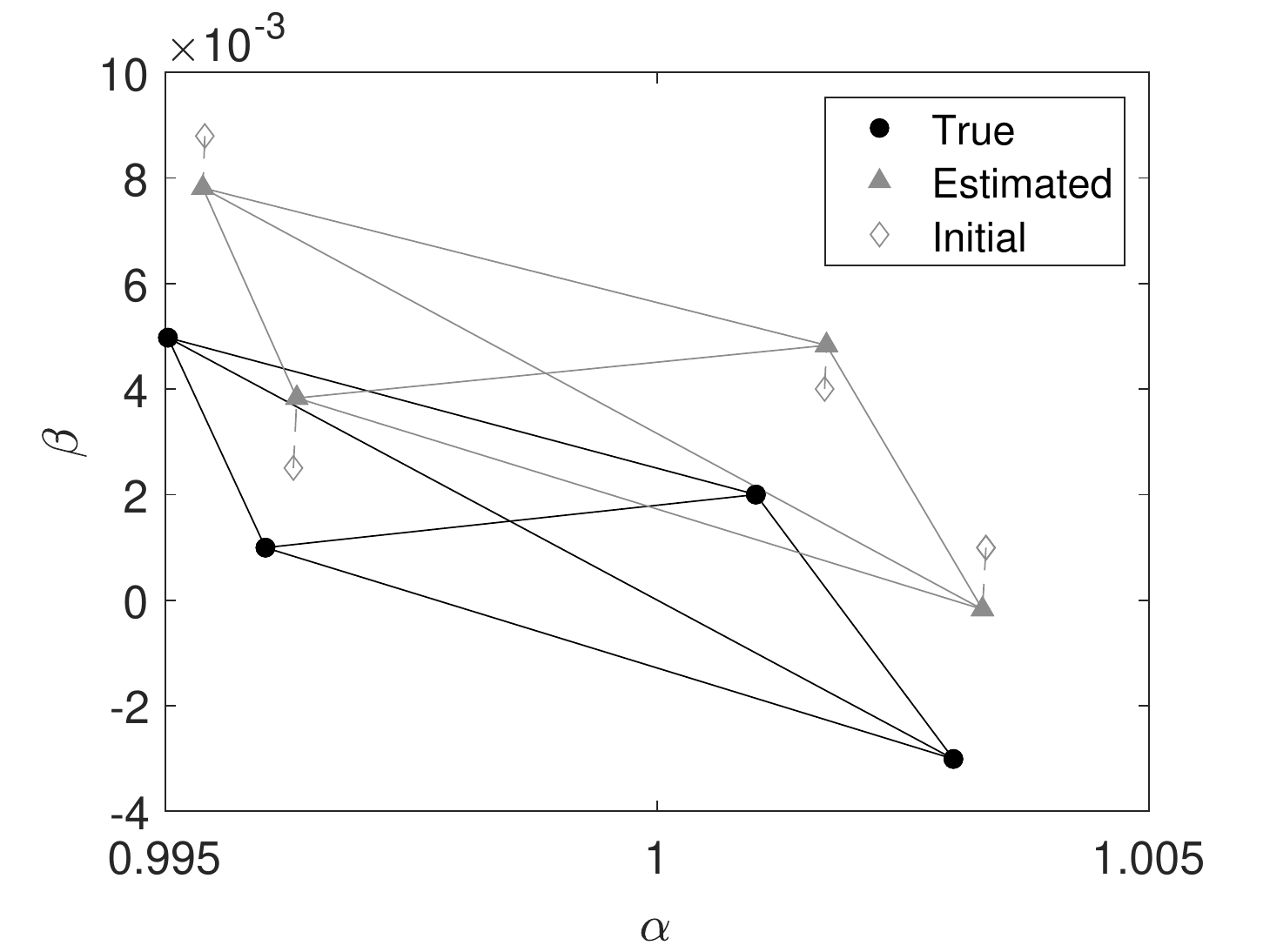}%
\label{Fig_clockest}}
\hfil
\subfigure[Estimation error $\norm{e_{c_{ij}}}$]{\includegraphics[width=2.5in]{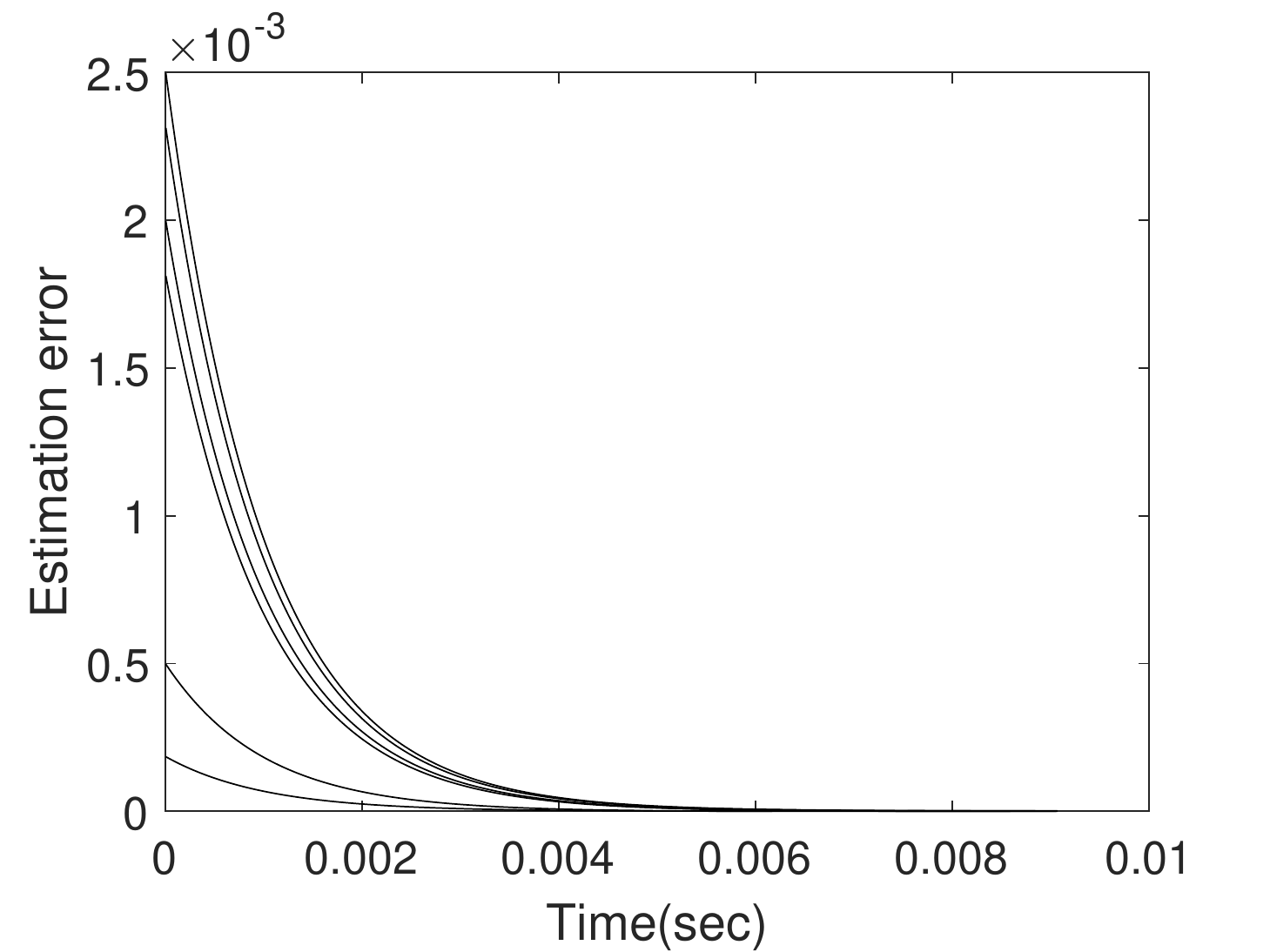}%
\label{Fig_clockest_error}}
\end{center}
\caption{Simulation results of clock estimation. Subplot (a) shows the initial clock configuration (diamond), the estimated clock configuration (triangle) and the true clock configuration (circle) of an infinitesimally clock rigid framework $(K_4,\varphi)$. Subplot (b) shows the behavior of the estimation error on each edge.}
\label{Fig_clockest_sim}
\end{figure}

If a clock framework $(\mathcal{G},\varphi)$ is infinitesimally clock rigid then for any sufficiently small neighborhood around the true clock configuration $\varphi$, the clock estimate $\hat{\varphi}$ converges to a set where $P_1(\hat{\varphi})=0$. This follows from LaSalle's invariance principle and the semidefiniteness of $R_c(\hat{\varphi})^TR_c(\hat{\varphi})$. So the clock estimate $\hat{\varphi}$ must be a trivial variation of the true configuration (a translation of clock offset configuration $\beta$ and a scaling of the entire clock framework), i.e., the estimated $\hat{\varphi}$ should reach a clock configuration such that
\begin{align}
    \hat{\varphi}=k_s\varphi+\allone_n\otimes[0,k_\beta]^T,
\end{align}
where $k_s$ is the scaling factor and $k_\beta$ is the translation factor of $\beta$. Simulation results are shown in Fig. \ref{Fig_clockest_sim}. The clock framework $(K_4,\varphi)$ is infinitesimally clock rigid as shown in Example \ref{exmp_K3andK4}. Given a random initial configuration in the sufficiently small neighborhood of true configuration, the estimated configuration is a trivial variation (translation of $\beta$ and scaling of the true configuration) and estimation errors converge to zero.

\begin{table*}[t]
  \centering
  \caption{Summary of different rigidity theories}
 \label{tab_rigiditysummary}
 \begin{tabular}{ |c||c|c|c|c| } 
\hline
  & Distance rigidity & Bearing rigidity & Clock rigidity & Joint rigidity \\ 
 \hline
  Framework & \multicolumn{2}{c|}{$(\mathcal{G},p)$, $p_i\in\mathbb{R}^d$} & $(\mathcal{G},\varphi)$, $\varphi_i\in\mathbb{R}^2$ & $(\mathcal{G},\sigma)$, $\sigma_i\in\mathbb{R}^{d+2}$ \\
  \hline
 Measurement & distance $d_{ij}$ & bearing $b_{ij}$ & timestamps  $T_{(i,j)}^i,T_{(i,j)}^j$ &  timestamps $T_{(i,j)}^i,T_{(i,j)}^j$\\
 \hline
 Invariant equality & $\norm{p_i-p_j}=d_{ij}$ & $\frac{p_i-p_j}{\norm{p_i-p_j}}=b_{ij}$  & Equation (\ref{clock_constr}) &  Equation (\ref{constr_tof})  \\
 \hline
 Rigidity matrix & $R_d$ & $R_b$ & $R_c$ & $R_{cd}$ \\
 \hline
  Trivial infinitesimal variations & translations, rotations & translations, scaling & translation of $\beta$, scaling & \makecell{translations of $p$, rotations of $p$,\\translation of $\beta$, scaling of $\sigma$}\\
  \hline
 Infinitesimal rigidity & \makecell{$\rank(R_d)=$\\$dn-d(d+1)/2$} & \makecell{$\rank(R_b)=$\\$dn-d-1$} &  \makecell{$\rank(R_c)=$\\$2n-2$}& \makecell{$\rank(R_{cd})=$\\$(d+2)n-d(d+1)/2-2$ }\\
 \hline
 Minimum rigid graph for $d=2$& \multicolumn{2}{c|}{Laman Graph}  & \multicolumn{2}{c|}{Laman Graph with one redundant edge}\\
\hline
 \end{tabular}
\end{table*}

\subsection{Joint position and clock estimation}
Joint position and clock estimation follows similarly to clock estimation above. Let $\hat{\sigma}=[\hat{p}^T,\hat{\varphi}^T]^T$ be an estimation of the true position-clock configuration $\sigma=[p^T,\varphi^T]^T$ and denote the estimation error as $e(\hat{\sigma})$ with elements $e_{ij}(\hat{\sigma})=\norm{\hat{p}_i-\hat{p}_j}^2-c^2(\hat{\alpha}_jT^j_{(i,j)}+\hat{\beta}_j-\hat{\alpha}_iT^i_{(i,j)}-\hat{\beta_i})^2$ for all $(v_i,v_j)\in\mathcal{E}_\mathcal{D}$. Following from the constraint in (\ref{constr_tof}), the estimation objective function is 
\begin{align}\label{obj_2}
  P_2(\hat{\sigma})=\frac{1}{4}\norm{e(\hat{\sigma})}^2=\frac{1}{4}\sum_{(v_i,v_j)\in\mathcal{E}_\mathcal{D}}e_{ij}(\hat{\sigma})^2.
\end{align}

\begin{figure*}[t]
\centerline{
\subfigure[Joint estimation: position]{\includegraphics[width=0.333\textwidth]{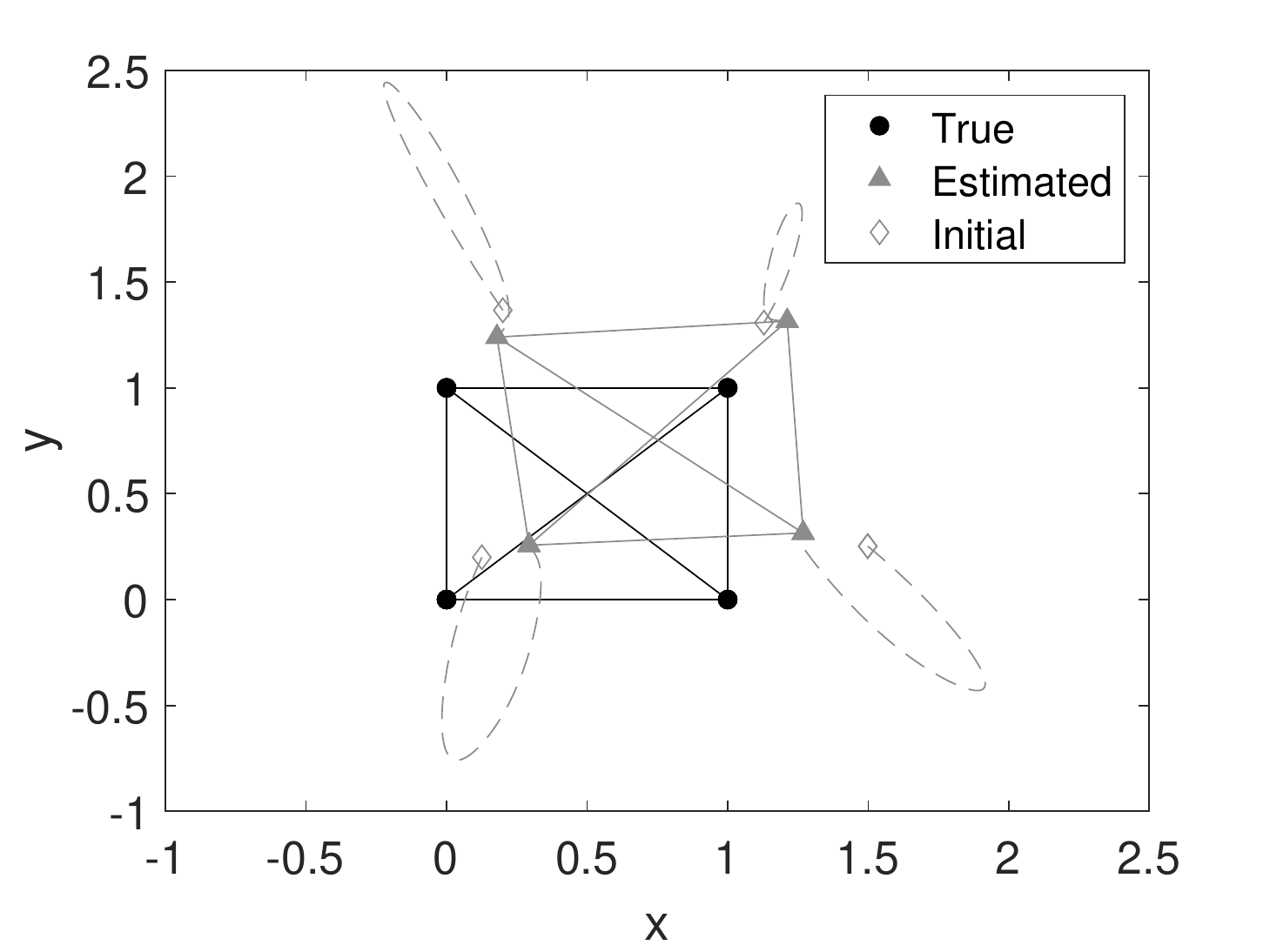}%
\label{Fig_joint_posest}}
\hfill
\subfigure[Joint estimation: clock]{\includegraphics[width=0.333\textwidth]{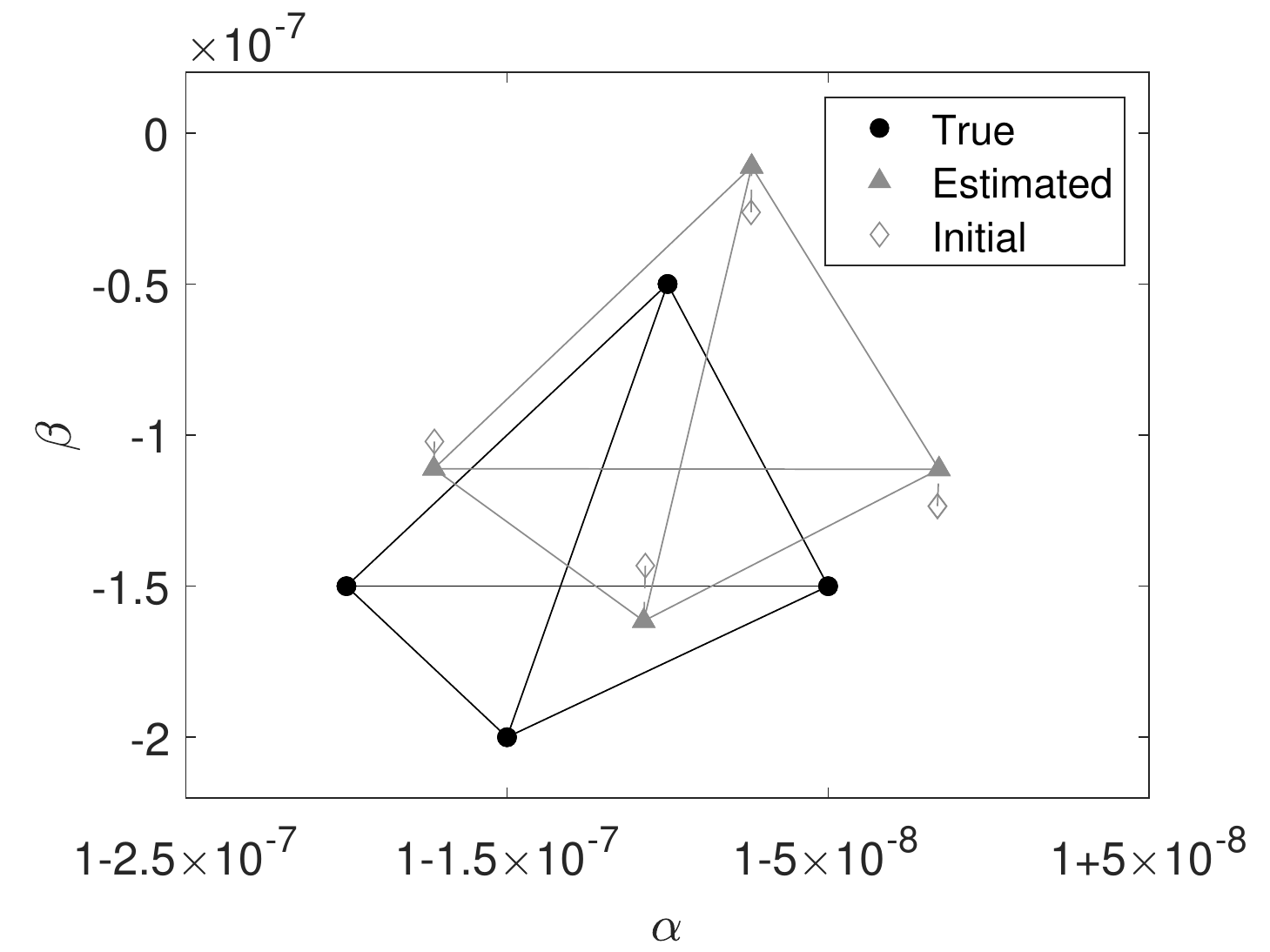}%
\label{Fig_joint_clockest}}
\hfill
\subfigure[Joint estimation error $\norm{e_{ij}}$ ]{\includegraphics[width=0.333\textwidth]{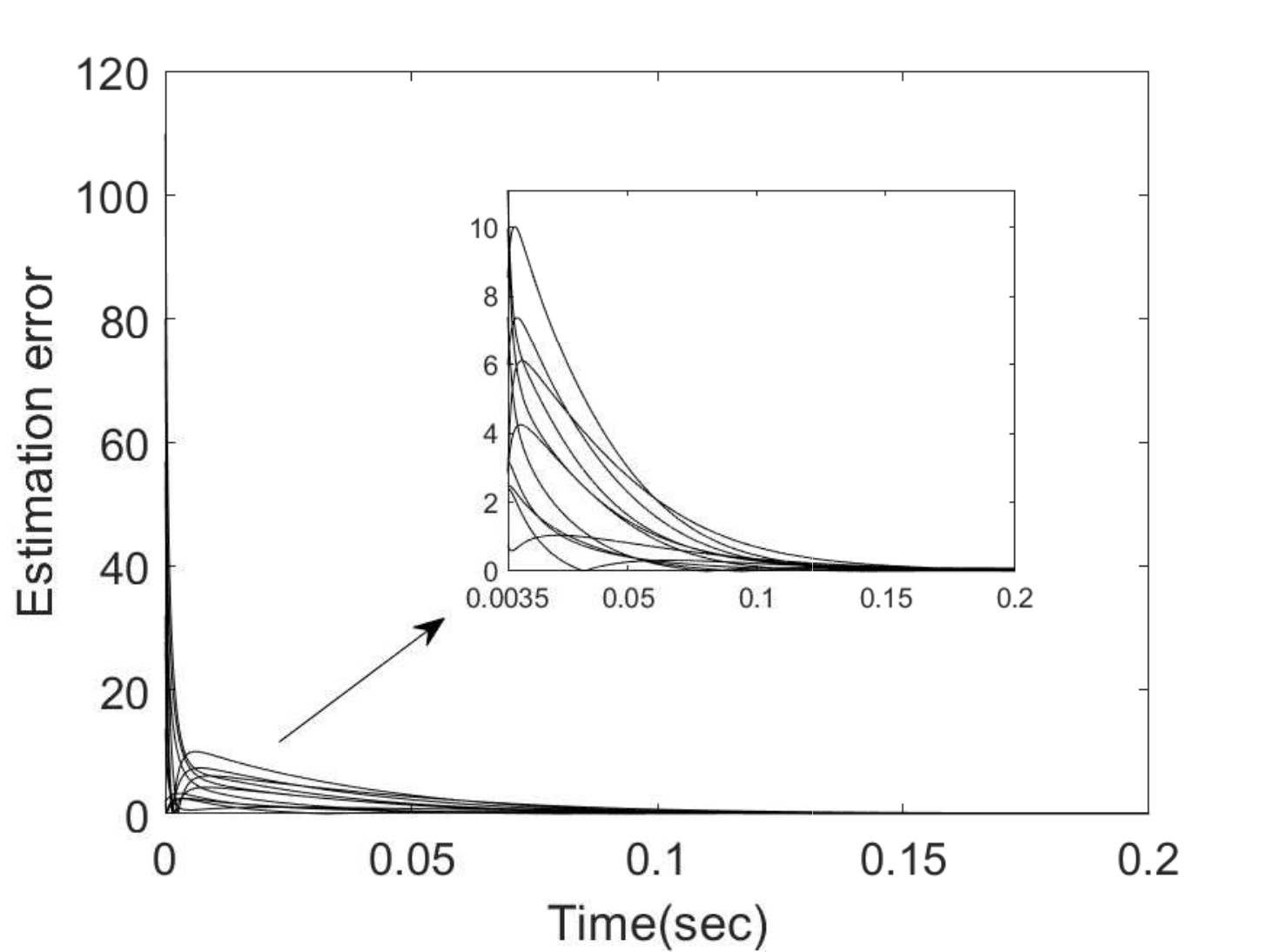}%
\label{Fig_joint_err}}
}
\caption{Simulation results of joint position and clock estimation. Subplot (a) and (b) show the initial configuration (diamond), the estimated configuration (triangle) and the true configuration (circle) of an infinitesimally position-clock rigid framework $(K_4,\sigma)$ in $\mathbb{R}^{2+2}$. Subplot (c) shows the behavior of the joint estimation error on each edge.}
\label{Fig_Jointest_sim}
\end{figure*}

The objective function is equal to zero if and only if $e_{ij}(\hat{\sigma})=0$ for all $(v_i,v_j)\in\mathcal{E}_\mathcal{D}$. The minimization of (\ref{obj_2}) can be obtained by the gradient descent method
\begin{align}\label{GD_2}
    \dot{\hat{\sigma}}=-K_g\frac{\partial P_2(\hat{\sigma}),\hat{\sigma})}{\partial \hat{\sigma}}=-K_gR_{cd}(\hat{\sigma})^Te(\hat{\sigma}),
\end{align}
where $K_g$ is a diagonal matrix whose diagonal entries are positive gains for the position variables and clock variables, respectively. The position configuration and clock configuration are usually at different measurement scales. In order to determine a suitable step size for the gradient descent method, we use a diagonal matrix $K_g$ to choose appropriate gains for position and clock terms.

If a clock framework $(\mathcal{D},\sigma)$ is infinitesimally joint rigid then in any sufficiently small neighborhood of the true position-clock configuration $\sigma$, the position-clock estimate $\hat{\sigma}$ converges to a set where $P_2(\hat{\sigma})=0$ following from LaSalle's invariance principle and the semidefiniteness of $R_{cd}(\hat{\sigma})^TR_{cd}(\hat{\sigma})$. The position-clock estimate $\hat{\varphi}$ must be a trivial variation of the true position-clock configuration (translation and rotation of position configuration $p$, translation of the clock offset configuration $\beta$ and a scaling of entire position-clock framework). Therefore, in $2$-dimensional space for example, the estimated $\hat{\sigma}=[\hat{p}^T,\hat{\varphi}^T]^T$ should reach a position-clock configuration such that
\begin{equation}\label{joint_est}
\begin{split}
\hat{p}&=k_sp+k_r(I_n\otimes J_2^1)p+\allone_n\otimes k_t\\
    \hat{\varphi}&=k_s\varphi+\allone_n\otimes[0,k_\beta]^T,
\end{split}
\end{equation}
where $k_s$ is the scaling factor, $k_r$ is the rotation factor, $k_t\in\mathbb{R}^2$ is the translation factor of $p$ and $k_\beta$ is the translation factor of $\beta$. 

Under Assumption 1, following from Theorem \ref{thm_joint2D_NS}, the estimated $\hat{\sigma}$ in $2$-dimensional space follows (\ref{joint_est}) if and only if the disoriented graph of $\mathcal{D}$ is generically distance rigid with at least one redundant edge. Simulation results are shown in Fig. \ref{Fig_Jointest_sim}. The position-clock framework $(\mathcal{D},\sigma)$ is infinitesimally joint rigid in $\mathbb{R}^{2+2}$, so given a random initial configuration in the sufficiently small neighborhood of true configuration, the estimated configuration is a trivial variation of the true configuration and estimation errors converge to zero. As can be seen in Fig. \ref{Fig_Jointest_sim}, the variation is a combination of translation and rotation of $p$, translation of $\beta$ and scaling of the entire framework, where the scaling is obvious for clock in Fig.\ref{Fig_Jointest_sim}(b) whereas not obvious for position in Fig.\ref{Fig_Jointest_sim}(a) due to the figure scale.

\section{Conclusion}
In this paper we proposed a clock rigidity theory for TOA-based UWB sensor network, showing that a clock framework with certain graph properties can be uniquely determined up to some trivial variations (a shift on clock offset and a skew on all clock parameters) given the TOA timestamp measurements. We also showed that a clock framework is infinitesimally clock rigid if and only if its underlying graph is generically bearing rigid in $\mathbb{R}^2$ with at least one redundant edge, providing a topological method for the analysis of clock rigidity.

Building on the proposed clock rigidity theory, we studied the joint position and clock estimation problem. We similarly proved that a position-clock framework with certain graph properties can be uniquely determined up to some trivial variations corresponding to both position and clock, i.e., a translation and a rotation of position, a shift of clock offset and a scaling of both position and clock. 

Clock estimation and joint position-clock estimation method has been proposed and validated in the simulations.

The estimation considered in this paper is anchor-free. To uniquely determine a position-clock framework without any trivial variation, we need to have sufficient knowledge of the absolute position and clock information of some sensors in the network. In the future work, we will formulate the joint estimation problem in the presence of anchors and investigate how to select necessary anchors based on joint rigidity theory. 

The gradient-descent method used in Section \ref{sec_estimation} forms a distributed estimator over the sensor network for both clock estimation and joint position-clock estimation, but it also leads to slow convergence. Faster distributed estimation methods for network localization and time synchronization is another meaningful direction to be studied in the future.

\bibliographystyle{IEEEtran}
\bibliography{IEEEabrv,Bibliography}

\end{document}